\newtheorem{theorem}{Theorem}
\newtheorem{lemma}[theorem]{Lemma}
\newtheorem{proposition}[theorem]{Proposition}
\newtheorem{corollary}{Corollary}
\newtheorem{assumption}{Assumption}
\theoremstyle{definition}
\newtheorem*{remark*}{Remark}
\pgfplotsset{compat=newest}
\pgfplotsset{plot coordinates/math parser=false}
\newlength\figureheight
\newlength\figurewidth
\pgfplotsset{compat=newest}
\pgfplotsset{plot coordinates/math parser=false}
\newcommand{\splev}{{K_0}}
\newcommand{\bb}{\mathbb}
\newcommand{\bs}{\boldsymbol}
\newcommand{\bc}[1]{\boldsymbol{\mathcal #1}}
\newcommand{\cl}{\mathcal}
\newcommand{\ol}{\overline}
\newcommand{\wt}{\widetilde}
\newcommand{\st}{%
  \ifmmode% math mode
  \ {\rm s.t.}\ %
  \else%
  \emph{s.t.}\@\xspace%
  \fi%
}
\newcommand{\whp}{\emph{w.h.p.}\xspace}
\newcommand{\ie}{\emph{i.e.}, }
\newcommand{\eg}{\emph{e.g.}, }
\newcommand{\etal}{\emph{et al.}}
\newcommand{\Rbb}{\bb{R}} % Real ensemble
\newcommand{\Cbb}{\bb{C}} % Complex ensemble
\DeclareMathOperator{\Id}{{\rm \bf Id}} % Identity
\newcommand{\tinv}[1]{{\textstyle\frac{1}{#1}}}
\newcommand{\im}{\mathrm{i}\mkern1mu} % Complex number i
\newcommand{\real}[1]{\Re{\left\{#1\right\}}} % Real part
\newcommand{\conj}[1]{#1^{*}} % Conjuguate transpose
\newcommand{\proba}[2][]{
  {
    \ifx&#1& \bb{P}\left[#2\right] 
    \else\bb{P} #1[#2 #1]\fi
  }
}
\newcommand{\iid}{%
  \ifmmode% math mode
  \mathrm{i.i.d.}%
  \else%
  i.i.d.\@\xspace%
  \fi%
}
\newcommand{\ud}{\mathrm{d}} % Infinitesimal part for integration
\newcommand{\supp}{{\rm supp}\,} % Support
\DeclareMathOperator{\tr}{tr} % Trace
\DeclareMathOperator{\diag}{diag}
\newcommand{\dmat}[1]{#1_\text{d}} % matrix to matrix diagonal operation
\newcommand{\transpose}[1]{#1^\top} % Transpose
\newcommand{\norm}[3][]{#1\lVert#2#1\rVert_{#3}} % p-norm
\newcommand{\scp}[3][]{#1\langle #2, #3 #1\rangle} % Scalar product
\newcommand{\fro}[3][]{#1\langle #2, #3 #1\rangle_{\text{F}}} % Frobenius inner product
\DeclareMathOperator*{\argmin}{arg\,min}
\newcommand{\intMa}[1][]{
   \ifthenelse{ \equal{#1}{} }{\bs{\cl I}}{\cl I_{#1}}
}
\newcommand{\intM}[3][]{ \intMa[#2] #1[#3 #1] } % interferometric matrix
\newcommand{\intMOm}[2][]{
   \ifthenelse{ \equal{#2}{} }{\bs{\cl I}_\Omega}{\bs{\cl I}_\Omega #1[ #2 #1]} % interferometric matrix
}
\newcommand{\intMOmT}[2][]{
   \ifthenelse{ \equal{#2}{} }{\bs{\cl I}_\Omega^*}{\bs{\cl I}_\Omega^* #1[ #2 #1]} % interferometric matrix
}
\newcommand{\dintMOm}[2][]{
   \ifthenelse{ \equal{#2}{} }{\tilde{\bs{\cl I}}_\Omega}{\tilde{\bs{\cl I}}_\Omega #1[ #2 #1]} % discrete interferometric matrix
}
\newcommand{\intCirc}[1][]{
   \ifthenelse{ \equal{#1}{} }{\bs{\cl J}}{\cl J_{#1}} % Circulant Fourier matrix J
}
\newcommand{\dintCirc}[1][]{
   \ifthenelse{ \equal{#1}{} }{\tilde{\bs{\cl J}}}{\tilde{\cl J}_{#1}} 
   % Block-Circulant Fourier matrix J~
}
\newcommand{\ropA}{\bs{\cl A}} % global rop operator
\newcommand{\ILEop}{\bs{\cl B}} % global ILE operator
\newcommand{\fvign}{f^{\circ}}
\newcommand{\ts}{\textstyle}
\renewcommand{\leq}{\leqslant}
\renewcommand{\geq}{\geqslant}
\newcommand{\mathleft}{\@fleqntrue\@mathmargin0pt}
\newcommand{\mathcenter}{\@fleqnfalse}
\definecolor{NavyBlue}{RGB}{47,113,194}
\definecolor{greenmat}{RGB}{0,136,43}
\definecolor{org}{RGB}{222,106,16}
\definecolor{greenmat2}{RGB}{0,166,43}
\definecolor{org2}{RGB}{240,100,16}
\definecolor{MH}{RGB}{160,160,0}
\definecolor{SS}{RGB}{200,60,120}
\newcommand{\replace}[2][\@empty]
{ \ifbool{replace}
  {
  #2
  }
  {
  \ifx\@empty#1\relax \textcolor{greenmat2}{[BY]: #2} 
  \else\textcolor{org2}{[REPLACE]: #1} \\ \textcolor{greenmat2}{[BY]: #2}\fi
  }
}
\newcommand{\br}[1]{\textcolor{red}{[\textbf{BR:#1}]}}
\renewcommand{\br}[1]{}
\title{Interferometric Lensless Imaging:\\[1mm]
  \LARGE Rank-one Projections of Image Frequencies with Speckle Illuminations
}
\author{Olivier Leblanc\IEEEauthorrefmark{1}, Matthias Hofer\IEEEauthorrefmark{2}, 
Siddharth Sivankutty\IEEEauthorrefmark{3}, Hervé 
Rigneault\IEEEauthorrefmark{2}, and Laurent Jacques\IEEEauthorrefmark{1}
\thanks{\IEEEauthorrefmark{1} E-mail: {\em \{o.leblanc, 
laurent.jacques\}@uclouvain.be}. ISPGroup, INMA/ICTEAM, UCLouvain, 
Louvain-la-Neuve, Belgium. OL is funded by Belgian National 
Science Foundation (F.R.S.-FNRS).}
\thanks{\IEEEauthorrefmark{2}Aix Marseille Univ, CNRS, Centrale Méditerranée, Institut Fresnel, Marseille, France. \IEEEauthorrefmark{3}Univ. Lille, CNRS, UMR8523-PhLAM-Physique des Lasers, Atomes et Molécules, F-59000 Lille, France.}}
\begin{document}
\maketitle
\begin{abstract}
Lensless illumination single-pixel imaging with a multicore fiber (MCF) is a computational imaging technique that enables potential endoscopic observations of biological samples at cellular scale. In this work, we show that this technique is tantamount to collecting multiple symmetric rank-one projections (SROP) of an \emph{interferometric} matrix---a matrix encoding the spectral content of the sample image. In this model, each SROP is induced by the complex \emph{sketching} vector shaping the incident light wavefront with a spatial light modulator (SLM), while the projected interferometric matrix collects up to $\cl O(Q^2)$ image frequencies for a $Q$-core MCF. While this scheme subsumes previous sensing modalities, such as raster scanning (RS) imaging with beamformed illumination, we demonstrate that collecting the measurements of $M$ random SLM configurations---and thus acquiring $M$ SROPs---allows us to estimate an image of interest if $M$ and $Q$ scale log-linearly with the image sparsity level.
This demonstration is achieved both theoretically, with a specific restricted isometry analysis of the sensing scheme, and with extensive Monte Carlo experiments. 
On a practical side, we perform a single calibration of the sensing system robust to certain deviations to the theoretical model and independent of the $\emph{sketching}$ vectors used during the imaging phase. Experimental results made on an actual MCF system demonstrate the effectiveness of this imaging procedure on a benchmark image.
\end{abstract}

\begin{IEEEkeywords}
    lensless imaging, 
    rank-one projections, 
    interferometric matrix, 
    inverse problem, 
    computational imaging,
    single-pixel
\end{IEEEkeywords}

\section{Introduction}
\label{sec:intro}
The advent of Computational Imaging (CI) can be traced back to the work of Ables~\cite{ables1968fourier} and Dicke~\cite{dicke1968scatter} on coded aperture for x-ray and gamma ray imagers. Since then, an ever-growing number of solutions have been devised to relax the constraints imposed by more traditional optical architectures (when these exist). Cheaper, lighter, and enabling larger imaging field-of-view (FOV), Lensless Imaging (LI), a subfield of CI, is convenient for medical applications such as microscopy~\cite{microscopy2} and \emph{in vivo} imaging~\cite{invivo1} where the extreme miniaturization of the imaging probe (with a diameter $\le$ 200 $\micro$m) offers a minimally invasive route to image at depths unreachable in microscopy~\cite{Boominathan2016}.
More recently, intensive research effort emerged for Lensless Endoscopy (LE) using multimode~\cite{Lochocki22, Psaltis2016} or MultiCore Fibers (MCF)~\cite{Sivankutty2016, Andresen2016, Sun22}, paving the way for deep biological tissues~\cite{Choi22} and brain imaging. 

In CI applications, a mathematical model describes the observations as a function of the object to be imaged. Two efficiency requirements are considered; \emph{(i)} the model, while physically reliable, must be computationally efficient to speed up the reconstruction algorithms; \emph{(ii)} the acquisition method must minimize the number of observations (also called \emph{sample complexity}) needed to accurately estimate the object.  In single-pixel MCF-LI (see Fig.~\ref{fig:LIMCF}), \emph{Speckle Imaging} (SI) consists in randomly shaping the wavefront of the light input to the cores entering the MCF to illuminate the entire object with a randomly distributed intensity. As the MCF features a double cladding that collects and brings back a fraction of the light re-emitted (either at other wavelengths by fluorescence or by simple reflection) towards a high sensitive (single-pixel) detector, the light integrated by this sensor represents a complete projection of the speckle on the object.  Compared to Raster Scanning (RS) the object with a translating focused (beamformed) spot~\cite{Sivankutty2016}, SI reduces the overall sample complexity needed to estimate a reliable image~\cite{Guerit2021}.

In this work, we improve the MCF-LI sensing model jointly on its reliability, computation and calibration. We achieve this by introducing light propagation physics in the forward model of MCF imaging, while keeping the low sample complexity enabled by SI. Inserting the physics yields a sensing model similar to radio-inteferometry applications~\cite{wiaux2009compressed}, where the interferences of the light emitted by the cores composing the MCF give specific access to the Fourier content of the object to be imaged. 
The sample complexity of the underlying model is analyzed both theoretically and experimentally. 

%%%%%%%%%%%%%%%%%%%%%%%%%%%%%%%%%%%%%%%%%%%%%%%%%%%%%%%%%%
\subsection{Related works} \label{sec:related}

In 2008, Duarte \etal~introduced single-pixel imaging~\cite{Duarte2008, Taylor22}, a subfield of lensless imaging (LI) where each collected observation is equivalent to randomly modulating an image before integrating its intensity. 
They demonstrated that reliable image estimation is possible at low sampling rates compared to image resolution by using compressive sensing. 
More recently, this principle has been integrated into the use of an MCF for both remote illumination and image collection. This technique allows for both deep and large FOV imaging~\cite{Sivankutty2016, Sivankutty2018,Guerit2021}. 
Subsequent works have shown that de-structured speckle-based illuminations can replace structured or beamformed illuminations effectively~\cite{Caravaca-Aguirre2018a, Guerit2021}.

MCF-LI bears similarities with quadratic measurement models such as \emph{phase retrieval} (PR)~\cite{Fienup82, Bauschke02} whose sensing is often recast as SROPs of the \emph{lifted} matrix $\bs x \bs x^*$ of the (vectorized) image $\bs x$. Theoretical guarantees on the recovery of low-complexity matrices (\eg sparse, circulant, low-rank) from random ROPs have been extensively studied in the last decade~\cite{chen2015exact, cai2015rop,Soltani2017}. 
Our sensing model computes SROPs of an \emph{interferometric} matrix built from spatial frequencies of the image. This shares similarities with \emph{random partial Fourier sensing} in compressive sensing (CS) theory~\cite{Candes2006a, foucart2017mathematical}. 
Speciﬁcally, the spatial frequencies in this matrix correspond to the difference of the MCF cores locations. This arises in radio-interferometric astronomy applications where, as induced by the van Cittert-Zernike theorem, the signal correlation of two antennas gives the Fourier content on a frequency vector (or \emph{visibility}) related to the baseline vector of the antenna pair~\cite{wiaux2009compressed}. One may recognize in~\cite[Sec. 4.1.]{Veen2019} the RS mode described in Sec.~\ref{sec:prev-mcf-modes}. However, in these works, the presence of an interferometric matrix (see \eg~\cite[Eq.~(15)]{Veen2019}) is often implicit, since, conversely to our scheme, no linear combinations of these visibilities are computed. 

The 8-step \emph{phase-shifting interferometry}~\cite{Cai04} calibration technique (see Sec.~\ref{sec:calib}) is used in, \eg astronomical imaging~\cite{Rabien06}, and microscopy~\cite{Mann22}. The estimated complex wavefields implicitly encode transmission matrix of the MCF (see~\cite{Sivankutty18bis}) and also embed some unpredictable imperfections in the MCF configuration.
Compared to previous work~\cite{Guerit2021} where each speckle generated by a random SLM configuration had to be \emph{a priori} recorded, this calibration is made only once before any acquisition. 

%%%%%%%%%%%%%%%%%%%%%%%%%%%%%%%%%%%%%%%%%%%%%%%%%%%%%%%%%%
\subsection{Contributions} \label{sec:contributions}

We provide several contributions to the modeling, understanding and efficiency of MCF-LI imaging.

\noindent
\textbf{Interferometric analysis of MCF imaging}:  Incorporating the physics of wave propagation, we leverage a speckle illumination model to highlight the interferometric nature of the MCF device. The resulting approach, which is new compared to~\cite{Guerit2021}, corresponds to applying symmetric rank-one projections, or SROP\footnote{The ROP terminology was introduced when~\cite{cai2015rop} extended phase retrieval applications~\cite{chen2015exact, Cand2011} to the recovery of a low---(but not necessarily 
  one)---rank matrix via rank-one projections.}, to an interferometric matrix encoding the spectral content of the image. The SROP being controlled by the SLM, we can model the speckles in the sample plane from the random complex amplitude of each core, without assuming Gaussian distributed speckle illuminations~\cite{Guerit2021}. 

\noindent \textbf{Rank-one projected partial Fourier sampling}: Following CS theory, we study a novel sensing model, the rank-one projected partial Fourier sampling of a sparse image (see also Fig.~\ref{fig:comput}), providing a simplified framework for MCF-LI. The theoretical analysis provided in Sec.~\ref{sec:image-reconstruction} has thus an independent interest, the combination of SROPs with a partial Fourier sampling scheme having not been considered previously in the literature. Specifically, from a set of simplifying assumptions, we show that, with high probability (\whp), one can robustly estimate a $K$-sparse image provided that the number of SROPs $M$ and the number of core pairs $Q(Q-1)$ are large compared to $\cl O(K)$ (up to log factors). Our analysis relies on showing that, \whp, the sensing operator satisfies (a variant of) the restricted isometry property which enables us to estimate a sparse image with (a variant of) the basis pursuit denoise program. The number of measurements $M$ is thus reduced compared to recovering first the interferometric matrix with $\cl O(KQ)$ measurements, before estimating the object from this matrix (see Sec.~\ref{sec:intefero-reconstruction}). 

\noindent \textbf{Efficient SROP debiasing}: As explained in~\cite{chen2015exact}, SROP sensing must be \emph{debiased} to allow for signal estimation. This is usually done by doubling the number SROP measurements and computing non-adjacent consecutive SROP differences. By considering \emph{sketching} vectors with unit modulus (but random phases), we propose a more efficient debiasing that simply centers the measurements without doubling their number; a definite advantage when recording experimental measurements.

\noindent \textbf{Simplified calibration}: In Sec.~\ref{sec: exp}, we propose a single-step calibration procedure encompassing most sensing imperfections in a real setup, at the exception of intercore interferences. This calibration, which only requires registering the cores locations and imaging depth, enables us to predict speckle illumination from the programmed SLM configuration (reaching $97\%$ of normalized cross-correlation). Compared to~\cite{Guerit2021} that needed to prerecord the $M$ generated speckle patterns for imaging, we need $\cl O(Q)$ observations for a $Q$ core MCF to accurately model of these $M$ patterns.

%%%%%%%%%%%%%%%%%%%%%%%%%%%%%%%%%%%%%%%%%%%%%%%%%%%%%%%%%%
\textbf{Notations and conventions:} Light symbols denote scalars (or scalar functions), and bold symbols refer to vectors and matrices (\eg $\eta \in \Rbb$, $g \in L_2(\Rbb)$, $\bs f \in \Rbb^N$, $\bs G \in \Cbb^{N 
\times N}$). We write $\im=\sqrt{-1}$; the identity operator (or $n \times n$ matrix) is $\Id$ (resp. $\Id_n$); the set of $Q \times Q$ Hermitian matrices in $\bb C^{Q \times Q}$ is denoted by $\cl H^Q$; the set of index components is $[M] := \{1,\ldots,M\}$; $\{s_q\}_{q=1}^Q$ is the set $\{s_1, \ldots, s_Q\}$, and $(a_q)_{q=1}^Q$ the vector $(a_1,\ldots,a_Q)^\top$. The notations $\transpose{\cdot}$, 
$\conj{\cdot}$, 
$\tr$,  
$\scp{\cdot}{\cdot}$, 
correspond to the transpose, conjuguate transpose, trace,  
and inner product. 
The $p$-norm (or $\ell_p$-norm) is $\norm{\bs x}{p}:=  (\sum_{i=1}^n |x_i|^p)^{1/p}$ for $\bs x \in \bb C^n$ and $p\geq 1$, with ${\|\cdot\|}={\|\cdot\|_2}$, and $\norm{\bs x}{\infty} := \max_i |x_i|$. Given $\bs A \in \bb C^{n\times n}$, $\bs a \in \bb C^n$ and $\cl S \subset [n]$, the matrix $\bs A_{\cl S}$ is made of the columns of $\bs A$ indexed in $\cl S$, the operator $\diag(\bs A) \in \bb C^n$ extracts the diagonal of $\bs A$,  $\diag(\bs a) \in \bb C^{n\times n}$ is the diagonal matrix such that $\diag(\bs a)_{ii} = a_i$, $\dmat{\bs A}=\diag(\diag(\bs A))$ zeros out all off-diagonal entries of $\bs A$, and $\|\bs A\|$ and $\|\bs A\|_*$ are the operator and nuclear norms of $\bs A$, respectively.  The direct and inverse continuous Fourier transforms in $d$ dimensions (with $d \in \{1,2\}$) are defined by $\hat g(\bs \chi) := \cl F[g](\bs \chi) := \int_{\bb R^d} g(\bs s) e^{-\im 2\pi \bs \chi^\top \bs s} \ud \bs s$, with $g: \bb R^d \to \bb C^d$, $\bs\chi \in \bb R^d$, and $g[\bs s] = \cl F^{-1}[\hat g](\bs s) = 
\int_{\bb R^d} \hat{g}(\bs \chi) e^{ \im 2\pi \bs \chi^\top \bs s} \ud \bs \chi$, with the scalar product $\bs \chi^\top \bs s$ reducing to $k s$ in one dimension. 

%%%%%%%%%%%%%%%%%%%%%%%%%%%%%%%%%%%%%%%%%%%%%%%%%%%%%%%%%
\section{MCF lensless imaging} \label{sec: model}
%%%%%%%%%%%%%%%%%%%%%%%%%%%%%%%%%%%%%%%%%%%%%%%%%%%%%%%%%

We here develop the sensing model associated with an MCF lensless imager (MCF-LI) in the same speckle imaging conditions as provided in~\cite{Guerit2021}, \ie interfering coherent light components output by the cores of the MCF with random relative delays. As illustrated in Fig.~\ref{fig:LIMCF}(top), an MCF-LI consists of four main parts: a wavefront shaper (SLM), optics, an MCF and a single photo-detector. The SLM shapes the phase of the light that is injected into the cores. The optics include mirrors and lenses used to focus the light into the center of each core, hence preventing multimodal effects. 

As explained below, under a common far-field assumption, MCF-LI can be described as a two-component sensing system applying SROP of a specific interferometric matrix. We show how this model subsumes previous descriptions of the MCF-LI, and end this section highlighting that the SROP and interferometric nature of the model hold beyond the far-field assumptions.  

%%%%%%%%%%%%%%%%%%%%%%%%%%%%%%%%%
\subsection{Sensing model description}
\label{sec:model-descr}

An MCF with diameter $D$ contains $Q$ fiber cores with the same diameter $d<D$ (see Fig.~\ref{fig:LIMCF}(c)). Our goal is to observe an object (or sample) which, for simplicity, is planar and defined in a plane $\cl Z$. This plane is parallel to the plane $\cl Z_0$ containing the \emph{distal end} of the MCF, and at distance $z$ from it. For convenience, we assume that the origins of $\cl Z_0$ and $\cl Z$ are aligned, \ie they only differ by a translation in the plane normal direction. In $\cl Z_0$, the $Q$ cores locations are encoded in the set $\Omega := \{\bs p_q\}_{q=1}^Q \subset \Rbb^2$.

As illustrated in Fig.~\ref{fig:LIMCF} (and detailed in Sec.~\ref{sec:setup} and~\cite{Sivankutty2016}), in MCF-LI the laser light wavefront entering the MCF is shaped with a spatial light modulator (SLM) so that both the light intensity and phase can be individually adjusted for each core at the MCF distal end. Mathematically, assuming a perfectly calibrated system, this amounts to setting the $Q$ complex amplitudes $\bs \alpha = (\alpha_1, \ldots, \alpha_Q)^\top \in \Cbb^Q$, coined \emph{sketching vector}, of the electromagnetic field at each fiber core $\bs p_q$ with $q \in [Q]$.

Under the far-field approximation, that is if $z \gg D^2/\lambda$ with $\lambda$ the laser wavelength, the illumination intensity $S(\bs x; \bs \alpha)$ produced by the MCF on a point $\bs x \in \bb R^2$ of the plane $\cl Z$ reads~\cite{Guerit2021} 
\begin{equation} \label{eq:speckle}
    \ts S(\bs x; \bs \alpha) \approx  w(\bs x)\, 
    \big|\!\sum_{q=1}^Q \alpha_q e^{\frac{\im 2\pi}{\lambda z} 
      \bs p_q^\top \bs x}\big|^2,\ w(\bs x) := \tfrac{| \hat{E}_0( \tfrac{\bs x}{\lambda z})|}{(\lambda z)^2}. 
\end{equation}
The window $w(\bs x)$, which relates to the output wavefield $E_0$ of one single core in plane $\cl Z_0$, is a smooth vignetting function defining the imaging field-of-view. Assuming $E_0$ shaped as a Gaussian kernel of width $d$, the FOV width scales like $\frac{\lambda z}{d}$.

\begin{figure}[t]
  \centering
  \subfloat[][]{\includegraphics[width=.65\linewidth]{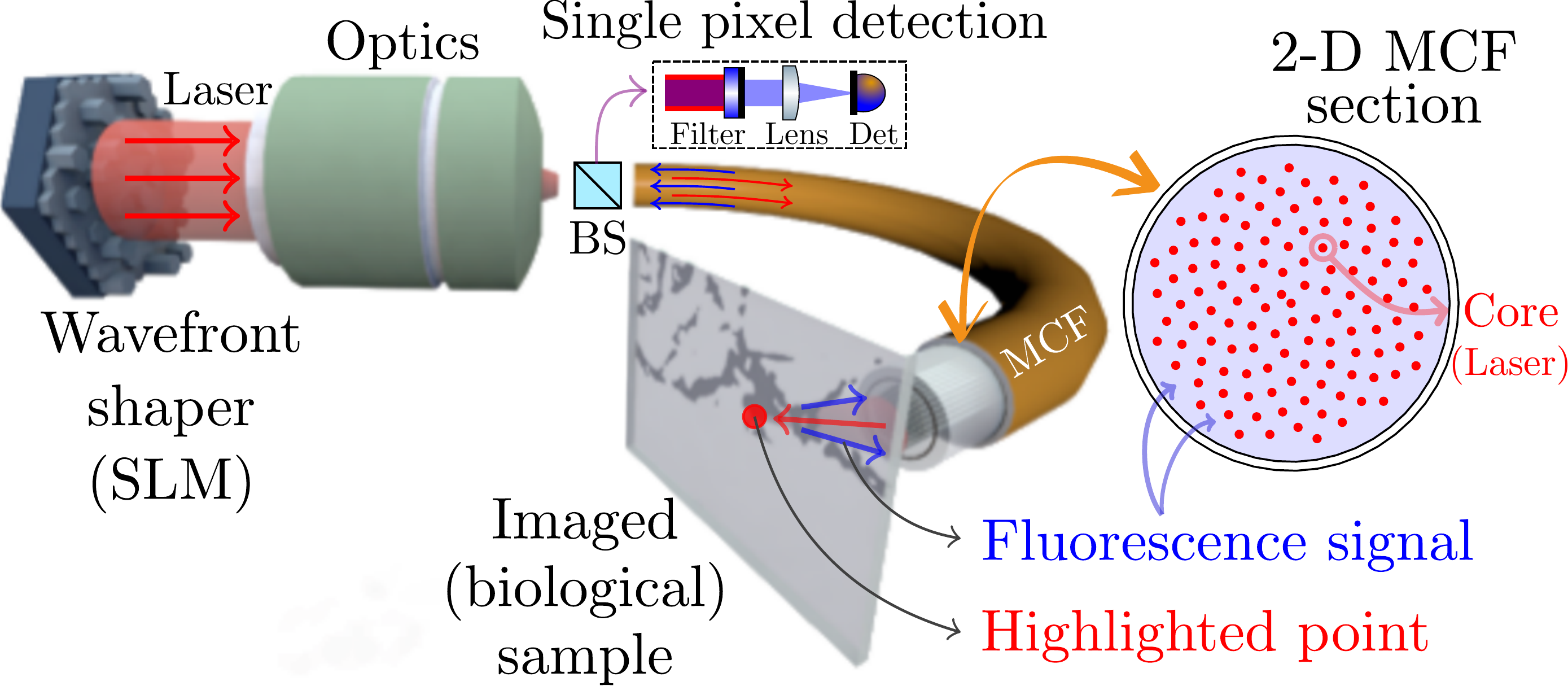}\label{fig:LIMCF-a}}
  \subfloat[][]{\includegraphics[width=.3\linewidth]{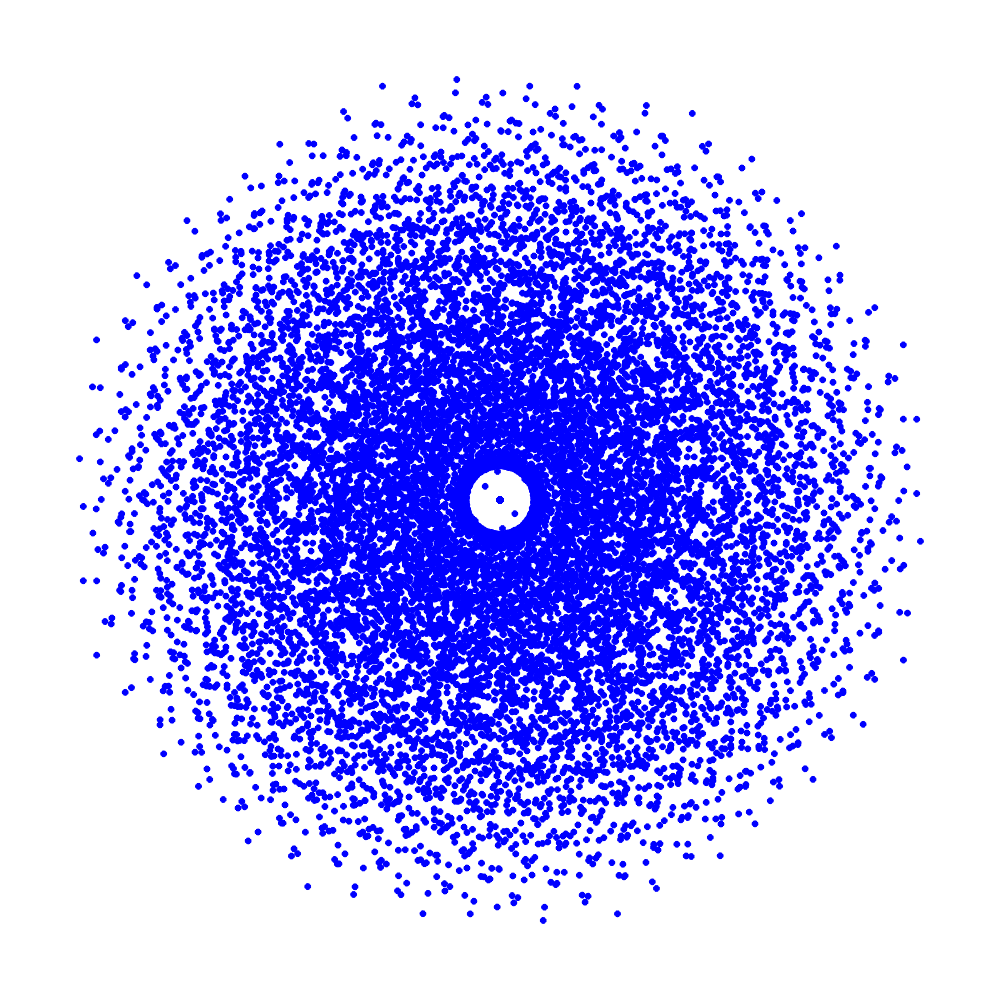}\label{fig:LIMCF-b}}
  \\[5mm]
   \subfloat[][]{\includegraphics[width=.95\linewidth]{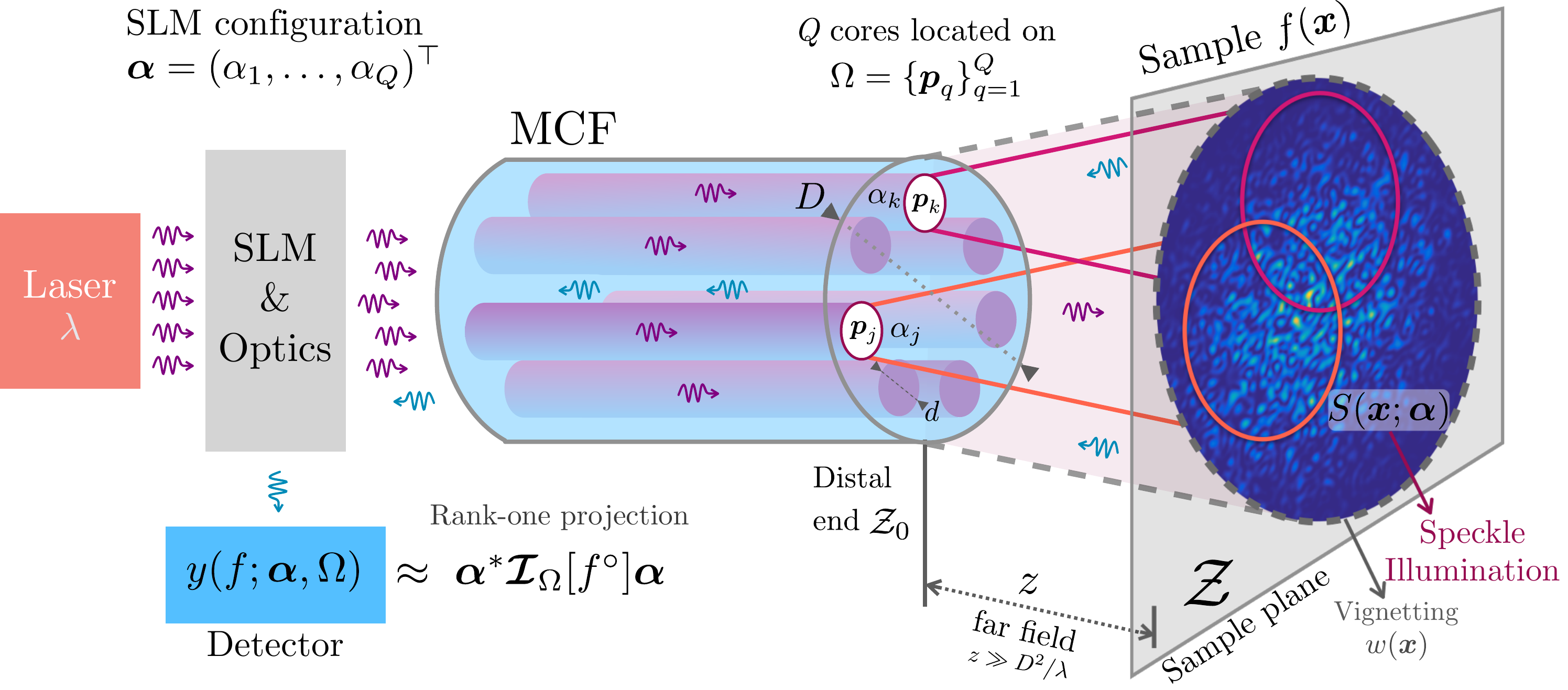}\label{fig:LIMCF-c}}
  \caption{(a) Working principle of MCF-LI with cores arranged 
  in Fermat's golden spiral when the SLM is programmed in raster 
  scanning mode (BS = Beam Splitter). (b) Fourier sampling $\cl V$ corresponding to the core arrangement in (a). (c) Interferometric LI and its link with SROP of the 
  interferometric matrix.}
  \label{fig:LIMCF}
\end{figure}

The sensing model of MCF-LI is established by the following key element: in its endoscope configuration, the sample is observed from the light it re-emits (by fluorescence) from its illumination by $S$, and for each configuration of $S$ a single pixel detector measures the fraction of that light that propagates backward in the MCF (see Fig.~\ref{fig:LIMCF}(a)). Therefore, given the sample fluorophore density map $f(\bs x)$, assuming a short time exposure and low intensity illumination, fluorescence theory tells us that the number of collected photons $y \in \Rbb_+$ follows a Poisson distribution $\cl P(\bar y)$ with average intensity~\cite{Guerit2021}
\begin{align} 
  \ts \bar y&\ts = c \, \scp{S(\cdot, \bs \alpha)}{f} = c \int_{\bb R^2} S(\bs x; \bs \alpha) f(\bs x) \, \ud \bs x \label{eq:speck_proj} \\
&\ts = c \sum_{j,k=1}^Q \alpha_j^* \alpha_k \ \int_{\bb R^2} e^{\frac{\im2\pi}{\lambda z} (\bs p_k - \bs p_j)^\top \bs x}\, \fvign(\bs x) \, \ud \bs x, \notag
\end{align}
where $0<c<1$ represents the fraction of light collected by the pixel detector, and $\fvign := w f$ is the vignetted image, \ie the restriction of $f$ to the domain of the vignetting $w$.

Therefore, assuming $c=1$ for simplicity, if one collects observations $\bs y=(y_1,\ldots, y_M)^\top$, such that $y_m = y(f; \bs \alpha_m, \Omega)$ with distinct vectors $\bs \alpha_m$ ($m \in [M]$), we can compactly write 
\begin{equation} \label{eq:single-ROP-LE}
  \bar y_m = \bs \alpha_m^* \, \intMOm{\fvign} \, \bs \alpha_m = \scp[\big]{\bs \alpha_m\bs \alpha_m^*}{\intMOm{\fvign}}_F,
\end{equation}
where $\scp{\bs A}{\bs B}_F = \tr \bs A^* \bs B$ is the matrix (Frobenius) scalar product between two matrices $\bs A$ and $\bs B$. This amounts to collecting $M$ \emph{sketches} of the Hermitian \emph{interferometric} matrix $\intMOm{\fvign} \in \cl H^Q$, with entries defined by
\begin{equation} \label{eq:intmat}
  \ts  (\intMOm{g})_{jk} := \hat g[\frac{\bs p_j - \bs p_k}{\lambda z}] = \int_{\bb R^2} e^{\frac{\im 2\pi}{\lambda z} 
    (\bs p_k - \bs p_j)^\top \bs x} g(\bs x) \ud \bs x,
\end{equation}
for any function $g: \bb R^2 \to \Rbb$. Under a high photon counting regime, and gathering all possible noise sources in a single additive, zero-mean noise $\bs n$, the measurement model reads 
\begin{equation} \label{eq:ROP-LE}
    \bs y = \ropA \circ \intMOm{\fvign}\ +\ \bs n,
\end{equation}
where the \emph{sketching operator} $\ropA$ defines $M$ SROP~\cite{chen2015exact,cai2015rop} of any Hermitian matrix $\bs H$ with
\begin{equation} \label{eq:SROP_op}
   \ropA(\bs H)  := (\fro{\bs \alpha_m \bs \alpha_m^*}{\bs H} )_{m=1}^M.
\end{equation}
The resulting sensing model is summarized in~Fig.~\ref{fig:comput}.

\begin{figure}[t]  
  \centering
  \includegraphics[width=\linewidth]{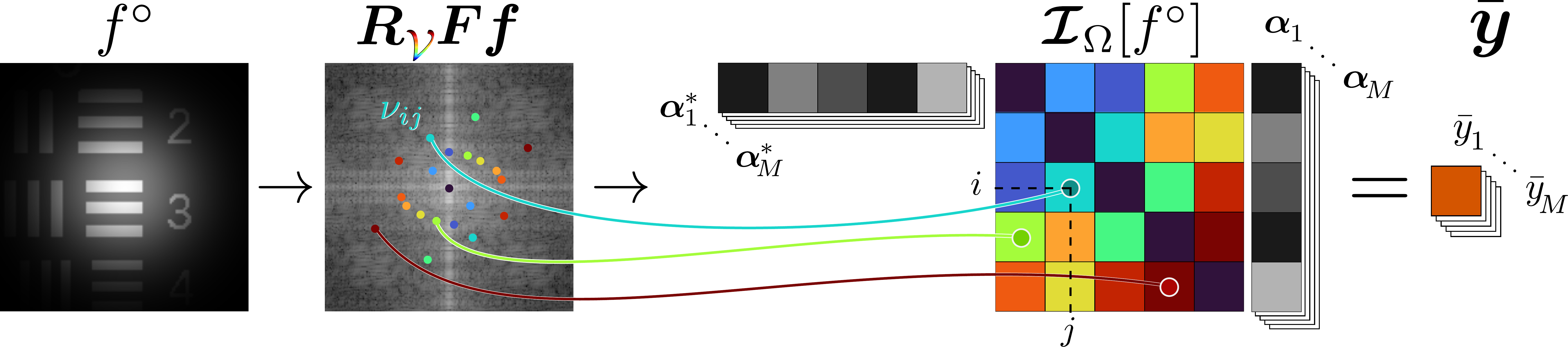}
  \caption{Representation of the sensing model (\ref{eq:single-ROP-LE}). The Fourier transform of the vignetted signal $f^\circ := wf$ is first sampled on the frequencies of the difference set $\cl V := \tinv{\lambda z} (\Omega-\Omega)$. This Fourier sampling is illustrated by restricting the FFT $\bs F$ of the discretized vignetted signal $\bs f$ with $\bs R_{\cl V}$. Next, these samples are shaped into an (hermitian) interferometric matrix $\intMOm{f^\circ} \in \cl H^Q$. Finally,  $M$ SROPs of this matrix are collected from $\bar{\bs y} = (\bar y_m)_{m=1}^M := (\bs\alpha_m^* \intMOm{f^\circ}\bs\alpha_m)_{m=1}^M $. }
  \label{fig:comput}
\end{figure}

From~(\ref{eq:ROP-LE}), MCF-LI corresponds to an \emph{interferometric} system that is linear in $\fvign$. Eq.~\eqref{eq:intmat} and~\eqref{eq:SROP_op} show that it is indeed tantamount to first sampling the 2-D Fourier transform of $\fvign$ over frequencies selected in the difference multiset\footnote{The elements of a multiset are not necessarily unique.}, or \emph{visibilities}, 
\begin{equation}
  \label{eq:visibilities-def}
\cl V :=  \tinv{\lambda z}(\Omega - \Omega) = \{ \bs \nu_{jk} := \tfrac{\bs p_j - \bs p_k}{\lambda z} \}_{j,k=1}^Q,
\end{equation}
\ie $(\intMOm{\fvign})_{jk} = \cl F[\fvign](\bs \nu_{jk})$, and next performing $M$ SROP of $\intMOm{\fvign}$ with the rank-one matrices $\bs \alpha_m \bs \alpha_m^*$ as determined by~$\ropA$. 

Interestingly, the model~\eqref{eq:ROP-LE} shows that we cannot access more information about $\fvign$ than what is encoded in the frequencies of~$\cl V$. Moreover, this sensing reminds the model of radio-interferometry by aperture synthesis~\cite{wiaux2009compressed}---each fiber core plays somehow the role of a radio telescope and each entry of $(\intMOm{\fvign})_{jk}$ probing the frequency content of $\fvign$ on the \emph{visibility}\footnote{The word ``visibility'' being actually borrowed from this context.}~$\bs \nu_{jk}$.
 
Assuming we collect enough $M$ SROP observations, we can potentially estimate the interferometric matrix $\intMOm{\fvign}$, which in turn allows us to estimate $\fvign$ if $\cl V$ (with $|\cl V| \leq Q(Q-1)/2$) is dense enough. Actually, the Fermat's golden spiral distribution $\Omega$ of the cores depicted in Fig.~\ref{fig:LIMCF}(a)---initially studied for its beam forming performances in MCF-LI by raster scanning~\cite{Sivankutty2016} (see below)---displays good properties, as shown in Fig.~\ref{fig:LIMCF}(b). For this arrangement, conversely to regular lattice configurations, all (off-diagonal) visibilities are unique, \ie $|\cl V| = Q(Q-1)/2$.

%%%%%%%%%%%%%%%%%%%%%%%%%%%%%%%%%%%%% 
\subsection{Connection to known MCF-LI modes} \label{sec:prev-mcf-modes}

The MCF-LI model subsumes the Raster Scanning (RS) and the speckle illumination (SI) modes introduced in~\cite{Sivankutty2016,Guerit2021}.
%%%%%
\paragraph{Raster scanning mode} In the RS mode, the light wavefront is shaped (or \emph{beamformed}) with the SLM to focus the illumination pattern on the sample plane, while galvanometric mirrors translate the focused beam by phase shifting, hence ensuring the final imaging of the sample by raster scanning the sample and collecting light at each beamed position. A beamformed illumination is equivalent to set $\bs \alpha = \bs 1 = (1, \ldots, 1)^\top$ in~(\ref{eq:speckle}).

\noindent In this case, the illumination intensity $S$ corresponds to
\begin{equation}
  \label{eq:beam-formed-PSF}
    \ts S(\bs x; \bs 1) \approx  w(\bs x)\, 
    \big|\!\sum_{q=1}^Q e^{\frac{\im 2\pi}{\lambda z} 
    \bs p_q^\top \bs x}\big|^2 = w(\bs x) \big|\cl F[\phi_\Omega(\frac{\bs x}{\lambda z})]\big|^2,
\end{equation}
where $\phi_\Omega$ is the \emph{array factor} of the core arrangement $\Omega$, with, for any finite set $\cl S \subset \bb R^2$,
$\ts \phi_{\cl S}(\bs p) := \sum_{\bs p' \in \cl S}\delta(\bs p - \bs p')$. Expanding~(\ref{eq:beam-formed-PSF}), we also note that $|\cl F[\phi_\Omega(\frac{\bs x}{\lambda z})]\big|^2 = \cl F[\phi_{\cl V}](\bs x)$. 

Arranging the core locations as a discretized Fermat's spiral was shown to focus the beam intensity on a narrow spot whose width scales like $\frac{\lambda z}{D}$~\cite{Sivankutty2016}. This is induced by the constructive interferences in~\eqref{eq:beam-formed-PSF} around ${\bs x \approx \bs 0}$---other locations being associated with almost destructive interferences. 

The two galvanometric mirrors adapt the light optical path of the beam according to a tilt vector $\bs \theta \in \bb R^2$~\cite{Guerit2021}, \ie $\bs \alpha$ is set to $\bs \gamma_{\bs \theta} := \big(\exp(- \frac{\im 2\pi}{\lambda z} \bs \theta^\top \bs p_q)\big)_{q=1}^Q$ and~(\ref{eq:speckle}) provides
\begin{equation*} \label{eq:transl-beam-formed}
  \ts S(\bs x; \bs \gamma_{\bs \theta}) \approx w(\bs x) \cl T_{\bs \theta} \varphi(\bs x),\ \cl T_{\bs \theta} \varphi(\bs x) := \varphi(\bs x - \bs \theta),
\end{equation*}
\ie $\varphi := \cl F[\phi_{\cl V}](\bs x)$ is translated by $\bs \theta$. We can also write, from the symmetry of $\varphi$, 
\begin{equation} \label{eq:conv-model-rs}
  \ts \bar y_{\bs \theta}  = \scp{S(\cdot, \bs \gamma_{\bs \theta})}{f} =\scp{\cl T_{\bs \theta} \varphi}{\fvign} = (\varphi \ast \fvign)(\bs \theta),
\end{equation}
with $\ast$ the 2-D convolution. Therefore, by defining a raster scanning path $\Theta \subset \bb R^2$ for $\bs\theta$ sequentially visiting all positions in a given 2-D domain within a certain resolution, we see that by collecting all RS observations we image a blurred version (by $\varphi$) of $\fvign$ over $\Theta$. The RS mode is thus characterized by the sketching vectors $\bs \alpha \in \{\bs \gamma_{\bs \theta}: \bs \theta \in \Theta \}$.

Moreover, by considering the model~(\ref{eq:single-ROP-LE}) and the multiset $\cl V_0 :=  \{\bs \nu_{jk}: j,k \in [Q], j\neq k\}$ that removes the $Q$ occurrences of the zero frequency from~$\cl V$, for $\bs \theta = \bs 0$, 
$$ 
\ts \bar y_{\bs 0} = \bs 1^\top \intMOm{}[\fvign] \bs 1 = \sum_{\bs \nu \in \cl V} \hat \fvign [\bs \nu] = Q \hat\fvign[\bs 0] + \sum_{\bs \nu \in \cl V_0} \hat \fvign [\bs \nu].
$$
This shows that $\bar y_{\bs 0}$ probes the content of $\fvign$ around the origin if the multiset $\cl V_0$ is dense enough over the support of $\hat \fvign$ with distinct frequencies; in this case $y_{\bs 0} - Q \hat\fvign[\bs 0] = \sum_{\bs \nu \in \cl V_0} \hat \fvign [\bs \nu] \approx c \fvign(\bs 0)$, for some $c>0$. In this context, the narrowness of the focus relates to the density of $\cl V_0$. Moreover,~(\ref{eq:single-ROP-LE}) and~(\ref{eq:intmat}) provide
$$ 
\ts \bar y_{\bs \theta} = \bs \gamma_{\bs \theta}^* \intMOm{}[\fvign] \bs \gamma_{\bs \theta} = \bs 1^\top \intMOm{}[\cl T_{-\bs \theta} \fvign] \bs 1
$$
for any tilt $\bs \theta$, meeting the convolution interpretation in~(\ref{eq:conv-model-rs}).

Despite its conceptual simplicity, the RS mode has a few drawbacks~\cite{Guerit2021}: \emph{(i)} it requires as many illuminations as the target image resolution; \emph{(ii)} due to limited MCF diameter and the chosen core arrangement, the related convolution kernel $\varphi$ is actually spatially varying, which limits the validity of~(\ref{eq:conv-model-rs}).

%%%%%%%%%%%%%%%
\paragraph{Speckle Illumination mode}
\label{sec:speckle-illum-mode}

In the SI mode, the sample $f$ is illuminated with random light patterns called \emph{speckles}. These are generated with random core complex amplitudes $\bs \alpha$. Conversely to the RS mode, by recording all speckles illuminations at calibration, SI does not require to know the MCF transfer matrix.

One can interpret SI as a compressive imaging system~\cite{Candes2006, Candes2008, Jacques2010a}. By considering that both $f$ and each illumination intensity $S(\bs x;\bs\alpha)$ are discretized and vectorized as $\bs f \in \Rbb^N$ and $\bs s \in \Rbb^N$, respectively, and gathering in a matrix $\bs S := (\bs s_1,\, \ldots,\bs s_M) \in \Rbb^{N\times M}$ the $M$ discretized speckles obtained from the sketching vectors $\{\bs \alpha_m\}_{m=1}^M$, the model~(\ref{eq:ROP-LE}) becomes 
\begin{equation}
  \bar y_m = \bs s_m^\top \bs f,\ m \in [M],\ \text{or}\ \bs{\bar y} = \bs S^\top \bs f \in \bb R^M_+. \label{eq:SI-raw-sensing-model}  
\end{equation}
If $M$ is adjusted to the sparsity level of $\bs f$ (with $M < N$), the recovery of $\bs f$ from $\bs y$ becomes a classical compressive sensing (CS) problem with the sensing matrix $\bs S$. 

To characterize the properties of the sensing model~\eqref{eq:SI-raw-sensing-model} in this CS framework, the authors in~\cite{Guerit2021} propose to first to center (or \emph{debiase}) this model by computing $\bs y^{\rm c} = \bs y - y^{\rm a} \bs 1_M$ with the measurement average $y^{\rm a}:=\tinv M \sum_{j=1}^M y_j$ (we reinterpret this operation in Sec.~\ref{sec:debiasing}). This provides, from~\eqref{eq:SI-raw-sensing-model}, 
the model
\begin{equation}
  \label{eq:guerit1}
  \bs y^{\rm c} = \sqrt M \bs \Phi \bs{\bar S f} + \bs n^{\rm c},
\end{equation}
with a centered noise $\bs n^{\rm c}:=\bs n- (\tinv M \sum_{j=1}^M n_j)\bs 1_M$,  $\sqrt M \bs \Phi:= \bs D \bs S^\top \bar{\bs S}^{-1}$ and the debiasing matrix $\bs D := (\Id_M - \tinv{M} \bs 1_M \bs 1_M^\top)$, $\bs{\bar S}:=\diag (\bs{\bar s}) \in \Rbb^{N\times N}$, and $\bs{\bar s}:=\bb E_{\bs\alpha} \bs s$. The map $\bs{\bar S} \bs f$ relates to the discretization of the vignetted image $f^\circ$ defined above.

The debiasing above allowed the authors of~\cite{Guerit2021} to hypothesize that $\bs \Phi$ satisfies the Restricted Isometry Property (RIP), a crucial property in the classical CS problem~\cite{Candes2006} ensuring the success of recovery procedures such as the basis pursuit denoise program (see Sec.~\ref{sec:image-reconstruction}).
SI both improves the quality of the reconstructed images and reduces the acquisition time compared to RS. However, the RIP of the related sensing matrix which relies on specific random speckle configurations has not been established, keeping the sample complexity unknown for stable and robust image recovery
Moreover, in SI mode, we must prerecord $M$---object free---illumination speckles to build $\bs \Phi$, before observing the sample in the imaging plane with the same speckles.

%%%%%%%%%%%%%%%%%%%%%%%%%%%%%%%%%%%%%%%%
\subsection{Generalized MCF-LI sensing} \label{sec:beyond-FF}

We can extend the MCF-LI model~(\ref{eq:single-ROP-LE}) beyond the far-field and identical core diameter assumptions by replacing the interferometric matrix $\intMOm{\fvign}$ with a more general matrix function $\bs G[f]$. 

Given the wavefield $E_{q}(\bs x)$ of the $q$-th core of the MCF in the plane $\cl Z$, the illumination reads
\begin{equation} \label{eq:speckle-gen}
    \ts S(\bs x; \bs \alpha) :=  \big|\!\sum_{q=1}^Q \alpha_q E_{q}(\bs x) \big|^2,
\end{equation}
and similar developments to Sec.~\ref{sec:model-descr} provide
\begin{equation}
  \label{eq:single-ROP-LE-general}
  \bar y_m = \bs \alpha_m^* \, \bs G[f] \, \bs \alpha_m = \scp[\big]{\bs \alpha_m\bs \alpha_m^*}{\bs G[f]}_F
\end{equation}
where we defined, for any function $h: \bb R^2 \to \bb R$, the Hermitian matrix $\bs G[h] \in \cl H^Q$ with entries
\begin{equation} \label{eq:intmat-gen}
  \ts  G_{jk}[h] := \int_{\bb R^2}  E_{j}^*(\bs x) E_{k}(\bs x)  h(\bs x) \ud \bs x.
\end{equation} 
By recording a spatial discretization of the fields $\{E_q(\bs x)\}_{q=1}^Q$, we can thus estimate the forward model~\eqref{eq:intmat-gen}---and thus $\bc H[h] := (\bs \alpha_m^* \, \bs G[h] \, \bs \alpha_m)_{m=1}^M$---for any function $h$, as imposed to solve the inverse problem~\eqref{eq:single-ROP-LE-general} with practical algorithms. 
While slower than the computation of $\intM{\Omega}{\fvign}$ (\eg with a FFT boosting) and its $M$ SROP, estimating $\bc H$ directly integrates many deviations to the interferometric model, with a calibration limited to the observation of $\cl O(Q)$ discretized spatial intensities aimed to yield $\{E_q(\bs x)\}_{q=1}^Q$. We detail in Sec.~\ref{sec:calib} how to practically achieve this calibration.   

%%%%%%%%%%%%%%%%%%%%%%%%%%%%%%%%%%%%%%%%%%%%%%%%%%%%%%%%%%
\section{Interferometric matrix reconstruction}
\label{sec:intefero-reconstruction}

It is possible to recover the interferometric matrix $\intMOm{}$ from its SROPs $\bs y$, which allows for subsequent image estimation from that matrix. However, as made clear below, that procedure provides a suboptimal sample complexity compared to the direct sensing approach (combining SROPs and the interferometric sensing) proposed in Sec.~\ref{sec:image-reconstruction}.

With Prop.~\ref{prop:pair2pair_unit} provided in Appendix~\ref{app:nyqu-interfer-reconstr}, we first show that $\cl O(Q^2)$ deterministic sketching vectors suffice to reconstruct any interferometric matrix $\intMOm{}$ in a noiseless scenario, \ie a sample complexity upper bound to any further compressive measurements of this matrix.

Recovering $\intMOm{}$ in less than $\cl O(Q^2)$ SROP is possible if this matrix, and thus $f$, respects specific low-complexity models. First, $\bs{\cl I}_0 := \bs{\cl I}_\Omega[\fvign]$ is Hermitian. Moreover, if $\fvign$ is non-negative, this matrix is positive semi-definite since from~(\ref{eq:intmat}), for any $\bs v \in \bb C^Q$,
\begin{equation}
  \label{eq:int_is_psd}
  \begin{gathered}
  \ts \bs v^* \bs{\cl I}_0 \bs v = \int_{\bb R^2} \fvign(\bs x) \sum_{j,k} v_j^* v_k e^{\frac{\im 2\pi}{\lambda z} 
    (\bs p_k - \bs p_j)^\top \bs x} \ud \bs x\\
     \ts = \int_{\bb R^2} \fvign(\bs x) |\bs v^* \bs \rho(\bs x)|^2 \ud \bs x \geq 0,
\end{gathered}
\end{equation}
where $\bs \rho = (\rho_1, \ldots, \rho _Q) \in \bb C^Q$ with $\rho_j(\bs x) := e^{-\frac{\im 2\pi}{\lambda z} \bs p_j^\top \bs x}$.

Second, if $\fvign$ is composed of a few Dirac spikes, \ie if $\fvign (\bs x) = \sum_{i=1}^{K} u_i \delta (\bs x - \bs{x}_i)$ for $K$ coefficients and locations $\{(u_i, \bs x_i)\}_{i=1}^K$, the interferometric matrix has rank-$K$ since~\eqref{eq:intmat} reduces to the sum of $K$ rank-one matrices, \ie
\begin{equation}
\ts \bs{\cl I}_\Omega[\fvign] = \sum_{i=1}^{K} u_i\, \bs \rho(\bs x_i) \bs \rho^*(\bs x_i).\label{eq:rank-k-and-more-interf-model}
\end{equation}

Under this structural assumption, or if $\bs{\cl I}_0$ is well approximated by a rank-$K$ matrix $(\bs{\cl I}_0)_K$, we can recover $\bs{\cl I}_0$ with high probability provided the sketching vectors $\{\bs \alpha_m\}_{m=1}^M$, and thus $\bs{\cl A}$, are random, \ie their entries are \iid from a centered sub-Gaussian distribution~\cite[Thm 1]{chen2015exact}. In particular, with
\begin{equation}
M \geq M_0 = O(K Q),\label{eq:sampcomplex-chen}
\end{equation}
and probability exceeding $1 - \exp(-c M)$, any matrix $\bs{\cl I}_0$ observed through the model
$\bs y = \bs{\cl A}(\bs{\cl I}_0) + \bs \eta$, with bounded noise $\|\bs \eta\|_1 \leq \varepsilon$, can be estimated from 
$$
\tilde{\bs{\cl I}} \in \arg\min_{\bs{\cl I}} \|\bs{\cl I}\|_{\ast}\ \st\ \bs{\cl I} \succcurlyeq 0,\ \|\bs y - \bs{\cl A}(\bs{\cl I})\|_1 \leq \varepsilon.
$$
This solution is instance optimal, \ie for some $C,D>0$,
\begin{equation}
\label{eq:nucl-norm-min}
\ts \|\tilde{\bs{\cl I}} - \bs{\cl I}_0\|_F \leq C \frac{\|\bs{\cl I}_0 - (\bs{\cl I}_0)_K\|_\ast}{\sqrt K} + D \frac{\varepsilon}{M}.
\end{equation}

The sample complexity in~\eqref{eq:sampcomplex-chen} is, however, not optimal since for a $K$-sparse $\fvign$, $\bs{\cl I}_0$ depends only on $\cl O(K)$ parameters in~\eqref{eq:rank-k-and-more-interf-model}. While~\cite{chen2015exact} provides similar results with reduced sample complexity provided $\bs{\cl I}_0$ is, \eg sparse or circulant, these models are not applicable here and we show in Sec.~\ref{sec:image-reconstruction} that a smaller sample complexity is achievable under certain simplifying assumptions. 
 
%%%%%%%%%%%%%%%%%%%%%%%%%%%%%%%%%%%%%%%%%
\section{Image reconstruction}
\label{sec:image-reconstruction}

Let us consider a compressive sensing framework whose objective is to explore the imaging capability of MCF-LI, \ie we study the problem of directly estimating sparse images from their rank-one projected partial Fourier sampling, as driven by the two sensing components in~(\ref{eq:single-ROP-LE}). As proved in Sec.~\ref{sec:rec-anl}, from simplifying assumptions made on both $\fvign$ and the sensing scenario (see Sec.~\ref{sec:work-hyp}), this method achieved reduced sample complexities compared to the approach in Sec.~\ref{sec:intefero-reconstruction}, which are also confirmed numerically in Sec.~\ref{sec:phase-trans-diag}.   
 
%%%%%%%%%%%%%%%%%%
\subsection{Working assumptions}
\label{sec:work-hyp}

%%%%%%%%
We first assume a bounded field of view in MCF-LI.
\begin{assumption}[Bounded FOV]
 \label{h:bounded-FOV} 
The support of the vignetting window $w(\bs x)$ in~(\ref{eq:speckle}) is contained in a domain $\cl D := [-L/2, L/2] \times [-L/2, L/2]$ with $L := \frac{c \lambda z}{d}$, for $c>0$ depending on the (spectrum of the) output wavefield $E_0$ in~(\ref{eq:speckle}), and $w=0$ on the frontier of~$D$. 
\end{assumption}
\noindent Therefore, supposing $f$ bounded, we have $\supp \fvign \subset \cl D$ and $\fvign = 0$ over the frontier of $D$. 

%%%%%%%%
We also need to discretize~$\fvign$ by assuming it bandlimited.
\begin{assumption}[Bounded and bandlimited image]
\label{h:band-limitedness-fvign} 
The image $f$ is bounded, and $\fvign$ is bandlimited with bandlimit $\frac{W}{2}$, with $W := \frac{N_1}{L}$ and $N_1 \in \bb N$, \ie $\cl F[\fvign](\bs \chi) = 0$ for all $\bs \chi$ with $\|\bs \chi\|_\infty \geq \frac{W}{2}$.
\end{assumption}
As will be clear below, this assumption enables the computation of the interferometric matrix $\bc I_{\Omega}[\fvign]$ from the discrete Fourier transform of the following discretization of $\fvign$. 

From~(\ref{h:bounded-FOV}) and~(\ref{h:band-limitedness-fvign}) the function $\fvign$ can be identified with a vector $\bs f \in \bb R^N$ of $N=N_1^2$ components. Up to a pixel rearrangement, each component $f_j$ of $\bs f$ is related to one specific pixel of $\fvign$ taken in the $N$-point grid
$$
\ts \cl G_N := \frac{L}{N_1} \{(s_1,s_2)\}_{s_1,s_2=-\frac{N_1}{2}}^{\frac{N_1}{2}-1} \subset \cl D.
$$
The discrete Fourier transform (DFT) $\hat{\bs{f}}$ of $\bs f$ is then computed from the 2-D  DFT matrix $\bs F \in \bb C^{N \times N}$, \ie $\hat{\bs{f}} = \bs F \bs f \in \bb C^N,\quad\bs F := \bs F_{1} \otimes \bs F_{1}$,
with $(\bs F_{1})_{kl} =  e^{-\frac{\im2\pi}{N_1} kl}/\sqrt{N_1}$, $k,l \in [N_1]$, and the Kronecker product $\otimes$. Each component of $\hat{\bs{f}}$ is related to a 2-D frequency~of
$$
\ts \hat{\cl G}_N := \frac{W}{N_1} \{\chi_1, \chi_2\}_{\chi_1,\chi_2=-\frac{N_1}{2}}^{\frac{N_1}{2}-1} \subset [-\frac{W}{2},\frac{W}{2}] \times [-\frac{W}{2},\frac{W}{2}].
$$
 
%%%%%%%
We need now to simplify our selection of the visibilities.  
\begin{assumption}[Distinct on-grid non-zero visibilities]
  \label{h:distinct-visib}
All non-zero visibilities in  $\cl V_0 = \cl V \setminus \{\bs 0\}$ belong to the regular grid $\hat{\cl G}_N$, \ie $\cl V_0 \subset \hat{\cl G}_N$, and are unique, which means that $|\cl V_0| = Q(Q-1)$.
\end{assumption}  

Anticipating over Sec.~\ref{sec:rec-anl}, assumptions~\ref{h:bounded-FOV} and~\ref{h:band-limitedness-fvign} show that $\intMOm{\fvign}$ can be computed from $\bs F \bs f$; 
for each visibility $\bs \nu_{jk} \in \cl V$, there is an index $\bar l = \bar l(j,k) \in [N]$ such that      
$\ts (\intMOm{\fvign})_{j,k} = \varpi\, (\bs F \bs f)_{\bar l}$, where $\varpi := \frac{L^2}{\sqrt N}$ can be found from the continuous interpolation formula of the Shannon-Nyquist sampling theorem.

Moreover, from~\ref{h:distinct-visib}, $\bar l(j,k)$ is unique for all $j\neq k$ (\ie $\nu_{j,k} \in \cl V_0$), and since $(\intMOm{\fvign})_{j,j} = \varpi (\bs F \bs f)_{0}$ (\ie $\bar l(j,j) = 0$) for all $j \in [Q]$, we get the equivalence 
\begin{equation}
  \label{eq:equiv-cont-discrt-interfero}
  \intMOm{\fvign} = \varpi\,\cl T(\bs F \bs f),
\end{equation}
where the operator $\cl T$ is such that, for all $j,k \in [Q]$ and $\bs u \in \bb C^N$, $(\cl T(\bs u))_{jk}$ equals $u_0$ if $j=k$, and $u_{\bar l(j,k)}$ otherwise.

Consequently, if  $\fvign$ has zero mean, $(\bs F \bs f)_0 = 0$ and
\begin{equation}
  \label{eq:equiv-frob-l2}
\ts \frac{1}{\varpi^2} \|\intMOm{\fvign}\|_F^2 = \|\bs R_{\overline{\cl V}_0} \bs F \bs f\|^2,
\end{equation}
with $\bs R_{\cl S} = \Id_{\cl S}^\top$ the restriction operator defined for any $\cl S \subset [N]$, and $\overline{\cl V}_0 = \{\,\bar l(j,k): j,k \in [Q], j \neq k\} \subset [N]$ the index set of $\bs F \bs f$ related to the off-diagonal entries of $\intMOm{\fvign}$ (with $|\cl V_0|=|\overline{\cl V}_0|$ from~\ref{h:distinct-visib}).

%%%%%%%
We need to \emph{regularize} the (ill-posed) MCF-LI imaging problem by supposing that $\bs f$ is \emph{sparse} in the canonical basis. 
\begin{assumption}[Sparse sample image]
  \label{h:sparsifying-basis}
  The discrete image $\bs f$ is $K$-sparse (in the canonical basis):
$\bs f \in \Sigma_K := \{\bs v: \|\bs v\|_0 \leq K\}$.
\end{assumption}
\noindent While restrictive, our experiments in Sec.~\ref{sec: exp} show experimentally that other sparsity priors are compatible with our sensing scheme, \eg the TV norm, postponing to a future work a theoretical justification of such possible extensions.

%%%%%%%%%%%%%%%%%%%%%%%%%%%%%%%%%%%%%%%%%%%%%%
Our theoretical analysis leverages the tools of compressive sensing theory~\cite{Candes2006a,foucart2017mathematical}. In particular, as stated in the next assumption, we require that the interferometric matrix---actually, its non-diagonal entries encoded in the visibilities of $\cl V_0$---captures enough information about any sparse image~$\bs f$. 
\begin{assumption}[RIP$_{\ell_2/\ell_2}$ for visibility sampling]
\label{h:rip-visibility}
Given a sparsity level $K$, a distortion $\delta > 0$, and provided
\begin{equation} \label{eq:samp-complex-type}
|\cl V_0| = Q(Q-1) \geq \delta^{-2} K\, {\rm plog}(N,K, \delta),
\end{equation}
for some polynomials ${\rm plog}(N,K,1/\delta)$ of $\log N$, $\log K$ and $\log 1/\delta$, the matrix $\bs \Phi := \sqrt{N} \bs R_{\overline{\cl V}_0} \bs F$ respects the $(\ell_2/\ell_2)$-restricted isometry, or RIP$_{\ell_2/\ell_2}(\Sigma_K,\delta)$, \ie
\begin{equation} \label{eq:rip-def}
\ts (1- \delta) \|\bs v\|^2 \leq \tinv{|\cl V_0|}\|\bs \Phi \bs v\|^2_2 \leq (1+ \delta) \|\bs v\|^2,\ \forall \bs v \in \Sigma_K.
\end{equation} 
\end{assumption}
As will be clear later, combined with \eqref{eq:equiv-frob-l2}, this assumption ensures that two different sparse images lead to two distinct interferometric matrices, a key element for stably estimating images from our sensing model (see Prop.~\ref{prop:L2L1}). 

We specify now the distribution of the sketching vectors~$\bs \alpha$.
\begin{assumption}[Random sketches with unit modules]
  \label{h:sketch-distrib}
The sketching vectors $\{\bs \alpha_m\}_{m=1}^M$ involved in~\eqref{eq:ROP-LE} have components \iid as the random variable $\alpha \in \bb C$, with $\bb E \alpha = 0$ and $|\alpha| = 1$. 
\end{assumption}
The sketching vectors are thus sub-Gaussian, since the sub-Gaussian norm $\|\alpha_q\|_{\psi_2} = \| |\alpha_q|\|_{\psi_2} = 1$ is bounded (see~\cite[Sec 5.2.3]{Vershynin2010}). While motivated by the MCF-LI application where the SLM mainly acts on the phase of the core complex amplitudes, this assumption enables a \emph{debiasing trick}, described in Sec.~\ref{sec:debiasing}, which simplifies the theoretical analysis detailed in Sec.~\ref{sec:rec-anl}

%%%%%%%%%%%%%%%%%%%%%%
\subsection{Rationale and limitations of our assumptions}
\label{sec:limitations}

We here discuss the rationale and limitations of our assumptions. First, both Assumptions~\ref{h:distinct-visib} and~\ref{h:rip-visibility} are built on the multiset $\cl V_0$ (listing all the non-zero visibilities) and not $\cl V$. Anticipating over ~Sec.~\ref{sec:rec-anl}, this choice is imposed by the SROP measurements; they are strongly biased by the diagonal elements of the projected matrices (see also Lemma.~\ref{lem:mean-aniso-srop} in App.~\ref{app:proof-prop-rip-ileop}). As a result, we need a \emph{debiased} sensing model (see Sec.~\ref{sec:debiasing}) removing the diagonal of the interferometric matrix, and hence the zero image frequency.

Second, while we do not prove that the visibility set~$\cl V_0$ defined by the Fermat's spiral core arrangement $\Omega$ in MCF-LI verifies~\ref{h:rip-visibility}, we invoke existing results characterizing compressive sensing with partial Fourier sampling---as established for instance in the context of tomographic and radio interferometric applications~\cite{Candes2008,wiaux2009compressed}---to prove the existence of a visibility set respecting~\ref{h:rip-visibility}. 
For example, from~\cite[Thm 12.31]{foucart2017mathematical}, if $Q(Q-1) \geq C \delta^{-2} K \log^4(N)$ (for some constant $C>0$) and the set of $Q(Q-1)$ visibilities $\ol{\cl V}_0$ are picked uniformly at random in $[N]$, then $\bs \Phi$ respects the RIP$_{\ell_2/\ell_2}(\Sigma_K,\delta)$ with probability exceeding $1 - N^{-\log^3 N}$.

Third, as stated by Assumption~\ref{h:distinct-visib}, our analysis expects that each visibility, except the 0 frequency, is only observed once. However, for large values of $Q$ (and certainly if $Q(Q-1)>N$), low-frequency visibilities tend to occupy the same points in the gridded frequency domain $\hat{\cl G}_N$. For instance, for a 1-D configuration, the pdf of the visibilities would be centered and triangular---with thus increasing multiplicity at low-frequency---if the frequencies were drawn uniformly at random in a given frequency interval. 

Relaxing~\ref{h:distinct-visib} would require adapting our developments in new directions. To account for possible multiplicities in the visibilities, we can indeed introduce a weighting matrix $\bs W$ encoding the number of entries in $\intMOm{\fvign}$ that are related to the same frequency, \ie using the notations defined above, $W_{qq} := |\{(j,k) : \bar l(j,k) = q\}|$, $1\leq q\leq N$. In that case,~\eqref{eq:equiv-frob-l2} becomes 
$$
\ts \frac{1}{\varpi^2} \|\intMOm{\fvign}\|_F^2 = \|\bs W \bs R_{\overline{\cl V}_0} \bs F \bs f\|^2.
$$
However, as our analysis requires that $\|\bs W \bs R_{\overline{\cl V}_0} \bs F \bs f\|^2$ is approximately proportional to the norm of $\|\bs f\|^2$ if $\bs f$ is sparse, \ie the matrix $\bs \Phi' := \bs W \bs R_{\overline{\cl V}_0} \bs F$ should respect the RIP~\cite{Candes2006}, we then need to adapt Assumption~\ref{h:rip-visibility} to that matrix. Unfortunately, as soon as $\bs W \neq \Id$, recovering sparse signals from their random partial Fourier sampling imposes to reweight $\bs \Phi'$ to cancel out the impact of $\bs W$~\cite{krahmer2013stable,adcock_17}, both for ensuring the RIP of this matrix and for the stability of the numerical reconstructions. Unfortunately, as we only access to the SROP of the interferometric matrix, it is unclear how to introduce this cancellation in our sensing model.

Finally, Assumption~\ref{h:sparsifying-basis} is restricted to sparse signals in the canonical basis. Generalizing our recovery guarantees developed in Sec.~\ref{sec:rec-anl} to general sparsifying bases $\bs\Psi \neq \Id$ is, however, challenging. Certain bases, such as the Haar wavelet basis $\bs \Psi$, includes the constant vector. Assuming it placed on the 1-st column of $\bs \Psi$, we easily check that
the RIP$_{\ell_2/\ell_2}$ in~\eqref{eq:rip-def} % , and thus the RIP$_{\ell_1/\ell_2}$ of $\ILEop$ in Prop.~\ref{prop:rip-ileop}
cannot hold anymore for $\bs \Phi = \sqrt N \bs R_{\bar{\cl V}_0} \bs F \bs \Psi$; since $\bar{\cl V}_0$ excludes the DC frequency, taking $\bs v = \lambda \bs e_1 + \bs e_k \in \Sigma_2$ with a sufficiently large value $\lambda$ breaks~\eqref{eq:rip-def} as $\bs\Phi\bs v = \sqrt N \bs R_{\bar{\cl V}_0} \bs F \bs \Psi \bs e_k$.

%%%%%%%%%%%%%%%%%%%%%%
\subsection{Debiased sensing model}
\label{sec:debiasing}

As made clear in Sec.~\ref{sec:rec-anl}, the estimation of $\bs f$ requires a \emph{debiasing} of the MCF-LI measurements imposed by the properties of the SROP operator $\bs{\cl A}$ in~(\ref{eq:single-ROP-LE}). We follow a debiasing inspired by~\cite{Guerit2021} (and allowed by~\ref{h:sketch-distrib}), with a reduced number of measurements compared to the method proposed in~\cite{chen2015exact}.

From~\eqref{eq:ROP-LE}, we define the \emph{debiased} measurements
\begin{align}
\ts  y^{\rm c}_m := y_m - \tfrac{1}{M} \sum_{j=1}^M y_j = \scp{\bs A^{\rm c}_m}{\bc I}_F + n^{\rm c}_m,\label{eq:deb-z} 
\end{align}
with the centered and the average matrices $\bs A^{\rm c}_m = \bs \alpha_m\bs \alpha_m^* - \bs A^{\rm a}$ and $\bs A^{\rm a} = \frac{1}{M}\sum_{j=1}^M \bs \alpha_j\bs \alpha_j^*$, respectively, $\bc I := \intMOm{\fvign}$, and noise $n^{\rm c}_m := n_m - \tfrac{1}{M} \sum_{j=1}^M n_j$ with $\bb E |n^{\rm c}_m|^2 = (1 - \frac{1}{M}) \bb E |n_m|^2$.

Introducing the debiased sensing operator
\begin{equation}
\bc A^{\rm c}: \bc J \in \cl H^Q\mapsto \big(\scp{\bs A^{\rm c}_m}{\bc J}\big)_{m=1}^M \in \bb R^M,\label{eq:centered-srop-op}
\end{equation}
which respects $\bc A^{\rm c}(\bc J) = \bc A^{\rm c}(\bc J_{\rm h})$ with the \emph{hollow} matrix $\bc J_{\rm h} := \bc J - \bc J_{\rm d}$ (\ie $\diag(\bc J_{\rm h})=\bs 0$) since each $\bs A^{\rm c}_m$ is hollow from~\ref{h:sketch-distrib}, we can compactly write 
\begin{equation} \label{eq:centered-sensing-model}
\ts \bs y^{\rm c} = (y^{\rm c}_1,\,\ldots, y^{\rm c}_M)^\top  = \bc A^{\rm c}(\bc I_{\rm h})+ \bs n^{\rm c}. 
\end{equation}
The debiasing model thus senses, through $\bc I_{\rm h}$, the off-diagonal elements of $\intMOm{\fvign}$. We will show below that the combination of $\bc A^{\rm c}$ with the interferometric sensing respects a variant of the RIP property, thus enabling image reconstruction guarantees.

\begin{figure*}[t]
  \centering   
\scalebox{1.1}{\raisebox{2mm}{\includegraphics[width=.28\textwidth]{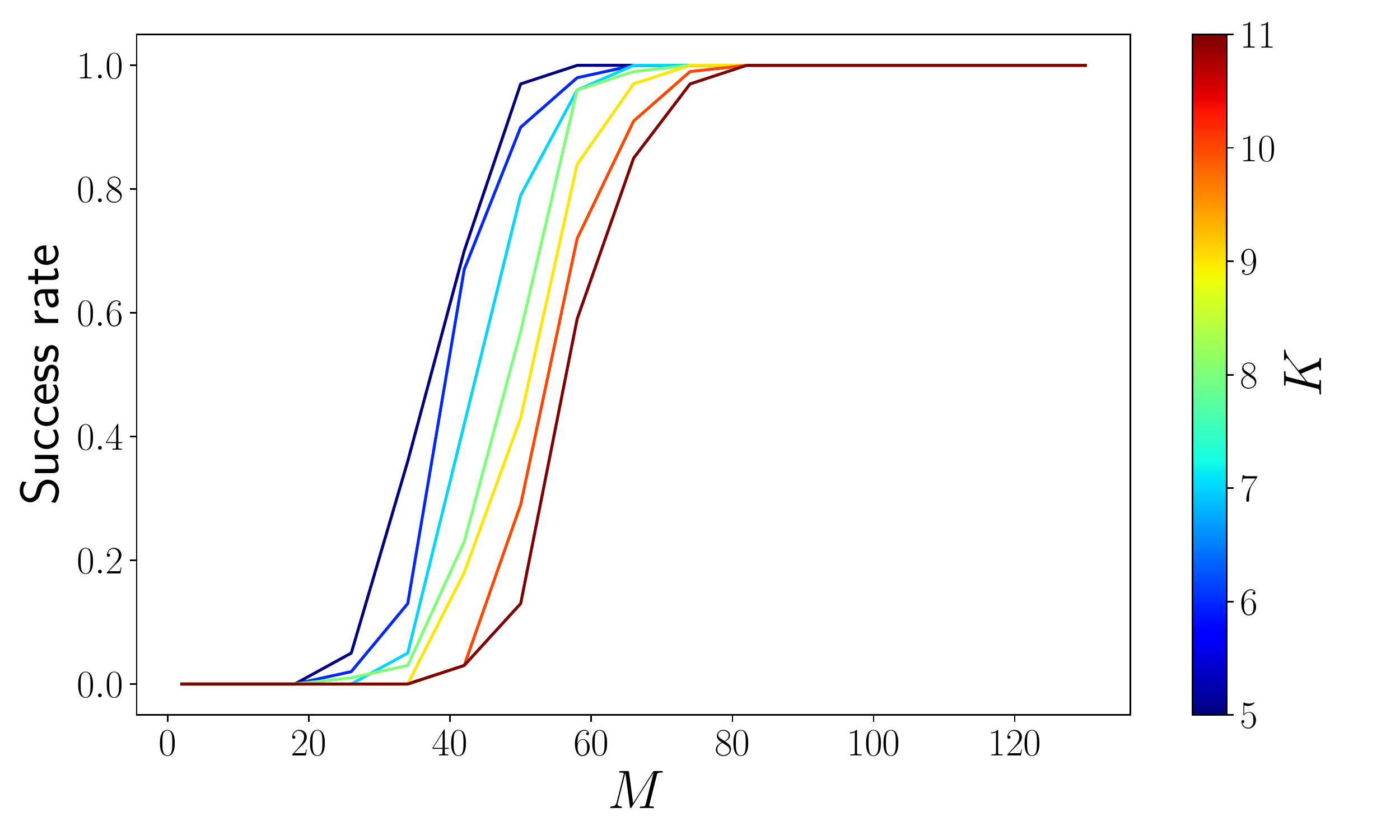}}  \includegraphics[width=.6\textwidth]{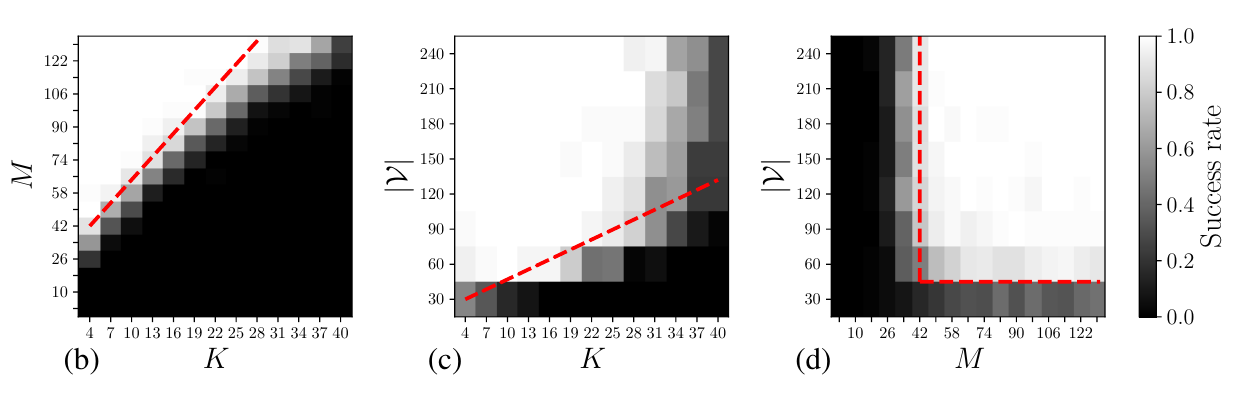}}
  \put(-515,7){\small (a)}
  \caption{(a) Transition curves obtained with $|\cl V|=240$, ensuring widespread Fourier sampling. The success rate is computed from $100$ trials. The transition abscissa shifts to the right for an increasing number $K$ of spikes in $\bs f$, indicating more SROP are necessary to reconstruct the inteferometric matrix. (b-d) Phase transition diagrams showing $M$ SROP of a $Q \times Q$ interferometric matrix for a $K$-sparse object $\bs f$ (with $|\cl V|=240$ in (b), $M=122$ in (c), and $K=4$ in (d)). One considers a uniformly random 1-D core arrangement and SROP using circularly-symmetric unit-norm random $\{ \bs\alpha_m \}_{m=1}^M$. Each pixel is constructed with $80$ reconstruction trials solving~\eqref{eq:lasso} where we consider success if SNR$\ge 40$dB. The probability of success ranges from black (0\%) to white (100\%). Dashed red lines link the transition frontiers to the samples complexities provided in Sec.~\ref{sec:work-hyp} and Sec.~\ref{sec:rec-anl}. In (c), the line only coincides with low values of $\cl V$ due to multiplicity effects.  \label{fig:transition}}
\end{figure*}

%%%%%%%%%%%%%%%%%%%%%%
\subsection{Reconstruction analysis} \label{sec:rec-anl}

We show now that we can estimate a sparse image $\bs f$ from its sensing~\eqref{eq:centered-sensing-model}. From~\ref{h:bounded-FOV}-\ref{h:sketch-distrib}, it be recast as 
\begin{equation} \label{eq:debiased-ILEop-sensing-model}
\bs y^{\rm c} = \ILEop(\bs f) + \bs n^{\rm c},
\end{equation}
where, from the equivalence~\eqref{eq:equiv-cont-discrt-interfero}, the sensing operator $\ILEop$ reads
\begin{equation}
\ts \ILEop : \bs v \in \bb R^N\ \mapsto\ \varpi \ropA^{\rm c}\big(\cl T (\bs F \bs v)\big) \in\bb R^M_+.\label{eq:ileop-def}
\end{equation}
We propose to estimate $\bs f$ by solving the basis pursuit denoise program with an $\ell_1$-norm fidelity (or BPDN$_{\ell_1}$), \ie
\begin{equation}
  \label{eq:BPDN}
\ts  \tilde{\bs f} = \arg\min_{\bs v \in \bb R^N} \|\bs v\|_1 \st 
  \|\bs y^{\rm c} - \ILEop(\bs v) \|_1 \leq \epsilon, \hspace{-5mm} \tag{BPDN$_{\ell_1}$}
\end{equation}
The specific $\ell_1$-norm fidelity of this program is motivated by the properties of the SROP operator~$\ropA^{\rm c}$, and this imposes us to set $\epsilon \geq \|\bs n^{\rm c}\|_1$ to reach feasibility.  We indeed show below that $\ILEop$, through its dependence on $\ropA^{\rm c}$, respects a variant of the RIP, the RIP$_{\ell_2/\ell_1}(\Sigma_K,{\sf m}_K,{\sf M}_K)$: given a sparsity level $K$, and two constants $0<{\sf m}_K<{\sf M}_K$, this property imposes
\begin{equation}
\ts {\sf m}_K \|\bs v\| \leq \tinv{M} \|\ILEop(\bs v)\|_1 \leq {\sf M}_K \|\bs v\|, \quad \forall \bs v \in  \Sigma_K.\label{eq:RIP-L2L1-def}
\end{equation}
Under this condition, the error $\|\bs f -\tilde{\bs f}\|$ is bounded, \ie instance optimal~\cite{foucart2017mathematical}. This is shown in the following proposition (inspired by~\cite[Lemma 2]{chen2015exact} and proved Appendix~\ref{app:proof_L2L1}). 
\begin{proposition}[$\ell_2/\ell_1$ instance optimality of~\ref{eq:BPDN}]
  \label{prop:L2L1}
  Given $K$, if there exists an integer $K' > 2K$ such that, for $k \in \{K', K + K'\}$, the operator $\ILEop$ has the RIP$_{\ell_2/\ell_1}(\Sigma_k, {\sf m}_{k}, {\sf M}_{k})$ for constants $0 < {\sf m}_{k} < {\sf M}_{k} < \infty$, and if
\begin{equation}
\ts  \tinv{\sqrt 2} {\sf m}_{K+K'} - {\sf M}_{K'} \frac{\sqrt K}{\sqrt{K'}} \geq \gamma >0,\label{eq:RIP-L2L1-bound-condition-BPDN}
\end{equation}
then, for all $\bs f$ sensed through $\bs y^{\rm c} = \ILEop(\bs f) + \bs n^{\rm c}$ with bounded noise $\|\bs n^{\rm c}\|_1 \leq \epsilon$,  the estimate $\tilde{\bs f}$ provided by~\ref{eq:BPDN} satisfies
\begin{equation}
  \label{eq:bpdn-inst-opt}
\ts  \| \bs f - \tilde{\bs f}\| \leq C_0 \frac{\|\bs f -\bs f_K\|_1}{\sqrt{K'}} + D_0 \frac{\epsilon}{M},
\end{equation}
for two values $C_0=\cl O({\sf M}_{K'}/\gamma)$ and $D_0=\cl O(1/\gamma)$.
\end{proposition}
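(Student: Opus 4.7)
The plan is to adapt the classical Candès-style instance optimality proof for basis pursuit denoise to the $\ell_2/\ell_1$ RIP setting. Set $\bs h := \tilde{\bs f} - \bs f$, and let $T_0 \subset [N]$ be the support of the best $K$-term approximation $\bs f_K$. Decompose $T_0^c$ into disjoint blocks $T_1, T_2, \ldots$ of size $K'$, ordered so that $\bs h_{T_j}$ contains the largest $K'$ entries of $\bs h$ outside $T_0 \cup T_1 \cup \cdots \cup T_{j-1}$; write $T_{01} := T_0 \cup T_1$, which is $(K+K')$-sparse. Note that $\bs f$ is feasible for \ref{eq:BPDN} since $\|\bs y^{\rm c} - \ILEop(\bs f)\|_1 = \|\bs n^{\rm c}\|_1 \leq \epsilon$, so the $\ell_1$-minimality of $\tilde{\bs f}$ combined with the reverse/forward triangle inequalities on $T_0$ and $T_0^c$ yields the standard cone condition $\|\bs h_{T_0^c}\|_1 \leq \|\bs h_{T_0}\|_1 + 2\|\bs f - \bs f_K\|_1$.

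Next I would invoke the classical tail decay estimate $\sum_{j\geq 2} \|\bs h_{T_j}\| \leq \|\bs h_{T_0^c}\|_1 / \sqrt{K'}$, which follows because each coordinate in $T_j$ is bounded in magnitude by the average over $T_{j-1}$ for $j\geq 2$. Combined with the cone condition and $\|\bs h_{T_0}\|_1 \leq \sqrt K \|\bs h_{T_0}\| \leq \sqrt K \|\bs h_{T_{01}}\|$, this controls the tail $\sum_{j\geq 2}\|\bs h_{T_j}\|$ in terms of $\|\bs h_{T_{01}}\|$ and $\|\bs f-\bs f_K\|_1/\sqrt{K'}$.

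The main engine of the proof is to bound $\|\bs h_{T_{01}}\|$ through the RIP$_{\ell_2/\ell_1}$. By the lower RIP bound on the $(K+K')$-sparse vector $\bs h_{T_{01}}$ and the triangle inequality,
\[
{\sf m}_{K+K'} \|\bs h_{T_{01}}\| \;\leq\; \tinv M \|\ILEop(\bs h_{T_{01}})\|_1 \;\leq\; \tinv M \|\ILEop(\bs h)\|_1 + \tinv M \sum_{j\geq 2} \|\ILEop(\bs h_{T_j})\|_1.
\]
The first term is at most $2\epsilon/M$ because both $\bs f$ and $\tilde{\bs f}$ are feasible, giving $\|\ILEop(\bs h)\|_1 \leq 2\epsilon$. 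For each $K'$-sparse block the upper RIP bound gives $\tinv M \|\ILEop(\bs h_{T_j})\|_1 \leq {\sf M}_{K'} \|\bs h_{T_j}\|$, so summing via the tail decay and the cone condition replaces the second term by ${\sf M}_{K'}\sqrt{K/K'}\|\bs h_{T_{01}}\| + 2{\sf M}_{K'}\|\bs f-\bs f_K\|_1/\sqrt{K'}$. At this stage I would use the sharper inner bound $\|\bs h_{T_0}\| \leq \|\bs h_{T_{01}}\|/\sqrt{2}$ that arises from separating the two blocks of $T_{01}$ of equal-or-smaller size (or equivalently the Cauchy--Schwarz bound $\|\bs h_{T_0}\|+\|\bs h_{T_1}\|\leq \sqrt 2 \|\bs h_{T_{01}}\|$), which is precisely what yields the $\tinv{\sqrt 2}{\sf m}_{K+K'}$ factor in hypothesis~\eqref{eq:RIP-L2L1-bound-condition-BPDN}. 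Rearranging, the coefficient of $\|\bs h_{T_{01}}\|$ on the left is bounded below by $\gamma > 0$, giving $\|\bs h_{T_{01}}\| \lesssim \gamma^{-1}\bigl({\sf M}_{K'}\|\bs f-\bs f_K\|_1/\sqrt{K'} + \epsilon/M\bigr)$.

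To conclude, I would bound $\|\bs h\| \leq \|\bs h_{T_{01}}\| + \sum_{j\geq 2}\|\bs h_{T_j}\|$ and re-apply the tail decay to absorb the second sum into the same form, producing~\eqref{eq:bpdn-inst-opt} with $C_0 = \cl O({\sf M}_{K'}/\gamma)$ and $D_0 = \cl O(1/\gamma)$. The main obstacle is the bookkeeping of $\ell_1$ versus $\ell_2$ norms on $\bs h_{T_0}$ and $\bs h_{T_{01}}$: unlike the standard $\ell_2/\ell_2$ setting where one exploits cross-terms $\langle \ILEop(\bs h_{T_{01}}), \ILEop(\bs h_{T_j})\rangle$ via a near-orthogonality RIP bound, here the $\ell_1$ fidelity forces an arithmetic triangle-inequality splitting, and the condition \eqref{eq:RIP-L2L1-bound-condition-BPDN} with its $1/\sqrt{2}$ factor is exactly the slack required so that the self-referential inequality on $\|\bs h_{T_{01}}\|$ admits a strictly positive closure constant $\gamma$.
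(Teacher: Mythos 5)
Your proposal is correct and follows essentially the same route as the paper's proof in Appendix~\ref{app:proof_L2L1}: the same support decomposition $T_0, T_1, T_2, \ldots$, the same cone condition from $\ell_1$-minimality, the same tail bound $\sum_{j\geq 2}\|\bs h_{T_j}\| \leq \|\bs h_{T_0^c}\|_1/\sqrt{K'}$, the same chain of triangle inequalities combining the lower RIP$_{\ell_2/\ell_1}$ on $\bs h_{T_0\cup T_1}$ with the upper RIP on each tail block, and the same Cauchy--Schwarz step producing the $\tinv{\sqrt 2}$ factor. The only quibble is the parenthetical claim that $\|\bs h_{T_0}\| \leq \|\bs h_{T_{01}}\|/\sqrt 2$ is ``equivalent'' to $\|\bs h_{T_0}\|+\|\bs h_{T_1}\|\leq \sqrt 2\,\|\bs h_{T_{01}}\|$ --- the former is false in general (take $\bs h_{T_1}=\bs 0$) while the latter is the correct inequality actually used, exactly as in the paper.
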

Notice that~\eqref{eq:RIP-L2L1-bound-condition-BPDN} is satisfied if 
\begin{equation}
\ts  K'  > 8 \Big(\frac{{\sf M}_{K'}}{{\sf m}_{K+K'}}\Big)^2 K,\label{eq:bound-condition-bpdn-reloaded}
\end{equation}
in which case $\gamma = \tinv{2\sqrt 2} {\sf m}_{K+K'}$, and, from App.~\ref{app:proof_L2L1},  $C_0 = 2(\sqrt 2+1) ({\sf M}_{K'}/{\sf m}_{K+K'})+2$ and $D_0 =  4(\sqrt 2+1)/{\sf m}_{K+K'}$. 

Interestingly, if both $M$ and $Q(Q-1)$ sufficiently exceed $K$, the operator $\ILEop$ respects the RIP$_{\ell_2/\ell_1}$ with high probability.  
\begin{proposition}[RIP$_{\ell_2/\ell_1}$ for $\ILEop$] \label{prop:rip-ileop}
Assume that assumptions~\ref{h:bounded-FOV}-\ref{h:sketch-distrib} hold, with~\ref{h:rip-visibility} set to sparsity level $\splev >0$ and distortion $\delta = 1/2$ over the set $\Sigma_\splev$. For some values $C, c >0$ and $0<c_\alpha<1$ only depending on the distribution of $\alpha$, if   
\begin{equation} \label{eq:QM_vs_K}
\ts M \geq C \splev \ln(\frac{12 e N}{\splev}),\ Q(Q-1) \geq 4 \splev\, {\rm plog}(N,\splev, \delta),
\end{equation} 
then, with probability exceeding $1 - C \exp(-c M)$, the operator $\ILEop$ respects the RIP$_{\ell_2/\ell_1}(\Sigma_\splev, {\sf m}_\splev, {\sf M}_\splev)$ with
\begin{equation}
\ts m_\splev > \frac{\varpi c_\alpha}{3\sqrt 2}\frac{\sqrt{|\cl V_0|}}{\sqrt N},\ \text{and}\ M_\splev < \frac{8\varpi}{3}\frac{\sqrt{|\cl V_0|}}{\sqrt N}. \label{eq:rip-l2l1-ilerop}
\end{equation}
\end{proposition}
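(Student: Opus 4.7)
The sensing operator factors as $\ILEop = \varpi\,\ropA^{\rm c} \circ \cl T \circ \bs F$. Since $|\alpha_q| = 1$ by Assumption~\ref{h:sketch-distrib}, every centered sketching matrix $\bs A^{\rm c}_m$ is hollow, so $\ropA^{\rm c}$ only probes the hollow part $\cl T(\bs F\bs v)_{\rm h}$; moreover, by the Dirac-sum decomposition~\eqref{eq:rank-k-and-more-interf-model}, $\cl T(\bs F\bs v)_{\rm h}$ has rank at most $K+1$ whenever $\bs v \in \Sigma_K$. The plan is to chain two isometries: a Fourier-to-matrix one for sparse vectors (essentially the content of~\ref{h:rip-visibility}), and an $\ell_2/\ell_1$ one for the debiased SROP on low-rank hollow Hermitian matrices. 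For the first, Assumption~\ref{h:distinct-visib} puts the off-diagonal entries of $\cl T(\bs F \bs v)$ in bijection with the entries of $\bs R_{\overline{\cl V}_0}\bs F\bs v$, so $\|\cl T(\bs F\bs v)_{\rm h}\|_F = \|\bs R_{\overline{\cl V}_0}\bs F\bs v\|$; combined with~\ref{h:rip-visibility} at $\delta = 1/2$, this gives, uniformly over $\Sigma_K$,
\begin{equation*}
\sqrt{\tfrac{|\cl V_0|}{2N}}\,\|\bs v\| \;\leq\; \|\cl T(\bs F\bs v)_{\rm h}\|_F \;\leq\; \sqrt{\tfrac{3|\cl V_0|}{2N}}\,\|\bs v\|.
\end{equation*}

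For the SROP step, fix a hollow Hermitian $\bs H$. The variables $\xi_m := \scp{\bs A^{\rm c}_m}{\bs H}_F$ are i.i.d., centered (since $\bb E\,\bs\alpha^\ast\bs H\bs\alpha = \tr\bs H = 0$ when $\bs H$ is hollow), sub-exponential as quadratic forms in sub-Gaussian vectors, and a direct second-moment computation using $|\alpha|=1$ and $\bb E\alpha = 0$ yields $\bb E|\xi_m| = c_\alpha\|\bs H\|_F$ with $c_\alpha \in (0,1)$ depending only on the distribution of $\alpha$. Bernstein's inequality for sub-exponential sums then gives the pointwise deviation bound
\begin{equation*}
\proba{\bigl|\tfrac{1}{M}\|\ropA^{\rm c}(\bs H)\|_1 - c_\alpha\|\bs H\|_F\bigr| \geq t\|\bs H\|_F} \leq 2\exp(-cMt^2).
\end{equation*}

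To upgrade this pointwise estimate to a uniform RIP$_{\ell_2/\ell_1}$ on $\Sigma_K$, I build a standard $\varepsilon$-net $\cl C \subset \Sigma_K \cap \bb S^{N-1}$ with $\log|\cl C| \lesssim K\log(12eN/K)$, take a union bound of the concentration above over $\{\cl T(\bs F\bs u)_{\rm h} : \bs u \in \cl C\}$, and extend to all of $\Sigma_K$ by a Lipschitz/approximation argument that relies on the (already-established) upper bound. The hypothesis $M \geq CK\ln(12eN/K)$ in~\eqref{eq:QM_vs_K} is exactly what the union bound demands to succeed with probability $\geq 1 - C\exp(-cM)$. Composing the Fourier and SROP isometries then produces the lower and upper constants of~\eqref{eq:rip-l2l1-ilerop}.

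The technical heart of the argument is the SROP step. Compared with the Chen--Chi--Goldsmith analysis~\cite{chen2015exact}---performed with Gaussian sketches for the recovery of generic rank-$r$ matrices and yielding $M \asymp rQ$---three adjustments are needed here: (i) $\alpha$ has unit modulus rather than Gaussian tails, so the constant $c_\alpha$ and the sub-exponential tails of $\xi_m$ must be recomputed (requiring, \eg control of $\bb E\alpha^2$); (ii) $\bs A^{\rm a}$ is a sample average rather than the population mean $\bb E\bs\alpha\bs\alpha^\ast = \Id$, so the $\xi_m$ are only asymptotically independent and the $\cl O(1/\sqrt M)$ fluctuation of $\bs A^{\rm a}$ against hollow $\bs H$ must be absorbed; (iii) covering only the low-dimensional subfamily $\{\cl T(\bs F\bs v)_{\rm h}:\bs v\in\Sigma_K\}$---rather than the full manifold of rank-$(K+1)$ hollow matrices---is precisely what trades the $KQ$ complexity of a generic SROP RIP for the much tighter $K\ln(N/K)$ bound stated in~\eqref{eq:QM_vs_K}.
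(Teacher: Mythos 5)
Your overall architecture matches the paper's proof almost exactly: reduce to the hollow matrix $\cl T(\bs F\bs v)_{\rm h}$, use Assumptions~\ref{h:distinct-visib} and~\ref{h:rip-visibility} (at $\delta=1/2$) to sandwich $\|\cl T(\bs F\bs v)_{\rm h}\|_F$ between multiples of $\sqrt{|\cl V_0|/N}\,\|\bs v\|$, prove a pointwise sub-exponential concentration of $\tfrac1M\|\ropA^{\rm c}(\bs H)\|_1$ around its mean for fixed hollow $\bs H$, and finish with a $\lambda$-net over $\Sigma_\splev\cap\bb S^{N-1}$, a union bound, and the standard recursive extension off the net using the upper RIP bound. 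This is the paper's Lemmata~\ref{lem:control-srop-l1}--\ref{lemma:concentration_ROPs_for_Ac}, Corollary~\ref{cor:concentration_B}, and the Baraniuk-et-al.\ covering argument, in the same order and with the same constants.

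Two points in your write-up are imprecise enough to count as gaps. First, you claim that ``a direct second-moment computation yields $\bb E|\xi_m| = c_\alpha\|\bs H\|_F$.'' This cannot hold as an equality with $c_\alpha$ depending only on the law of $\alpha$: two hollow matrices with equal Frobenius norm generally give different values of $\bb E|\bs\alpha^*\bs H\bs\alpha|$. What the second moment gives is only the \emph{upper} bound $\bb E|\xi_m|\leq(\bb E|\xi_m|^2)^{1/2}=\|\bs H\|_F$; the \emph{lower} bound $\bb E|\xi_m|\geq c_\alpha\|\bs H\|_F$ is an anti-concentration statement requiring a moment-comparison (Paley--Zygmund-type) argument on the quadratic form, which is precisely the content of the paper's Lemma~\ref{lem:control-srop-l1} adapted from~\cite[App.~A]{chen2015exact}. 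This is the technical heart of why an $\ell_2/\ell_1$ (rather than $\ell_2/\ell_2$) RIP is obtainable, and it is the one step your proposal asserts rather than proves. Second, you first state that the $\xi_m=\scp{\bs A^{\rm c}_m}{\bs H}_F$ are i.i.d.\ and apply Bernstein to them directly; they are not independent, since every $\bs A^{\rm c}_m$ contains the common sample average $\bs A^{\rm a}$. You do flag the correct fix later (split off $\scp{\bs A^{\rm a}}{\bs H}$ and show it concentrates around zero, as in the paper's Lemma~\ref{lemma:concentration_ROPs_for_Ac}), so this is a presentational inconsistency rather than a fatal flaw, but the displayed deviation bound should be derived for $\ropA(\bs H)$ and then transferred to $\ropA^{\rm c}$ by a triangle inequality plus a separate concentration of $\scp{\bs A^{\rm a}}{\bs H}$. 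Finally, a minor aside: your claim that $\cl T(\bs F\bs v)_{\rm h}$ has rank at most $K+1$ is false (removing the constant diagonal subtracts a multiple of the identity, which generically makes the matrix full rank), but this is never used, since---as you correctly note---the covering is over the sparse parametrization, not over a low-rank manifold.
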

In this proposition, proved in Appendix~\ref{app:proof-prop-rip-ileop}, the constants in~\eqref{eq:rip-l2l1-ilerop} have not been optimized and may not be tight, \eg they do not depend on $\splev$. 

Combining these last two propositions and using the (non-optimal) bounds~(\ref{eq:rip-l2l1-ilerop}) that are independent of $\splev$, since $8 ({\sf M}_{K'}/{\sf m}_{K+K'})^2 < \frac{1024}{c^2_\alpha}$,~\eqref{eq:bound-condition-bpdn-reloaded} holds if $K'> 1024 K/ c^2_\alpha$. Therefore, provided $\ILEop$ satisfies the RIP$_{\ell_2/\ell_1}(\Sigma_\splev,{\sf m}_\splev, {\sf M}_\splev)$ for $\splev\in\{K',K+K'\}$, the instance optimality~\eqref{eq:bpdn-inst-opt} holds with  
$$
\ts C_0 <  \frac{16(\sqrt{2} + 2)}{c_\alpha} = \cl O(1),\quad D_0 = \cl O\Big(\frac{\sqrt{N}}{\varpi \sqrt{|\cl V_0|}}\Big) = \cl O\big(\frac{N}{L^2 Q}\big).
$$
While the constraint on $K'$ imposes a high lower bound on $M$ when the sample complexity~\eqref{eq:QM_vs_K} is set to $\splev=(K+K')> (1024/c^2_\alpha +1) K$---as necessary to reach the RIP {w.h.p.}---the impact of the sparsity error $\|\bs f - \bs f_K\|$ in~\eqref{eq:bpdn-inst-opt} is, however, attenuated by $1/\sqrt{K'} < c_\alpha / (32 \sqrt{K})$. 

For a fixed FOV $L^2$, we also observe a meaningful amplification of the noise error by $D_0$ when the sampling grid $\cl G_N$ is too large compared to $Q$: if the number of pixels $N$ is too small,~\ref{h:band-limitedness-fvign} may not be verified, since the image bandwidth lower bounds $N$; if $N$ is too large the noise error in~(\ref{eq:bpdn-inst-opt}) is vacuous. 

%%%%%%%%%%%%%%%%%%%%%
\subsection{Phase transition diagrams}
\label{sec:phase-trans-diag}

We now compare our recovery guarantees with empirical reconstructions obtained on extensive Monte Carlo simulations with $S$ trials and varying parameters $K$, $Q$ and $M$.

To save computations, we adopt a simplified setting where~\eqref{eq:debiased-ILEop-sensing-model} is adapted to the sensing of 1-D zero mean sparse vectors in $\bb R^{N=256}$, without any vignetting, \ie $\fvign=f$, and 1-D MCF core locations. At each simulation trial with fixed $(K,Q,M)$, we verified~\ref{h:bounded-FOV}--\ref{h:sparsifying-basis} by picking the 1-D cores locations $\{p_q\}_{q=1}^{Q} \subset \bb R$ uniformly at random without replacement in $\big[\frac{-N}{2},\frac{N}{2}\big]$, and $M$ sketching vectors $\{\bs \alpha_m\}_{m=1}^M$ \iid as $\bs \alpha \in \bb C^Q$ with $\alpha_k \sim_{\iid} = e^{\im \cl U([0,2\pi[)}$, $k \in [Q]$. A zero average vector $\bs f \in \bb R^{N=256}$ was randomly generated with a $K$ sparse support picked uniformly at random in $[N]$, its $K$ non-zero components obtained with $K$ \iid Gaussian values $\cl N(0,1)$ to which we subtracted their average. The interferometric matrix was computed from~(\ref{eq:equiv-cont-discrt-interfero}) (with $L=\lambda=z=1$) using the 1-D FFT matrix $\bs F_1$. 
% We noted that~\ref{h:distinct-visib} was only partially verified; at larger $Q$ (and certainly at $Q(Q-1)>N$), non-zero visibilities in the gridded frequency domain $\hat{\cl G}_N$ can be represented multiple times on low frequencies (\ie the pdf of the visibilities is essentially triangular if the frequencies are uniform).  

To reconstruct $\bs f$, we solved the \emph{Lasso} program
\footnote{We used SPGL1~\cite{pareto} (Python module: \url{https://github.com/drrelyea/spgl1}).}~\cite{pareto}, 
\begin{equation} \label{eq:lasso}
\tilde{\bs f} = \argmin_{\bs v} \tfrac{1}{2}\| \bs y^{\rm c} - \ILEop (\bs v)\|^2~\st~\norm{\bs v}{1}\leq \tau
\end{equation}
with $\tau=\| \bs f \|_1$ set to the actual $\ell_1$-norm of the discrete object. Eq.~\eqref{eq:lasso} is equivalent to~\ref{eq:BPDN} in a noiseless setting (\ie $\epsilon = 0$) as it includes an equality constraint $\bs y^{\rm c} = \ILEop\bs (\bs f)$~\cite[Prop. 3.2]{foucart2017mathematical}. In the sparse and noiseless sensing scenario set above, we thus expect from~\eqref{eq:bpdn-inst-opt} in Prop.~\ref{prop:L2L1} that $\tilde{\bs f} = \bs f$ if $\bc B$ is RIP$_{\ell_2/\ell_1}$, \ie if both $M$ and $Q(Q-1)$ sufficiently exceeds $K$ from Prop.~\ref{prop:rip-ileop}. 

In Fig.~\ref{fig:transition}(a) and Fig.~\ref{fig:transition}(b-d), the success rates---\ie the percentage of trials where the reconstruction SNR exceeded 40 dB---were computed for $S$ set to $80$ and $100$ trials per value of $(K,Q,M)$, respectively, and for a range of $(K,Q,M)$ specified in the axes. Since~\ref{h:distinct-visib} was partially verified, we tested these rates in function of the averaged value of $|\cl V|\leq Q(Q-1)$ (which had a std $\leq 0.08N$) over the $S$ trials instead of $Q(Q-1)$.
We observe in Fig.~\ref{fig:transition}(b) that high reconstruction success is reached as soon as $M\geq C K$, with $C\simeq11$, in accordance with~\eqref{eq:QM_vs_K} in Prop.~\ref{prop:rip-ileop} (up to log factors). Fig.~\ref{fig:transition}(c) highlights that the Fourier sampling $|\cl V|$ (and thus $Q$) must increase with $K$. At small value of $Q$, we reach high reconstruction success if $|\cl V| \approx Q(Q-1) \geq C' K$, with $C' \simeq 10$, in agreement with~\eqref{eq:QM_vs_K} (up to log factors). However, as $Q$ rises that linear trend is biased since the multiplicities in $\cl V$ increases, \ie $Q(Q-1) - |\cl V| \geq 0$ increases. As expected from~\eqref{eq:QM_vs_K}, the transition diagram in Fig.~\ref{fig:transition}(d) shows that at a fixed $K=4$, both $M$ and $|\cl V|$ must reach a threshold value to trigger high reconstruction success. 
In Fig.~\ref{fig:transition}(a), which displays several transition curves of the success rate vs. $M$ for different values of $K$ at $|\cl V|=240$, the failure-success transition is shifted towards an increasing number of SROPs when $K$ increases.

\begin{figure*}[t]
  \centering
  \includegraphics[height=.15\textwidth]{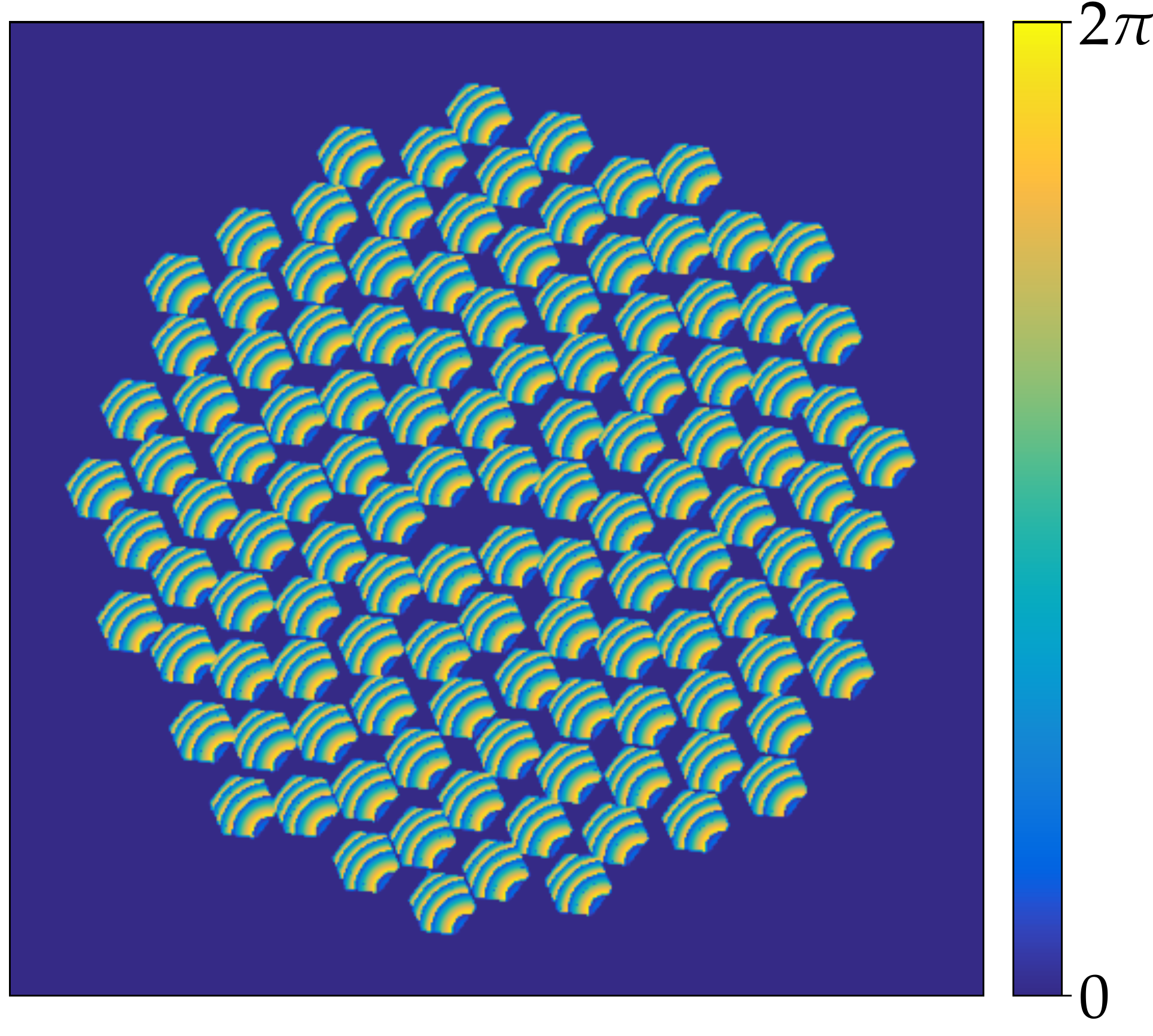}
  \includegraphics[height=.15\textwidth]{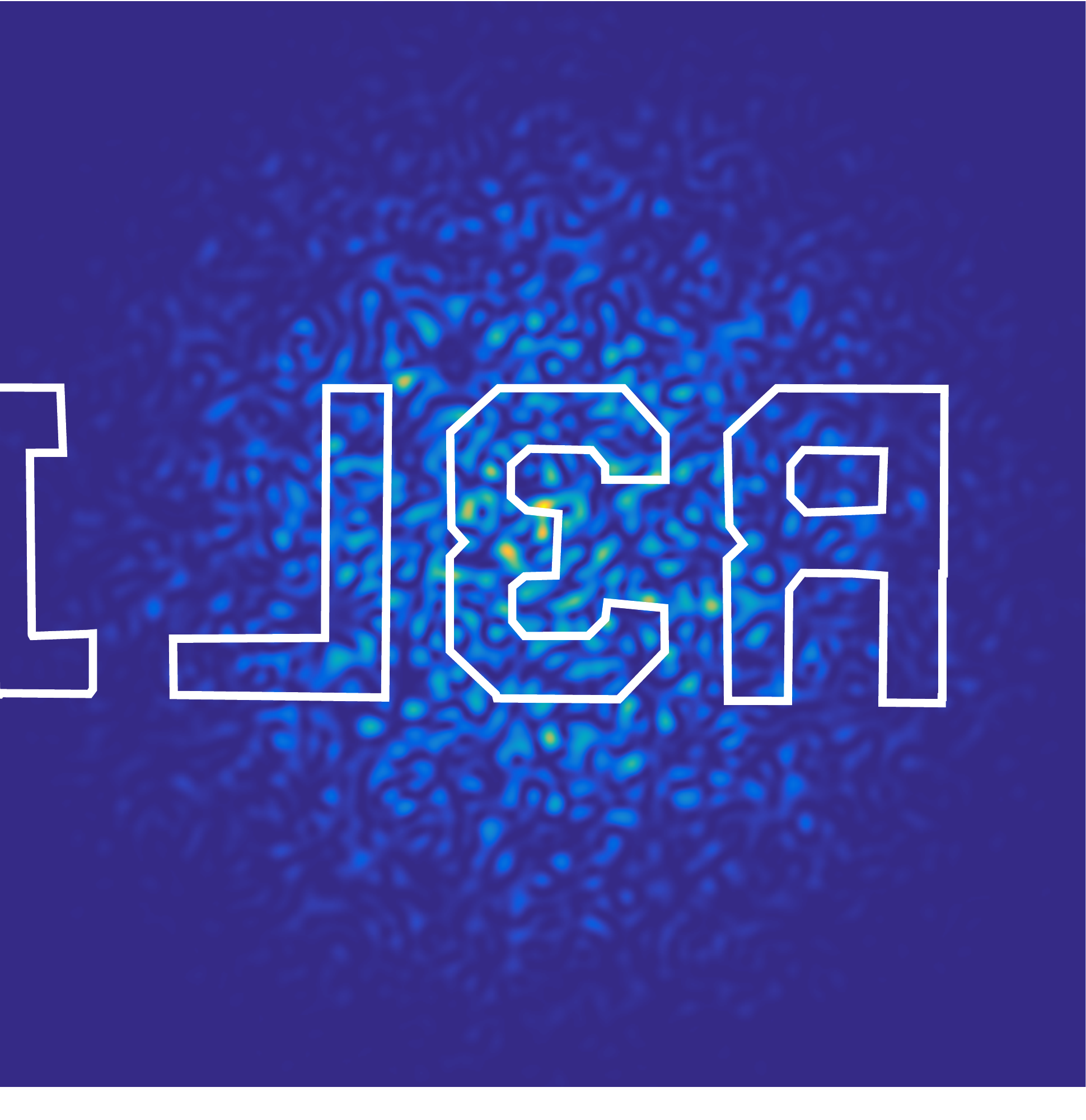}
  \raisebox{.3cm}{\includegraphics[width=.67\textwidth]{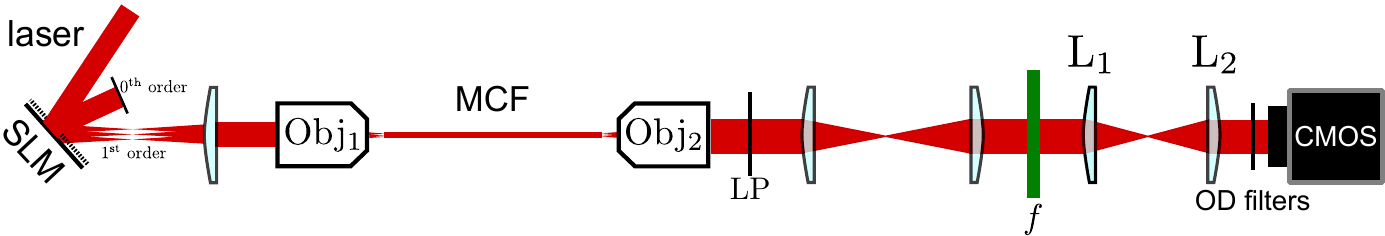}}
  \put(-512,7){\color{white}\small (a)}
  \put(-422,7){\color{white}\small (b)}
  \put(-340,7){(c)}
\caption{(a) SLM configuration ($800\times 600$ pixels) with lenslet hexagonal arrays dedicated to each core. Blaze gratings applied to each microlens deflect the ray beams towards the MCF proximal end while the $0^{\text{th}}$ beam is reflected out of the optical path. (b) Speckle generated from $\bs\alpha=(e^{\im \theta_q})_{q=1}^Q$ with $\theta_q \sim_{\iid} \cl U[0,2\pi]$. The part of the speckle reaching the camera is within the white contour lines representing the studied object $f$. (c) Schematic of the optical setup. Cutoff $\lambda_c=600$nm, SLM=Spatial Light Modulator, MCF=Multi-Core Fiber, 
LP=Linear Polarizer, $f$=object to be imaged, OD=Optical Density (neutral density filters). 
\label{fig:optical_elements}}
\end{figure*}

%%%%%%%%%%%%%%%%%%%%%%%%%%%%%%%%%%%%%%%%%%%%%%
\section{Experimental MCF-LI} \label{sec: exp}
%%%%%%%%%%%%%%%%%%%%%%%%%%%%%%%%%%%%%%%%%%%%%%

We have tested our approach on proof-of-concept imaging system set in a transmission mode so as to limit both light power loss and Poisson noise~\cite{Sivankutty2018} on the measurements. We describe below the key aspects of this setup, its specific SLM-to-speckle calibration, before providing examples of reconstructed images and studying the influence of the number of cores and speckle illuminations on the quality of the reconstruction.

%%%%%%%%%%%%%%
\subsection{Setup} \label{sec:setup}

In the setup explained in Fig.~\ref{fig:optical_elements}, a continuous wave laser operating at $\lambda=1053$nm, (YLM-1, IPG Photonics) is expanded and impinges upon a Spatial Light Modulator (SLM X10468-07, Hammamatsu) used to code the incident wavefront to the MCF. The MCF is made of $Q=110$ cores arranged in Fermat’s golden spiral, each exhibiting a single mode at the laser wavelength~\cite{Guerit2021}. The MCF exhibits an inter-core coupling term less than 20 dB~\cite{Sivankutty2018}. Unlike multimode fibers with stronger core coupling, the focused or speckle patterns generated by an MCF are resilient to thermal and mechanical external perturbations.

The SLM consists of a $800\times 600$ grid of liquid-crystal phase modulators that control the phase of reflected light. As shown in Fig.~\ref{fig:optical_elements}(a), by mapping specific pixel groups (segments) on the SLM to individual cores of the fiber, an orthogonal basis of input modes is created to modulate the light entering each core. After calibrating the SLM's phase response, any phase pattern in the range of [$0$-$2\pi$] can be conveniently represented as an 8-bit grayscale image. 
The phase pattern on each segment $q$ comprises three terms \emph{(i)} a blazed grating ensures to shift the modulated light to the first-order of the SLM, preventing unmodulated beam from entering the fibers, \emph{(ii)} a convex lens and a series of telescopes produce a focused spot array aligned with the fiber cores, achieving single-modal behavior with a demagnification factor of 64; and \emph{(iii)} a constant phase-offset for each segment which controls the relative phases between the segments.

The light coming from the SLM is focused into the MCF proximal end by Obj$_1$ ($20\times$/$0.75$NA, Nikon), then re-expanded on the distal end side with Obj$_2$ ($20\times$/$0.45$NA, Olympus). The dependency to fiber bending is avoided by placing the MCF straight. In these conditions, the transmission matrix remains constant and a single calibration phase is sufficient. However, it is possible to greatly attenuate the MCF sensitivity to external perturbation such as mechanical bending by twisting the MCF along the fiber during its manufacture~\cite{Tsvirkun2019}. To ensure the validity of the scalar model described in Sec.~\ref{sec:model-descr}, a linear polarizer is placed after the fiber end to eliminate any polarization effects. 
To satisfy the far-field approximation, the object is positioned at the front-focal plane of a lens while the fiber's distal end is placed at the back-focal plane of the same lens~\cite{Goodman2005}. In our setup, the fiber is positioned at the focal plane of Objective lens (Obj$_2$), and lenses L$_1$ and L$_2$ ($75$ and $100$mm, respectively) are used to re-image the conjugate focal plane to a more accessible location on the optical bench (see Fig.~\ref{fig:optical_elements}). The object can be positioned within $\pm 3.5$mm tolerance, easily achieved with standard positioning equipment.

The conjugate focal plane is re-imaged onto a ${1\,920 \times 1\,200}$ CMOS camera (BFLY-U3-23S6M-C, FLIR) which aids in the calibration and positioning of the system desribed in Sec.~\ref{sec:calib}. The same CMOS camera is also used for emulating single-pixel detection by summing up the pixels of the image. Each measurement has an integration time of $19.2$ms, and Optical Density (OD) filters are applied to match the light intensity to the camera's dynamic range for improved accuracy.
Working in transmission mode, we image a negative 1951 USAF test target mask, contoured in Fig.~\ref{fig:optical_elements}(b). The sample image $f$ is thus binary. 

%%%%%%%
\subsection{Calibration and generalized sensing model}
\label{sec:calib}

Our MCF-LI setup contains optical system imperfections that are difficult to model. For instance, regarding the hypotheses stated in Sec.~\ref{sec:model-descr}, \emph{(i)} the interferometric matrix should be estimated on a set of continuous, off-grid, visibilities, \emph{(ii)} the imaging depth $z$ is a priori unknown and the fiber core diameters are not constant, and \emph{(iii)} the linear polarizer (see Sec.~\ref{sec:setup}) induces spatially variable, but deterministic, attenuation of the sketching vector component.

Rather than correcting all these deviations one by one, we adopt the generalized MCF-LI sensing introduced in Sec.~\ref{sec:beyond-FF} in~(\ref{eq:single-ROP-LE-general}). This requires us to properly calibrate the system and to determine for each core $q \in [Q]$ the complex wavefields $E_q$ in the object plane $\cl Z$ from intensity-only measurements. We thus follow a standard 8-step phase-shifting interferometry technique~\cite{Cai04}. We first fix a reference core, arbitrarily indexed at $q=0$, and we program the SLM to activate only that core and another core $q$, for $1\leq q \leq Q$. We then record in the CMOS camera the 8 fringe patterns $I_{q0}(\bs x; \phi_k)$ induced by the light interference for 8 different phase steps $\phi_k=\frac{2\pi k}{8}$ ($k \in[8]$) of the reference core, as well as the intensity $I_{00}(\bs x; 0) = r_0^2(\bs x)$ obtained from activating only the reference core. Mathematically, given the polar representation $r_q(\bs x) e^{\im \varphi_q(\bs x)}$ of each complex wavefields,
\begin{align*} 
\begin{split}
    I_{q0}(\bs x; \phi_k) &= \big\lvert r_0(\bs x) e^{\im \varphi_0(\bs x)+\phi_k} + r_q(\bs x) e^{\im \varphi_q(\bs x)} \big\rvert^2 \\ 
    &= I_q^{\rm s}(\bs x) + I_q^{\rm i}(\bs x)
    \cos \big( \varphi_{q0}(\bs x) +\phi_k \big),
\end{split}
\end{align*} 
where $I_q^{\rm s} := r_0^2 + r_q^2$, $I_q^{\rm i} := 2 r_0 r_q$ and $\varphi_{q0}:=\varphi_q-\varphi_0$.
We can then recover $r_q(\bs x)$ and $\varphi_{q0}(\bs x)$ in each $\bs x$ by first applying a $8$-length DFT on $I_{q0}(\bs x; \phi_k)$ along the phase steps, and next dividing the last (7-th) DFT coefficient $4 I_q^{\rm i}(\bs x)e^{\im \varphi_{q0}(\bs x)}$ by $8 r_0(\bs x) = 8 \sqrt{I_{00}(\bs x)}$, which gives 
\begin{equation}
  \label{eq:Eq}
  \tilde E_q(\bs x) %= \tfrac{{\rm DFT}[I_{q0}(\bs x; \phi_\cdot)]_7}{8 \sqrt{I_{00}(\bs x)}}
  =  r_q (\bs x) e^{\im (\varphi_{q}(\bs x) - \varphi_0(\bs x))} = E_q(\bs x) e^{-\im \varphi_0(\bs x)}.
\end{equation}
From fields estimated in~\eqref{eq:Eq} for all $q\in [Q]$, we can reproduce any speckle $S(\bs x;\bs\alpha)$ generated from a sketching vector $\bs \alpha \in \bb C^Q$ using~(\ref{eq:speckle-gen}) since this equation is independent of $e^{-\im \varphi_0(\bs x)}$. 

While the model~(\ref{eq:single-ROP-LE-general}) extends beyond the farfield assumption---it only relies on accurate estimation of the wavefields---the optical constraints followed in Sec.~\ref{sec:setup} to reach the farfield model are necessary. They allow these fields to not strongly deviate from pure complex exponentials, which preserves the validity of the FOV and sampling assumptions~\ref{h:bounded-FOV} and~\ref{h:band-limitedness-fvign} in the sensing model.  

In particular, applying the debiasing procedure explained in Sec.~\ref{sec:debiasing}, we get the debiased observation model 
\begin{equation} \label{eq:exp_model}
    \bs y^{\rm c} = \ILEop(f) + \bs n^{\rm c},
\end{equation}
where $\ILEop(f)$ is now associated with the generalized interferometric matrix $\bs G$ defined in~\eqref{eq:intmat-gen}. In other words, we abuse the notations of~\eqref{eq:debiased-ILEop-sensing-model} and consider a sensing operator $\ILEop: h \mapsto \ILEop(h) := \ropA^{\rm c}(\bs G[h])$ applied to a non-vignetted continuous image~$h$. Regarding the computation of $\ILEop$, we leverage the calibration to compute an estimate $\tilde\ILEop(\bs h):=\ropA^{\rm c}(\tilde{\bs G}[\bs h])$ from a sampling $\bs h \in \bb R^{N^2 \simeq N \times N}$ of $h$, assuming that the proximity to the far-field assumption ensures that $\ILEop(h) \approx \tilde\ILEop(\bs h)$. For each measurement $m \in [M]$, we in fact compute $z_m = \langle \tilde{\bs S}(\cdot;\bs\alpha_m), \bs h \rangle$, with $\tilde{\bs S}(\cdot;\bs\alpha_m)$ computed from the estimated fields in~(\ref{eq:Eq}), before to debiase all measurements from~(\ref{eq:deb-z}), \ie $(\tilde\ILEop(\bs h))_m = z_m^{\rm c}$. Therefore, the matrix $\tilde{\bs G}$ is never explicitly estimated.

\begin{figure*}[t]
  \centering
  \scalebox{0.7}{
  \includegraphics[width=\textwidth]{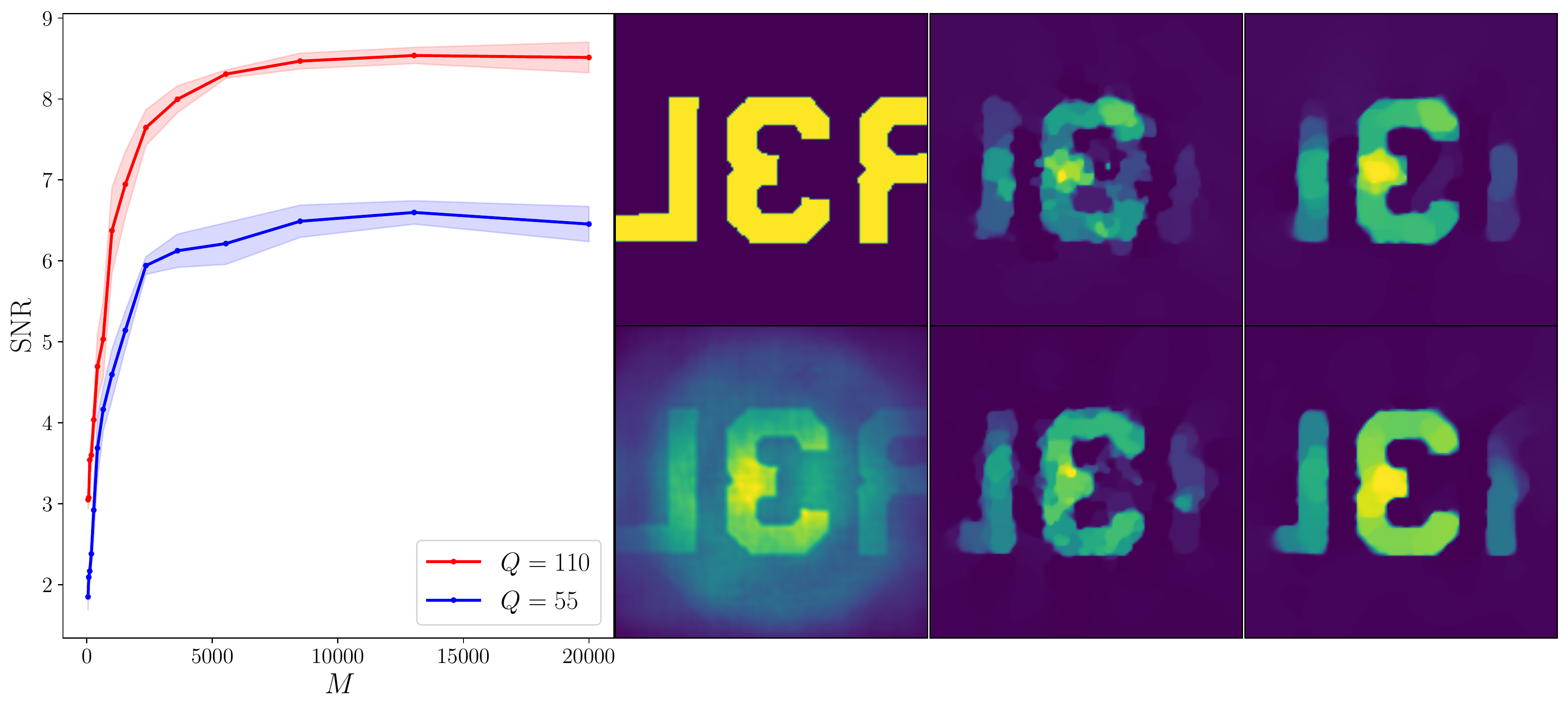}
  \put(-490,218){(a)}
  \put(-313,218){\textcolor{white}{(b)}}
  \put(-207,218){\textcolor{white}{(c)}}
  \put(-103,218){\textcolor{white}{(d)}}
  \put(-313,114){\textcolor{white}{(e)}}
  \put(-207,114){\textcolor{white}{(f)}}
  \put(-103,114){\textcolor{white}{(g)}}}
\caption{Experimental reconstruction on $N=256\times 256$ images. (a) SNR$(\bs w\tilde{\bs f},\bs{wf})$ vs. number of observations $M$ for $Q=55$ (blue) and $Q=110$ (red) cores. Solid lines represent the average, and light areas show $\pm1\sigma$ positions from 5 trials. (b) Ground truth $f$ obtained by illuminating the USAF target with white light passing through the MCF (c-d) Reconstruction using $M=\{49,2\cdot 10^4\}$ with $Q=55$ cores (e) Rec. in RS mode (see Sec.~\ref{sec:prev-mcf-modes}) (f-g) Same as (c-d) with $Q=110$ cores. (b-g) are zoomed-in versions of the camera plane seen in Fig.~\ref{fig:optical_elements}(b). \label{fig:exp_results}}
\end{figure*}

%%%%%%%
\subsection{Results}

We now present examples of reconstructed sample images obtained with the considered optical setup described in Sec.~\ref{sec:setup}, and the calibration and the sensing model from Sec.~\ref{sec:calib}.

For these experiments, our reconstruction scheme differs from the one followed in Sec.~\ref{sec:image-reconstruction}. First, as explained in~Sec.~\ref{sec:calib}, the sensing model considers a sampling of the un-vignetted sample image $f$, with a sensing operator computed in the pixel domain. Second, instead of the $\ell_1$-prior, we decided to estimate this image by promoting a small total variation (TV) norm, as it is more adapted to the cartoon-shape model of the USAF targets. This deviation from the theoretical setting described in Sec.~\ref{sec:image-reconstruction} aims to provide insights if the proposed SROP model and associated single calibration can also reconstruct images more realistic than sparse images. Third, the non-smooth data fidelity term of~\ref{eq:BPDN} is replaced by a smooth square $\ell_2$-norm to ease the iterative computation of the associated convex optimization. We thus solve the following optimization scheme: 
\begin{equation} \label{eq:GFB}
  \tilde{\bs f} = \argmin_{\bs f}~ \tinv{2M} \norm{\bs y^{\rm c}- \tilde{\ILEop}(\bs f)}{2}^2 + \rho \norm{\bs f}{\text{TV}}~\st~\bs f \geq 0,
\end{equation}
with $\rho$ set empirically to $2\,10^6$. As the vignetting limits the image quality on the frontier of the FOV, we decided to measure the quality of the estimated images with the SNR achieved between the vignetted ground truth $\bs{wf}$ and the vignetted reconstruction $\bs w \odot\tilde{\bs f}$, \ie ${\rm SNR}(\tilde{\bs f}, \bs f) = 20 \log_{10} ({\norm{\bs{w}\odot \bs f}{2}}/{\norm{\bs{w} \odot (\bs f- \tilde{\bs f})}{2}})$ with the estimated vignetting $\bs w:= Q^{-1} \sum_{q=1}^Q |\tilde{\bs E_q}|^2$.

Experimental reconstruction analyses are provided in Fig.~\ref{fig:exp_results} for images of $N=256\times 256$ pixels. In accordance with~\ref{h:sketch-distrib}, the phase of the $Q$ components of the sketch vectors were uniformly drawn \iid in $[0,2\pi]$ with the 8-bit resolution allowed by the SLM. This configuration maximizes the intensity of light injected in the cores. We tested two values for $Q$, $Q=110$, when all the MCF cores are used, and $Q=55$, by downsampling the Fermat's spiral by a factor 2. In Fig.~\ref{fig:exp_results}(a), we tested the quality of the reconstruction for $M \in [49, 20\,000]$. Transitions similar to those in Fig.~\ref{fig:transition}(a) occur for a small number of observations and a plateau is reached around $M=5000$, representing a compression factor of $M/N=7.6\%$. The highest SNR reached with $Q=110$ cores is better than with $Q=55$ cores, as higher image frequencies are captured with the denser core configuration. This effect can also be viewed in Fig.~\ref{fig:exp_results}(c-d,f-g).
Compared to the reconstruction obtained in Fig.~\ref{fig:exp_results}(e) with the RS mode modeled in Sec.~\ref{sec:prev-mcf-modes}, the TV norm penalty reduces the blur of the reconstructed object.
The low SNR values attained in Fig.~\ref{fig:exp_results} are due to the comparison of the reconstructed images with an imperfect ``ground truth'' which is also an estimation of the sample $f$ using white light illumination. 

%%%%%%%%%%%%%%%%%%%%%%%%%%%%%%%%%%%%%%%%%%%%%%%%%%%%%%%%%%%%%%%
\section{Conclusion and Perspectives} \label{sec:conclusion}
%%%%%%%%%%%%%%%%%%%%%%%%%%%%%%%%%%%%%%%%%%%%%%%%%%%%%%%%%%%%%%%

In this paper, we extended the modeling of MCF-LI with speckle illumination by including the physics of light propagation. This new model highlights that the sensing of a 2-D refractive index map of interest is limited both by the number of applied illuminations and the number (and arrangement) of cores at the distal end of the MCF. We provided recovery guarantees and observed the derived sample complexities in both numerical and experimental conditions.

A future research could open our recovery guarantees in Sec.~\ref{sec:image-reconstruction} to general sparsifying bases $\bs\Psi \neq \Id$. This would require particularizing our proofs to sparse signals with zero mean, \ie belonging to $\Sigma_K^0:=\{ \bs v \in \Sigma_K : \sum_j (\bs \Psi\bs v)_j =0 \}$.

As announced in Sec.~\ref{sec:limitations}, a limitation of our approach lies in the distinct visibility Assumption~\ref{h:distinct-visib}. By construction, the density of the visibilities---as achieved by a difference set---cannot be uniform. As shown in Fig.~\ref{fig:LIMCF}(b), this is also true for the golden Fermat's spiral arrangement. Therefore, when $Q$ grows on a fixed frequency resolution, close visibilities are hardly distinguishable. A more promising sensing model could integrate a variable density sampling (VDS) of the image spectral domain~\cite{Puy11, adcock_17}. In the same time, this could also allow for more general sparsifying basis by accounting for their variable local coherence with the Fourier basis. However, combining this aspect inside the ROP model is an open question.

Future works about MCF-LI include experimental proof of concept in reflective/endoscopic conditions, extension of the model to vector diffraction theory, and imaging of 3-D maps with generalized ROP models.

%%%%%%%%%%%%%%%%%%%%%%%%%%%%%%%%%%%%%%%%%%%%%%
\section*{Acknowledgment}
%%%%%%%%%%%%%%%%%%%%%%%%%%%%%%%%%%%%%%%%%%%%%%

The authors thank Y. Wiaux for interesting discussions on the link between MCF-LI and radio-interferometry. Computational resources have been provided by the supercomputing facilities of UCLouvain (CISM) and the Consortium des Equipements de Calcul Intensif en Fédération Wallonie Bruxelles (CECI) funded by FRS-FNRS, Belgium. This project has received funding from FRS-FNRS (Learn2Sense, T.0136.20; QuadSense, T.0160.24) and European Research Council (ERC, SpeckleCARS, 101052911).

\begin{appendices}

%%%%%%%%%%%%%%%%%%%%%%%%%%%%%%%%%%%%%%%%%%%%%%%%%%%%%%%%%
\section{Nyquist recovery of the inteferometric matrix} \label{app:nyqu-interfer-reconstr}
The following proposition provides a deterministic scheme to reconstruct, in a noiseless scenario, any interferometric matrix $\intMa \in \cl H^Q$ from $Q(Q-1)+1$ SROPs corresponding to its intrinsic complexity.
\begin{proposition}
  \label{prop:pair2pair_unit}
  There exists a set of $M=Q(Q-1)+1$ sketching vectors 
    $\left\{ \bs \alpha_m \right\}_{m=1}^M \in \Cbb^Q$ such that any Hermitian matrix $\intMa \in \cl H^Q$ with constant diagonal entries can be reconstructed from the $M$ sketches $y_m = \bs \alpha_m^* \intMa 
    \bs \alpha_m$. 
\end{proposition}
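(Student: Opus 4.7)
The plan is to exhibit an explicit, deterministic set of $M=Q(Q-1)+1$ sketching vectors that matches the intrinsic degrees of freedom of a Hermitian matrix with constant diagonal: one real number for the common diagonal value plus $2\binom{Q}{2}=Q(Q-1)$ real numbers for the (complex) off-diagonal entries, since $\mathcal I_{kj}=\overline{\mathcal I_{jk}}$ is determined by $\mathcal I_{jk}$. Each SROP $y_m=\bs\alpha_m^*\bs{\mathcal I}\bs\alpha_m$ is a real scalar, so the parameter count is consistent with the stated bound.

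First, I would dedicate a single sketch to the diagonal. Taking $\bs\alpha_1=\bs e_1$ gives $y_1=\mathcal I_{11}=d$, where by hypothesis $d:=\mathcal I_{jj}$ is the common diagonal value. Then, for every unordered pair of indices $(j,k)$ with $1\leq j<k\leq Q$, I would introduce the two sketching vectors $\bs\alpha_{jk}=\bs e_j+\bs e_k$ and $\bs\alpha'_{jk}=\bs e_j+\im\,\bs e_k$. A direct expansion, using $\mathcal I_{kj}=\overline{\mathcal I_{jk}}$, yields
\begin{equation*}
  \bs\alpha_{jk}^{*}\bs{\mathcal I}\bs\alpha_{jk}=2d+2\,\mathrm{Re}(\mathcal I_{jk}),\qquad (\bs\alpha'_{jk})^{*}\bs{\mathcal I}\bs\alpha'_{jk}=2d-2\,\mathrm{Im}(\mathcal I_{jk}),
\end{equation*}
so that $\mathcal I_{jk}$ is determined uniquely from these two sketches and the already-known value $d$. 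The lower-triangular entry $\mathcal I_{kj}$ follows by conjugation. The total count is $1+2\binom{Q}{2}=Q(Q-1)+1=M$, as required.

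The argument concludes by observing that the construction inverts the sensing map entry by entry: $y_1$ yields $d$, and then for each pair $(j,k)$ the two dedicated measurements yield $\mathrm{Re}(\mathcal I_{jk})$ and $\mathrm{Im}(\mathcal I_{jk})$. This establishes not just injectivity of the sketching operator on the constrained model but a closed-form reconstruction formula. There is essentially no technical obstacle; the only subtlety is choosing a vector pair per $(j,k)$ whose associated SROPs extract independent information on the real and imaginary parts of $\mathcal I_{jk}$, which the phase shift $1 \leftrightarrow \im$ between $\bs\alpha_{jk}$ and $\bs\alpha'_{jk}$ accomplishes. Any alternative Hermitian-symmetric basis for the two-dimensional real span of $\{\bs e_j\bs e_k^{*}+\bs e_k\bs e_j^{*},\,\im(\bs e_j\bs e_k^{*}-\bs e_k\bs e_j^{*})\}$ would work just as well, which makes the construction flexible but also non-unique.
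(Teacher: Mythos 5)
Your construction is correct and follows essentially the same route as the paper: both rely on the $Q(Q-1)$ two-sparse sketches $\bs e_j + \gamma\,\bs e_k$ with unit-modulus $\gamma\in\{1,\pm\im\}$ to extract $\Re(\bs{\cl I}_{jk})$ and $\Im(\bs{\cl I}_{jk})$ up to the diagonal contribution. The only divergence is the $(Q(Q-1)+1)$-th sketch — you take $\bs e_1$ and read off the common diagonal value directly, whereas the paper takes the all-ones vector $\bs 1$ and recovers $\tr\bs{\cl I}$ by summing all the pair measurements — so your final step is, if anything, slightly more direct.
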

\begin{proof} 
Given the $2$-sparse sketching vectors $\bs\alpha_\gamma(q,r) := \bs e_q + \gamma \bs e_r$, with $q,r \in [Q]$, $|\gamma| = 1$ and the $s$-th canonical vector $\bs e_s$,  
we have $h_\gamma[q,r] := \bs\alpha_\gamma^*(q,r) \intMa \bs\alpha_\gamma(q,r) = \intMa[q,q] + \intMa[r,r] + \gamma \intMa[q,r] + \gamma^* \intMa[r,q] = \frac{2}{Q} \tr \intMa + 2 \real{\gamma \intMa[q,r]}$. 
Therefore,
  \begin{equation} \label{eq:offdiag}
      h_1[q,r] + \im h_{-\im}[q,r] = 2 \intMa[q,r] + 
      \tfrac{2}{Q} (1+\im) \tr \intMa. 
    \end{equation}
From the $Q(Q-1)$ sketching vectors $\{\bs\alpha_\gamma(q,r): 1 < q<r\leq Q, \gamma \in \{1,-\im\}\} \subset \bb C^Q$, the value $2 \Re\{H\} = H+H^*$ computed from the sum $H = \sum_{1 < q<r\leq Q} (h_1[q,r] + \im h_{-\im}[q,r])$ respects
$$
\ts \Re\{H\} = \sum_{q \neq r}\intMa[q,r] +  (Q-1) \tr \intMa = \bs 1^\top \intMa \bs 1 + (Q-2) \tr \intMa.
$$
Using the additional unit sketching vector $\bs 1$ thus recovers $\tr \intMa$---and all constant diagonal entries of the Hermitian matrix $\intMa$---from $\Re\{H\}$, and~\eqref{eq:offdiag} provides all its off-diagonal entries. Overall $\intMa$ is thus recovered from $1 + Q(Q-1)$ measurements. 
\end{proof}

%%%%%%%%%%%%%%%%%%%%%%%%%%%%%%%%%%%%%%%%%%%%%%%%%%%%%%%%%
\section{Proof of Proposition~\ref{prop:L2L1}.} \label{app:proof_L2L1}

The proof of this proposition is inspired by the one of~\cite[Lemma 2]{chen2015exact}, itself inspired by~\cite{candes2008restricted}. This lemma was developed in the context of sparse matrix recovery from SROP measurements using a variant of BPDN regularized by the trace of the matrix estimate.  While certain elements of our proof are similar to the one of that lemma, its adaptation to the context of sparse signal recovery from~\ref{eq:BPDN} (with a $\ell_1$ fidelity) is not direct, which justifies the following compact derivations.

Let us first write $\tilde{\bs f} = \bs f + \bs h$ with the true image $\bs f$, and some residual $\bs h \in \bb R^N$.
We define the support $T_0 = \supp \bs f_K$ containing the indices of the $K$ strongest entries of $\bs f$. Next, recursively for $1 \leq i \leq \left\lceil (N-K)/K'\right\rceil$, we define the supports $T_i := \supp \big(\bs h_{T_{:i}^c}\big)_{K'}$ of length at most $K'$ containing the indices of the $K'$ strongest entries of $\bs h_{T_{:i}^c}$, with $T_{:i} := \bigcup_{j=0}^{i-1} T_i$, and $T_{:i}^c = [N] \setminus T_{:i}$. 

We first observe that, by construction, for all $j \in T_{i+1}$ with~$i \geq 1$,
$\ts |h_j| \leq \tinv{K'} \sum_{l \in T_i} |h_l| = \tinv{K'} \|\bs h_{T_i}\|_1$ so that 
$\|\bs h_{T_{i+1}}\|^2 \leq \tinv{K'} \|\bs h_{T_i}\|_1^2$. This shows that 
\begin{equation} \label{eq:1}
\ts  \sum_{i \geq 2} \| \bs h_{T_i}\| \leq \tinv{\sqrt {K'}} \sum_{i \geq 1} 
  \| \bs h_{T_i}\|_1 = \tinv{\sqrt{K'}} \| \bs h_{T_0^c}\|_1.
\end{equation}

By optimality of $\tilde{\bs f}$ in~\ref{eq:BPDN} and using twice the triangular inequality, we have
\begin{multline*}
  \|\bs f\|_1 \geq \|\tilde{\bs f}\|_1 = \|\bs f+\bs h\|_1 \geq \|\bs f_{T_0}+ \bs h\|_1 - \|\bs f_{T_0^c}\|_1 \\
  \geq \|\bs f_{T_0}\|_1 + \| \bs h_{T_0^c}\|_1 - \| \bs h_{T_0}\|_1 - \| \bs f_{T_0^c}\|_1,
\end{multline*}
so that 
\begin{equation} \label{eq:2}
\ts \|\bs h_{T_0^c}\|_1 \leq  2 \|\bs f_{T_0^c}\|_1 + \|\bs h_{T_0}\|_1 \leq 2 \|\bs f_{T_0^c}\|_1 + \sqrt{K} \|\bs h_{T_0}\|.
\end{equation}
Therefore, combining~(\ref{eq:2}) and~(\ref{eq:1}) we get
\begin{equation}
  \label{eq:3}
  \ts \sum_{i \geq 2} \|\bs h_{T_i}\|  \le 2 \frac{\|\bs f_{T_0^c}\|_1}{\sqrt{K'}}
  + \frac{\sqrt K}{\sqrt{K'}} \|\bs h_{T_0}\|.
\end{equation}

By linearity of $\ILEop$ and since both $\bs f$ and $\tilde{\bs f}$ are feasible vectors of the~\ref{eq:BPDN} constraint, we note that since $\bs h = \bs f -\tilde{\bs f}$
$$
\|\ILEop(\bs h)\|_1 \leq \|\ILEop(\bs f)-\bs y^{\rm c}\|_1+ \|\ILEop (\tilde{\bs f})-\bs y^{\rm c}\|_1 \leq 2 \epsilon.
$$
Therefore, if $\ILEop$ has the $\text{RIP}_{\ell_2/\ell_1}(\Sigma_k, {\sf m}_{k}, {\sf M}_{k})$ for $k \in \{K', K+K'\}$, we can develop the following inequalities
\begin{align*}
  \ts \frac{2\epsilon}{M} &\geq \tinv M \|\ILEop(\bs h)\|_1 \geq \tinv M 
                 \|\ILEop(\bs h_{T_{:1}})\|_1 - \tinv M \|\ILEop (\bs h_{T_{:1}^c})\|_1 \\
               &\ts \geq {\sf m}_{K+K'} \|\bs h_{T_{:1}}\| - 
                 \tinv M \sum_{i \geq 2} \|\ILEop(\bs h_{T_i})\|_1\\
               &\ts \geq \tinv{\sqrt 2} {\sf m}_{K+K'} (\|\bs h_{T_0}\|+ \|\bs h_{T_1}\|) - 
                 \tinv M \sum_{i \geq 2} \|\ILEop(\bs h_{T_i})\|_1\\
               &\ts \geq \tinv{\sqrt 2} {\sf m}_{K+K'} (\|\bs h_{T_0}\|+ \|\bs h_{T_1}\|) - 
                 {\sf M}_{K'} \sum_{i \geq 2} \|\bs h_{T_i}\|\\
               &\ts \geq \tinv{\sqrt 2} {\sf m}_{K+K'} 
                 (\|\bs h_{T_0}\|+ \|\bs h_{T_1}\|) - {\sf M}_{K'} \frac{2\|\bs f_{T_0^c}\|_1 + \sqrt K \|\bs h_{T_0}\|}{\sqrt{K'}},
\end{align*}
where we used several times the triangular inequality, the fact that $|T_i|=K'$ for $i\geq1$, and~(\ref{eq:3}) in the last inequality. The passage from the second to the third line is due to $\|\bs h_{T_{:1}}\|^2
= \|\bs h_{T_0}\|^2 + \|\bs h_{T_1}\|^2 \geq  (\|\bs h_{T_0}\| + \|\bs h_{T_1}\|)^2/2$.

Therefore, rearranging the terms, and since $\|\bs f_{T_0^c}\|_1 = \|\bs f - \bs f_K\|_1$, we get
\begin{multline} \label{eq:4}
  \ts \frac{2\epsilon}{M} + \ts 2 {\sf M}_{K'} \frac{\|\bs f-\bs f_K\|_1}{\sqrt{K'}} \\*
  \ts \geq ( \tinv{\sqrt 2} {\sf m}_{K+K'} - {\sf M}_{K'} \frac{\sqrt K}{\sqrt{K'}})
  \|\bs h_{T_0}\| + \frac{{\sf m}_{K+K'}}{\sqrt 2}  \|\bs h_{T_1}\| \\
  \ts \geq (\tinv{\sqrt 2} {\sf m}_{K+K'} - {\sf M}_{K'} \frac{\sqrt K}{\sqrt{K'}})(\|\bs h_{T_0}\| + \|\bs h_{T_1}\|).
\end{multline}
Finally, if $\tinv{\sqrt 2} {\sf m}_{K+K'} - {\sf M}_{K'} \frac{\sqrt K}{\sqrt K'} 
    \geq \gamma >0$ and $K' > 2K$, then
\begin{align*}
\|\bs f-\tilde{\bs f}\| &\ts = \|\bs h\| \leq \|\bs h_{T_0}\| + \|\bs h_{T_1}\| + 
        \sum_{i\geq 2}  \|\bs h_{T_i}\| \\
        &\ts \underset{(\ref{eq:3})}{\leq} \|\bs h_{T_0}\| + \|\bs h_{T_1}\| +
        2 \frac{\|\bs f_{T_0^c}\|_1}{\sqrt{K'}} + 
        \frac{\sqrt K}{\sqrt{K'}} \|\bs h_{T_0}\| \\
        &\ts \leq \frac{\sqrt 2+1}{\sqrt 2} (\|\bs h_{T_0}\| + \|\bs h_{T_1}\|) + 
        2 \frac{\|\bs f_{T_0^c}\|_1}{\sqrt{K'}} \\
        &\ts \underset{(\ref{eq:4})}{\leq} \frac{2+\sqrt 2}{\gamma}
        \big(\frac{\epsilon}{M} + {\sf M}_{K'} \frac{\|\bs f-\bs f_K\|_1}{\sqrt{K'}} \big)  
        + 2 \frac{\|\bs f_{T_0^c}\|_1}{\sqrt{K'}}.
\end{align*}
This thus proves the instance optimality~\eqref{eq:bpdn-inst-opt} by taking
$$
\ts C_0 = \frac{2+\sqrt 2}{\gamma} {\sf M}_{K'}+2,\ \text{and}\ D_0 = \frac{2+\sqrt 2}{\gamma}.
$$

%%%%%%%%%%%%%%%
\section{Proof of Prop.~\ref{prop:rip-ileop}}
\label{app:proof-prop-rip-ileop}
 
We will need the following lemmata to prove Prop.~\ref{prop:rip-ileop}.
We first need to prove that $\|\ropA^{\rm c}(\intMa)\|_1$, with $\ropA^{\rm c}$ defined in~(\ref{eq:centered-srop-op}) concentrates around its mean. This slightly extends~\cite[Prop.~1]{chen2015exact} where the authors rather proved that the \emph{debiased} operator $\ropA'$---such that, for any matrix $\bs{\cl I}$ and an even number of measurements $M=2M'$, $\ropA'(\intMa)_i := \ropA(\intMa)_{2i+1} - \ropA(\intMa)_{2i}$ for $i \in [M']$---respects the RIP$_{\ell_2/\ell_1}$. This debiasing is introduced to ensure that $\bb E \ropA'(\intMa) = \bs 0$. We show that this is also true for~$\ropA^{\rm c}$.

We first show some useful facts about $\ropA$ and $\ropA^{\rm c}$.
\begin{lemma}[Mean and anisotropy of the SROP operator]
  \label{lem:mean-aniso-srop}
  Given an Hermitian matrix $\intMa \in \cl H^Q$, a zero-mean complex random variable $\alpha$ with $\bb E \alpha^2 = 0$, and bounded second and fourth moments $\bb E |\alpha|^2 = \mu_2$, and $\bb E |\alpha|^4 = \mu_4$, and a set of random vectors $\{\bs \alpha_m\}_{m=1}^M \subset \bb C^M$ with components \iid as $\alpha_{mq} \sim \alpha$ (for $m \in [M]$, $q \in [Q]$), the SROP operator $\ropA$ associated with $\{\bs \alpha_m\}_{m=1}^M$ is such that
  \begin{align} 
    \bb E \,\ropA_m(\intMa)&= \bb E \scp{\bs \alpha_m\bs \alpha_m^*}{\intMa} = \mu_2 \,\tr \intMa,\ \forall m \in [M] \label{eq:SROP-mean}\\
    \tinv{M}\bb E \ropA^* \ropA (\intMa)&= \mu_2^2 \, \intMa + (\mu_4 - 2\mu_2^2)\, \intMa_{\rm d} + \mu_2^2 (\tr \intMa) \Id, \label{eq:SROP-anisotropy}
  \end{align}
  where the operator $\ropA^*$ is the adjoint\footnote{By definition, the adjoint satisfies $\scp{\ropA \bs M}{\bs v}_{\Rbb^N} = \scp{\bs M}{\ropA^* \bs v}_{\Cbb^{Q\times Q}}$.} of $\ropA$ with
  \begin{equation*} \label{eq:adjoint-A}
  \ts  \ropA^*: \bs z \in \bb R^M \mapsto \ropA^*(\bs z) := \sum_{m=1}^M z_m \bs \alpha_m \bs \alpha_m^* \in \cl H^Q,
  \end{equation*}
  and the matrix $\intMa_{\rm d} := \diag(\diag(\intMa))$ zeroes all but the diagonal entries of $\intMa$. Therefore, if $\intMa, \bc J \in \cl H^Q$ with $\intMa$ hollow, then 
$$
\bb E\,\ropA^{\rm c}(\bc J) = 0,\ \bb E \,\ropA(\intMa) = \bs 0,\ \text{and}\ \tinv{M}\bb E \ropA^* \ropA (\intMa) = \mu_2^2 \, \intMa.
$$

\end{lemma}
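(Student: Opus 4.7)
The plan is to compute both expectations directly by expanding the SROP and reducing to moment computations of the scalar random variable $\alpha$, exploiting independence across entries together with the three vanishing conditions $\bb E\alpha = 0$, $\bb E\alpha^2 = 0$ and their conjugates.

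For the mean identity~\eqref{eq:SROP-mean}, I would first rewrite $\ropA_m(\intMa) = \scp{\bs\alpha_m\bs\alpha_m^*}{\intMa}_F = \tr(\bs\alpha_m\bs\alpha_m^*\intMa) = \bs\alpha_m^*\intMa\bs\alpha_m$, and then expand as $\sum_{j,k}\intMa_{jk}\bb E[\alpha_{mj}^*\alpha_{mk}]$. Using independence across $j,k$ and $\bb E\alpha=0$, only $j=k$ survives with value $\mu_2$, giving $\mu_2\tr\intMa$.

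For the anisotropy identity~\eqref{eq:SROP-anisotropy}, I would use the definition of $\ropA^*$ to write $\ropA^*\ropA(\intMa) = \sum_m (\bs\alpha_m^*\intMa\bs\alpha_m)\bs\alpha_m\bs\alpha_m^*$, and by i.i.d.\ of the sketches reduce to $\tinv{M}\bb E\ropA^*\ropA(\intMa) = \bb E\bigl[(\bs\alpha^*\intMa\bs\alpha)\,\bs\alpha\bs\alpha^*\bigr]$. Taking the $(i,j)$ entry yields
\begin{equation*}
\ts \sum_{k,l}\intMa_{kl}\,\bb E[\alpha_k^*\alpha_l\alpha_i\alpha_j^*],
\end{equation*}
so the key computation is the fourth-order moment $\bb E[\alpha_k^*\alpha_l\alpha_i\alpha_j^*]$ for all index patterns. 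Independence across components and $\bb E\alpha=0$ immediately kill any pattern in which some component appears exactly once (or with unbalanced conjugate count reducible to $\bb E\alpha=0$). The surviving patterns are thus: (a) $i=j=k=l$ contributing $\mu_4$; (b) $i=j,\ k=l,\ i\neq k$ contributing $\mu_2^2$; (c) $i=k,\ j=l,\ i\neq j$ contributing $\mu_2^2$; and (d) $i=l,\ j=k,\ i\neq j$, which at first sight looks dangerous but in fact evaluates to $\bb E[\alpha_i^2]\bb E[(\alpha_j^*)^2]=0$ thanks precisely to $\bb E\alpha^2=0$. Summing cases (a)–(c) gives $\mu_2^2\intMa_{ij}$ on off-diagonals, and $(\mu_4-\mu_2^2)\intMa_{ii}+\mu_2^2\tr\intMa$ on diagonals, which reassembles exactly into $\mu_2^2\intMa+(\mu_4-2\mu_2^2)\intMa_{\rm d}+\mu_2^2(\tr\intMa)\Id$.

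The final corollary is then immediate: for hollow $\intMa$ one has $\tr\intMa=0$ and $\intMa_{\rm d}=0$, so~\eqref{eq:SROP-mean} gives $\bb E\ropA(\intMa)=\bs 0$ and~\eqref{eq:SROP-anisotropy} collapses to $\mu_2^2\intMa$. For $\bc J\in\cl H^Q$, linearity and~\eqref{eq:SROP-mean} yield $\bb E\,\scp{\bs A_m^{\rm a}}{\bc J}=\bb E\,\scp{\bs\alpha_m\bs\alpha_m^*}{\bc J}=\mu_2\tr\bc J$, so the two terms in $\bb E\,\ropA^{\rm c}(\bc J)=\bb E\,\scp{\bs\alpha_m\bs\alpha_m^*-\bs A^{\rm a}}{\bc J}$ cancel. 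The main obstacle—and really the only substantive point—is the bookkeeping of the fourth-moment cases; in particular, it is essential to invoke $\bb E\alpha^2=0$ to discard case (d), without which the off-diagonal coefficient would pick up an extra $\intMa_{ji}$ term and the clean identity~\eqref{eq:SROP-anisotropy} would fail.
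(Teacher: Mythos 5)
Your proposal is correct and follows essentially the same route as the paper's proof: both reduce \eqref{eq:SROP-anisotropy} to the entrywise fourth-moment computation $\sum_{k,l}\intMa_{kl}\,\bb E[\alpha_k^*\alpha_l\alpha_i\alpha_j^*]$ and dispatch the index patterns using independence, $\bb E\alpha=0$, and crucially $\bb E\alpha^2=0$ to kill the transposed pairing (your case (d)), exactly as the paper does in its $q\neq r$ case. The only differences are organizational (you enumerate pairing patterns where the paper splits on diagonal versus off-diagonal target entries first, and you spell out the $\bb E\,\ropA^{\rm c}(\bc J)=0$ cancellation that the paper leaves implicit), which do not change the argument.
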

\begin{proof}
  Eq.~\eqref{eq:SROP-mean} is an immediate consequence of $\bb E \bs \alpha_m \bs \alpha_m^* = \mu_2 \Id$. Regarding~\eqref{eq:SROP-anisotropy}, we first note that $\bb E \ropA^* \ropA \intMa = \bb E \sum_{m=1}^M (\bs \alpha_m^*  \intMa \bs{\alpha}_m) \bs{\alpha}_m \bs{\alpha}_m^* = M \bb E ( \bs{\alpha}^* \intMa \bs{\alpha}) \bs{\alpha} \bs{\alpha}^*$, and for $q,r \in [Q]$, 
  $[\bb E ( \bs{\alpha}^* \intMa \bs{\alpha}) \bs{\alpha} \bs{\alpha}^*]_{qr} = \sum_{j,k=1}^Q \cl I_{j,k} \bb E (\alpha_j^* \alpha_k \alpha_q \alpha_r^*)$.
  
  If $q=r$, then $\bb E (\alpha_j^* \alpha_k \alpha_q \alpha_r^*) = \bb E (\alpha_j^* \alpha_k |\alpha_q|^2)$ is zero if $j \neq k$, $\mu_2^2$ if $j=k\neq q$, and $\mu_4$ if $j=k=q$. Therefore,
  \begin{align*}
    \ts [\bb E ( \bs{\alpha}^* \intMa \bs{\alpha}) \bs{\alpha} \bs{\alpha}^*]_{qq}&\ts = \sum_{j=1}^Q \cl I_{j,j} \bb E (|\alpha_j|^2 |\alpha_q|^2)\\
    &\ts = \mu_2^2 \tr(\intMa) + (\mu_4 - \mu_2^2) \cl I_{qq}.
  \end{align*}
  If $q\neq r$, then $\bb E (\alpha_j^* \alpha_k \alpha_q \alpha_r^*)$ is non-zero only if $j=q$ and $k=r$ (since $\bb E \alpha^2 = 0$ and $\bb E |\alpha|^2 = \mu_2$), in which case it is equal to $\mu_2^2$. Consequently, $[\bb E ( \bs{\alpha}^* \intMa \bs{\alpha}) \bs{\alpha} \bs{\alpha}^*]_{qr} = \mu_2^2 \cl I_{q,r}$. Gathering these identities, we finally find~\eqref{eq:SROP-anisotropy}.
\end{proof}

The next lemma (adapted from~\cite[App. A]{chen2015exact}) relates the expectation of $\|\ropA(\intMa)\|_1$ to the Frobenius norm of hollow matrices $\intMa$; a useful fact for studying below the concentration of $\|\ropA(\intMa)\|_1$.
\begin{lemma}[Controlling the expected SROP $\ell_1$-norm]
  \label{lem:control-srop-l1}
  In the context of Lemma~\ref{lem:mean-aniso-srop}, if the random variable $\alpha$ has unit second moment ($\mu_2 = 1$) and bounded sub-Gaussian norm $\|\alpha\|_{\psi_2} \leq \kappa$ (with $\kappa \geq 1$), then, for any hollow matrix $\intMa \in \cl H^Q$, the random variable $\xi := \bs \alpha^* \intMa \bs \alpha$ is sub-exponential with norm $\|\xi\|_{\psi_1} \leq \kappa^2$, and there exists a value $0<c_\alpha <1$, only depending on the distribution of $\alpha$,  such that
\begin{equation} \label{eq:RIP_in_expectation}
\ts c_\alpha \| \intMa \|_F  \leq \tinv{M} \bb E \|\ropA(\intMa)\|_1 = \bb E |\xi| \leq \| \intMa \|_F.
\end{equation}
\end{lemma}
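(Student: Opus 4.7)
The identity $\tfrac{1}{M}\bb E\|\ropA(\intMa)\|_1 = \bb E|\xi|$ is immediate because $\ropA(\intMa)$ has $M$ components distributed iid as $\xi$. The remaining work is to sandwich $\bb E|\xi|$ between two multiples of $\|\intMa\|_F$, and, in passing, control the sub-exponential norm of $\xi$.

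\textbf{Second moment and the upper bound.} I first compute $\bb E|\xi|^2$ explicitly. Since $\intMa$ is Hermitian, $\xi$ is real-valued, so $|\xi|^2 = \xi\bar\xi = \sum_{j,k,l,m}\cl I_{jk}\cl I_{lm}^{*}\alpha_j^{*}\alpha_k\alpha_l\alpha_m^{*}$. The assumptions $\bb E\alpha = 0$ and $\bb E\alpha^2 = 0$ force the nonzero expectations $\bb E(\alpha_j^{*}\alpha_m^{*}\alpha_k\alpha_l)$ to obey a multiset pairing $\{j,m\}=\{k,l\}$; hollowness of $\intMa$ ($\cl I_{jj}=0$) kills the self-pairing $j=k,\, m=l$ and the fully collapsed $j=k=l=m$ term, leaving only $(j=l,\, k=m,\, j\neq k)$, which yields $\bb E|\xi|^2 = \mu_2^2\sum_{j\neq k}|\cl I_{jk}|^2 = \|\intMa\|_F^2$. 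Jensen's inequality then gives $\bb E|\xi| \leq (\bb E|\xi|^2)^{1/2} = \|\intMa\|_F$, establishing the right-hand side of~\eqref{eq:RIP_in_expectation}.

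\textbf{Sub-exponential control.} The variable $\xi = \sum_{j\neq k}\cl I_{jk}\alpha_j^{*}\alpha_k$ is a hollow quadratic form in the independent sub-Gaussian entries of $\bs\alpha$ with $\|\alpha_q\|_{\psi_2}\leq \kappa$. A standard application of the Hanson--Wright inequality (in its complex version) provides $\|\xi - \bb E\xi\|_{\psi_1}\lesssim \kappa^2\|\intMa\|_F$. By Lemma~\ref{lem:mean-aniso-srop}, hollowness yields $\bb E\xi = \mu_2\tr\intMa = 0$, so the centering drops out and $\|\xi\|_{\psi_1}\lesssim \kappa^2\|\intMa\|_F$, i.e.\ the claimed $\psi_1$-control after the appropriate normalization.

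\textbf{Lower bound (the main obstacle).} The reverse inequality $\bb E|\xi|\geq c_\alpha\|\intMa\|_F$ is the crux of the lemma, since a priori $\xi$ could be highly anti-concentrated about zero despite its large second moment. I exploit the Cauchy--Schwarz interpolation
\begin{equation*}
\bb E|\xi|^2 \;=\; \bb E\bigl(|\xi|^{1/2}|\xi|^{3/2}\bigr) \;\leq\; (\bb E|\xi|)^{1/2}(\bb E|\xi|^3)^{1/2},
\end{equation*}
which rearranges to $\bb E|\xi| \geq (\bb E|\xi|^2)^2/\bb E|\xi|^3$. Combining the exact identity $\bb E|\xi|^2 = \|\intMa\|_F^2$ with the sub-exponential moment bound $\bb E|\xi|^3 \leq C_1\|\xi\|_{\psi_1}^3\leq C_2\kappa^6\|\intMa\|_F^3$ from the previous step yields $\bb E|\xi|\geq c_\alpha\|\intMa\|_F$ with $c_\alpha = 1/(C_2\kappa^6)$, closing the sandwich. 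The essential technical content is the hypercontractive-type estimate $(\bb E|\xi|^3)^{1/3}\lesssim (\bb E|\xi|^2)^{1/2}$; this may equivalently be derived by a direct Isserlis-like computation of $\bb E\xi^4$ (enumerating the pairings of a quartic form in independent variables satisfying $\bb E\alpha = \bb E\alpha^2 = 0$), or by invoking Kahane--Khintchine moment equivalence for polynomial chaos of order two, with the constant $c_\alpha$ ultimately depending only on the distribution of~$\alpha$.
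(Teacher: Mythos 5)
Your proof is correct and follows essentially the same route as the paper, which simply defers to the argument of Chen \emph{et al.}~\cite[App.~A]{chen2015exact}: the exact second-moment identity $\bb E|\xi|^2=\|\intMa\|_F^2$ (giving the upper constant $1$ via Jensen, exactly the point the paper highlights), sub-exponentiality of the hollow quadratic form, and a H\"older/Cauchy--Schwarz moment comparison for the lower bound. Your only deviations are cosmetic --- you interpolate through $\bb E|\xi|^3$ where the cited argument uses $\bb E\xi^4$, and you invoke Hanson--Wright for the $\psi_1$ bound --- neither of which changes the substance.
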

\begin{proof}
The proof is an easy adaptation of~\cite[App. A]{chen2015exact} to the random variable $\xi = \scp{\bs \alpha\bs \alpha^*}{\intMa}_F = \bs \alpha^* \intMa \bs \alpha$, for $\intMa$ hollow. The constant $c_1$ (Eq.~50) in that work is here set to $1$ since $\ts  (\bb E |\bs \alpha^* \intMa \bs \alpha|)^2 \leq \bb E |\bs \alpha^* \intMa \bs \alpha|^2 = \tinv{M} \bb E \|\ropA(\intMa)\|^2_2 = \|\intMa\|_F^2$. 
\end{proof}

The following lemma leverages the result above to characterize the concentration of $\tinv{M} \|\ropA(\intMa)\|_1$. 
\begin{lemma}[Concentration of SROP in the $\ell_1$-norm] \label{lemma:concentration_ROPs}
In the context of Lemmata~\ref{lem:mean-aniso-srop} and~\ref{lem:control-srop-l1}, given a hollow matrix $\intMa \in \cl H^Q$, there exists a value $0<c_\alpha<1$, only depending on the distribution of $\alpha$,  such that, for $t \geq 0$, with a failure probability smaller than $2 \exp(- cM \min(t^2, t))$, 
  \begin{equation}
  \label{eq:RIP_concentration}
  \ts (c_\alpha - 2t \kappa^2) \| \intMa \|_F  \leq \tinv{M} \|\ropA(\intMa)\|_1 \leq (1+2t \kappa^2) \| \intMa \|_F.
\end{equation}
\end{lemma}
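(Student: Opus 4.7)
The plan is to recognize that $\tinv{M}\|\ropA(\intMa)\|_1 = \tinv{M}\ssum{m=1}{M}|\xi_m|$ is the empirical average of $M$ iid copies of the random variable $\xi := \bs\alpha^*\intMa\bs\alpha$ already analyzed in Lemma~\ref{lem:control-srop-l1}. That lemma tells us that $\xi$ (and hence $|\xi|$, since absolute value preserves the sub-exponential norm) is sub-exponential; by homogeneity in $\intMa$, the bound stated there is equivalent to $\|\xi\|_{\psi_1}\leq\kappa^2\|\intMa\|_F$, or one may simply reduce to the unit Frobenius-norm case, since both sides of~\eqref{eq:RIP_concentration} are homogeneous of degree one in $\intMa$.

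Next, I would apply a standard Bernstein-type concentration inequality for sums of iid sub-exponential variables (e.g.\ \cite[Prop.~5.16]{Vershynin2010}) to obtain, for any $u\geq 0$,
\[
\bb P\!\left(\left|\tinv{M}\ssum{m=1}{M}|\xi_m|-\bb E|\xi|\right|>u\right)\leq 2\exp\!\left(-cM\min\!\left(\tfrac{u^2}{\kappa^4\|\intMa\|_F^2},\tfrac{u}{\kappa^2\|\intMa\|_F}\right)\right).
\]
Calibrating the deviation level $u=2t\kappa^2\|\intMa\|_F$ collapses the argument of the minimum to $\min(t^2,t)$ (up to harmless numerical factors absorbed into $c$), and hence produces a failure probability at most $2\exp(-cM\min(t^2,t))$, exactly as in the statement.

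To conclude, on the complement of this failure event, I would combine the above concentration with the sandwich $c_\alpha\|\intMa\|_F\leq\bb E|\xi|\leq\|\intMa\|_F$ from Lemma~\ref{lem:control-srop-l1}: the lower bound gives $\tinv{M}\|\ropA(\intMa)\|_1\geq(c_\alpha-2t\kappa^2)\|\intMa\|_F$, and the upper bound gives $\tinv{M}\|\ropA(\intMa)\|_1\leq(1+2t\kappa^2)\|\intMa\|_F$, proving~\eqref{eq:RIP_concentration}. I do not anticipate any serious obstacle beyond tracking the homogeneity of the sub-exponential norm in $\|\intMa\|_F$; after that, the argument is routine Bernstein bookkeeping, with the universal constant $c$ absorbing the numerical factors that appear when passing from the generic deviation~$u$ to the form $2t\kappa^2\|\intMa\|_F$.
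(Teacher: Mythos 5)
Your proposal is correct and follows essentially the same route as the paper's proof: reduce to $\|\intMa\|_F=1$ by homogeneity, note from Lemma~\ref{lem:control-srop-l1} that the $\xi_m$ are \iid sub-exponential with $\|\xi_m\|_{\psi_1}\leq\kappa^2$ (the paper centers $|\xi_m|$ and bounds $\||\xi_m|-\bb E|\xi_m|\|_{\psi_1}\leq 2\kappa^2$ by the triangle inequality, which is where the factor $2t\kappa^2$ comes from), apply Bernstein's inequality for sub-exponential sums from~\cite{Vershynin2010}, and conclude with the sandwich $c_\alpha\leq\bb E|\xi|\leq 1$.
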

\begin{proof}
We can assume $\|\intMa\|_F = 1$ by homogeneity of~\eqref{eq:RIP_concentration}.  Defining the random variables $\xi_m := \bs \alpha_m^* \intMa \bs \alpha_m$ and $\wt\xi_m := |\xi_m| - \bb E |\xi_m|$ for $m \in [M]$, Lemma~\ref{lem:control-srop-l1} shows that each $\xi_m$ is sub-exponential with $\|\xi_m\|_{\psi_1} \leq \kappa^2$. Moreover,  using the triangular inequality and $\bb E |\xi_m| \leq \|\xi_m\|_{\psi_1}$ (from~\cite[Def. 5.13]{Vershynin2010}), we get $\|\wt\xi_m\|_{\psi_1} \leq \|\xi_m\|_{\psi_1} + \bb E |\xi_m| \leq 2 \kappa^2$, showing the sub-exponentiality of each $\wt\xi_m$ for $m \in [M]$. 

Therefore, given $t\geq 0$, using~\cite[Cor. 5.17]{Vershynin2010}, we get, with a failure probability lower than $2\exp(-cM \min(\frac{t^2}{4 \kappa^4}, \frac{t}{2\kappa^2}))$,
$$
\ts -t\ \leq \tinv{M} \sum_{m=1}^M \wt\xi_m = \tinv{M} \|\ropA(\intMa)\|_1 - \tinv{M} \bb E \|\ropA(\intMa)\|_1\ \leq t
$$
for some $c > 0$. The result follows by applying~\eqref{eq:RIP_in_expectation} to lower and upper bound $\tinv{M} \bb E \|\ropA(\intMa)\|_1$, followed by a rescaling in $t$.
\end{proof}

Despite the non-independence of the centered matrices $\bs A^{\rm c}_m$ defining the components of $\ropA^{\rm c}$, we can show the concentration of $\ropA^{\rm c}(\bc J)$ in the $\ell_1$-norm by noting that, if~\ref{h:sketch-distrib} holds, $\ropA_m^{\rm c}(\bc J)=\ropA_m^{\rm c}(\bc J_{\rm h})=\ropA_m(\bc J_{\rm h}) - \scp{\bs A^{\rm a}}{\bc J_{\rm h}}$, applying Lemma~\ref{lemma:concentration_ROPs} on the $\ell_1$-norm of the first term, and noting the second concentrates around~0.

%%%%%%
\begin{lemma}[Concentration of centered SROP in the $\ell_1$-norm] \label{lemma:concentration_ROPs_for_Ac}
In the context of Lemmata~\ref{lem:mean-aniso-srop} and~\ref{lem:control-srop-l1} and supposing~\ref{h:sketch-distrib} holds, given a matrix $\bc J \in \cl H^Q$ and $\bc J_{\rm h} = \bc J - \bc J_{\rm d}$, there exists a value $0<c_\alpha <1$, only depending on the distribution of $\alpha$,  such that, for $t \geq 0$, with a failure probability smaller than $2 \exp(- cM \min(t^2, t))$, 
  \begin{equation}
  \label{eq:RIP_concentration_Ac}
  \ts (c_\alpha - 3t \kappa^2) \| \bc J_{\rm h} \|_F  \leq \tinv{M} \|\ropA^{\rm c}(\bc J)\|_1 \leq (1+3t \kappa^2) \| \bc J_{\rm h} \|_F.
\end{equation}
\end{lemma}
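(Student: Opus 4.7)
The plan is to follow the hint given in the paper: decompose each centered sketch into a raw SROP on the hollow part of $\bc J$ minus a common scalar bias term that the triangle inequality handles cleanly, then apply Lemma~\ref{lemma:concentration_ROPs} to the first and a standard sub-exponential concentration to the second.

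First I would verify the two identities $\ropA_m^{\rm c}(\bc J) = \ropA_m^{\rm c}(\bc J_{\rm h}) = \ropA_m(\bc J_{\rm h}) - \xi$, where $\xi := \tfrac{1}{M}\sum_{j=1}^M \bs\alpha_j^* \bc J_{\rm h} \bs\alpha_j = \fro{\bs A^{\rm a}}{\bc J_{\rm h}}$. Under Assumption~\ref{h:sketch-distrib} we have $|\alpha_{mq}|=1$, so $\diag(\bs\alpha_m\bs\alpha_m^*)=\bs 1$ for every $m$, hence $\diag(\bs A^{\rm a})=\bs 1$ and $\bs A_m^{\rm c}=\bs\alpha_m\bs\alpha_m^*-\bs A^{\rm a}$ is hollow. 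Since a hollow matrix has the same Frobenius inner product with $\bc J$ and with its hollow part $\bc J_{\rm h}$, the first equality follows; the second is just the definition of $\bs A_m^{\rm c}$ applied to $\bc J_{\rm h}$.

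Second, I would control the common bias $\xi$. The variables $\xi_j:=\bs\alpha_j^*\bc J_{\rm h}\bs\alpha_j$ are i.i.d.\ across $j\in[M]$, sub-exponential with $\|\xi_j\|_{\psi_1}\leq \kappa^2 \|\bc J_{\rm h}\|_F$ (Lemma~\ref{lem:control-srop-l1}), and with $\bb E\xi_j = \mu_2\,\tr\bc J_{\rm h}=0$ because $\mu_2=1$ and $\bc J_{\rm h}$ is hollow (Lemma~\ref{lem:mean-aniso-srop}). Bernstein's inequality for sub-exponentials (e.g.\ \cite[Cor.~5.17]{Vershynin2010}) then yields
\begin{equation*}
   |\xi| \;\leq\; t\kappa^2\,\|\bc J_{\rm h}\|_F
\end{equation*}
with failure probability at most $2\exp(-c'M\min(t^2,t))$ for some $c'>0$.

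Third, I would combine the two pieces using the (reverse) triangle inequality. Since $\|\ropA^{\rm c}(\bc J)\|_1=\sum_m|\ropA_m(\bc J_{\rm h})-\xi|$,
\begin{equation*}
  \Big|\tfrac{1}{M}\|\ropA^{\rm c}(\bc J)\|_1 - \tfrac{1}{M}\|\ropA(\bc J_{\rm h})\|_1\Big| \;\leq\; |\xi|.
\end{equation*}
Applying Lemma~\ref{lemma:concentration_ROPs} to the hollow matrix $\bc J_{\rm h}$, we get $(c_\alpha-2t\kappa^2)\|\bc J_{\rm h}\|_F\leq \tfrac{1}{M}\|\ropA(\bc J_{\rm h})\|_1\leq (1+2t\kappa^2)\|\bc J_{\rm h}\|_F$ with failure probability at most $2\exp(-cM\min(t^2,t))$. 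A union bound with the previous step, together with the $|\xi|$ estimate, contributes one extra $t\kappa^2\|\bc J_{\rm h}\|_F$ to each side, giving the claimed constants $c_\alpha-3t\kappa^2$ and $1+3t\kappa^2$. The total failure probability $4\exp(-\min(c,c')M\min(t^2,t))$ is absorbed into the stated $2\exp(-cM\min(t^2,t))$ after a harmless rescaling of the constant $c$.

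The only mildly delicate step is the first one: one must notice that the unit-modulus assumption is exactly what makes $\bs A_m^{\rm c}$ hollow and thus lets $\bc J$ be replaced by $\bc J_{\rm h}$ throughout; without this, the diagonal of $\bc J$ would leak into the centered measurements and prevent the clean reduction to Lemma~\ref{lemma:concentration_ROPs}. Everything else is a routine Bernstein-plus-triangle-inequality combination.
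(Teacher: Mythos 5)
Your proposal is correct and follows essentially the same route as the paper's own proof: the same reduction $\ropA_m^{\rm c}(\bc J)=\ropA_m(\bc J_{\rm h})-\fro{\bs A^{\rm a}}{\bc J_{\rm h}}$ justified by the hollowness of $\bs A_m^{\rm c}$ under~\ref{h:sketch-distrib}, the same Bernstein bound on the bias term via~\cite[Cor.~5.17]{Vershynin2010}, and the same triangle-inequality-plus-union-bound combination with Lemma~\ref{lemma:concentration_ROPs}. Your remark that the union bound yields a prefactor $4$ to be absorbed by rescaling $c$ is a small point the paper leaves implicit, but otherwise the two arguments coincide.
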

\begin{proof}
Given $\bc J \in \cl H^Q$ and its hollow part $\bc J_{\rm h} = \bc J - \bc J_{\rm d}$, the operator $\ropA^{\rm c}$ is defined componentwise by $\ropA^{\rm c}_m(\bc J) = \ropA_m(\bc J) - \scp{\bs A^{\rm a}}{\bc J} = \scp{\bs \alpha_m \bs \alpha_m^* - \bs A^{\rm a}}{\bc J} $, with $\bs A^{\rm a}=\frac{1}{M} \sum_{j=1}^M \bs \alpha_j \bs \alpha_j^*$. Moreover, from~\ref{h:sketch-distrib}, $\ropA^{\rm c}_m(\bc J) = \ropA^{\rm c}_m(\bc J_{\rm h})$ since both matrices $\bs \alpha_m \bs \alpha_m^*$ and $\bs A^{\rm a}$ have unit diagonal entries. Therefore, by triangular inequality 
\begin{equation} \label{eq:srop-centered-bounds}
\ts  \big|\tinv{M} \|\ropA^{\rm c}(\bc J)\|_1 - \tinv{M} \|\ropA(\bc J_{\rm h})\|_1\big| \leq |\scp{\bs A^{\rm a}}{\bc J_{\rm h}}|.
\end{equation}
Given the \iid random variables $\xi_j = \bs \alpha_j^* \bc J_{\rm h}\bs \alpha_j$, we get $\scp{\bs A^{\rm a}}{\bc J_{\rm h}} = \frac{1}{M} \sum_{j=1}^M \xi_j$, with $\bb E \xi_j = 0$ from the hollowness of $\bc J_{\rm h}$. According to Lemma~\ref{lem:control-srop-l1}, each $\xi_j$ is sub-exponential with $\|\xi_j\|_{\psi_1} \leq \kappa^2$. Therefore, using again~\cite[Cor. 5.17]{Vershynin2010}, we have, with a failure probability lower than $2\exp(-cM \min(\frac{t^2}{\kappa^4}, \frac{t}{\kappa^2}))$,
$$
\ts -t\ \leq \scp{\frac{1}{M} \sum_{j=1}^M \bs \alpha_j \bs \alpha_j^*}{\bc J_{\rm h}} \leq t,
$$
for some $c > 0$. The result follows from a union bound on the failure of this event and the event~\eqref{eq:RIP_concentration} in Lemma~\ref{lemma:concentration_ROPs}, both inequalities and~\eqref{eq:srop-centered-bounds} justifying~\eqref{eq:RIP_concentration_Ac}.
\end{proof}
As a simple corollary of the previous lemma, we can now establish the concentration of 
$\ILEop(\bs f) := \varpi \ropA^{\rm c}\big(\cl T (\bs F \bs f)\big) \in\bb R^M_+$
in the $\ell_1$-norm for an arbitrary $K$-sparse vector $\bs f \in \Sigma_K$.
%%%%%%%%%%%%%
\begin{corollary}[Concentration of $\ILEop$ in the $\ell_1$-norm] \label{cor:concentration_B}
In the context of Lemma~\ref{lemma:concentration_ROPs_for_Ac}, suppose that~\ref{h:bounded-FOV}-~\ref{h:sketch-distrib} are respected, with~\ref{h:rip-visibility} set with sparsity level $\splev >0$ and distortion $\delta = 1/2$.  Given $\bs f \in \Sigma_\splev$, and the operator $\ILEop$ defined in~(\ref{eq:ileop-def}) from the $M$ SROP measurements and the $|\cl V_0|=Q(Q-1)$ non-zero visibilities with
$$
Q(Q-1) \geq 4 \splev\, {\rm plog}(N,\splev, \delta),
$$
we have, with a failure probability smaller than $2 \exp(- c' M)$ (for some $c'>0$ depending only on the distribution of $\alpha$),  
\begin{equation*}
\label{eq:RIP_concentration-for-ilerop}
\ts \frac{\varpi c_\alpha}{2\sqrt 2}\frac{\sqrt{|\cl V_0}|}{\sqrt{N}}\, \| \bs f \|  \leq \tinv{M} \|\ILEop(\bs f)\|_1 \leq 2 \varpi \frac{\sqrt{|\cl V_0}|}{\sqrt{N}}\, \| \bs f\|.
\end{equation*} 
\end{corollary}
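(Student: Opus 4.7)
The proof is a direct composition of two previously established ingredients: the $\ell_1$-concentration of $\bs{\cl A}^{\rm c}$ from Lemma~\ref{lemma:concentration_ROPs_for_Ac} and the RIP of the partial Fourier matrix $\bs \Phi = \sqrt N \bs R_{\overline{\cl V}_0} \bs F$ guaranteed by Assumption~\ref{h:rip-visibility}. By definition of $\ILEop$, $\tfrac{1}{M}\|\ILEop(\bs f)\|_1 = \varpi \tfrac{1}{M}\|\ropA^{\rm c}(\cl T(\bs F \bs f))\|_1$, so the first step is to relate $\|\bc J_{\rm h}\|_F$, with $\bc J := \cl T(\bs F \bs f)$, to $\|\bs f\|$.

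For this relation, I would exploit the structure of $\cl T$: its diagonal equals $(\bs F \bs f)_0\,\Id_Q$ (coming from $\bar l(j,j)=0$), and for $j\ne k$ the entries are $(\bs F \bs f)_{\bar l(j,k)}$. Because Assumption~\ref{h:distinct-visib} ensures the map $(j,k) \mapsto \bar l(j,k)$ is injective on the off-diagonal pairs with image $\overline{\cl V}_0$, one obtains the clean identity $\|\bc J_{\rm h}\|_F^2 = \|\bs R_{\overline{\cl V}_0}\bs F \bs f\|^2$. The RIP$_{\ell_2/\ell_2}(\Sigma_\splev,1/2)$ of $\bs \Phi$ then sandwiches this quantity:
\begin{equation*}
\ts \tfrac{|\cl V_0|}{2N}\|\bs f\|^2 \;\leq\; \|\bc J_{\rm h}\|_F^2 \;\leq\; \tfrac{3|\cl V_0|}{2N}\|\bs f\|^2,
\end{equation*}
which holds deterministically under Assumption~\ref{h:rip-visibility} provided $Q(Q-1) \geq 4\splev\,{\rm plog}(N,\splev,\delta)$.

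Next I would invoke Lemma~\ref{lemma:concentration_ROPs_for_Ac} on $\bc J$ with $t := c_\alpha/(6\kappa^2)$, so that $3t\kappa^2 = c_\alpha/2$. The inequality~\eqref{eq:RIP_concentration_Ac} then collapses to $(c_\alpha/2)\|\bc J_{\rm h}\|_F \leq \tfrac{1}{M}\|\ropA^{\rm c}(\bc J)\|_1 \leq (1+c_\alpha/2)\|\bc J_{\rm h}\|_F$, and the failure probability becomes $2\exp(-c'M)$ with $c' = c\, c_\alpha^2/(36\kappa^4)$, a constant depending only on the law of $\alpha$ (as $\kappa$ is fixed). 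Multiplying by $\varpi$ and combining with the deterministic sandwich above yields the lower bound $\tfrac{\varpi c_\alpha}{2\sqrt 2}\sqrt{|\cl V_0|/N}\,\|\bs f\|$ and, since $c_\alpha<1$ gives $(1+c_\alpha/2)\sqrt{3/2}<2$, the upper bound $2\varpi\sqrt{|\cl V_0|/N}\,\|\bs f\|$, as required.

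There is no real obstacle here: the argument is a two-line composition once the Frobenius identity $\|\bc J_{\rm h}\|_F = \|\bs R_{\overline{\cl V}_0}\bs F \bs f\|$ is noticed, and the only mild care needed is in choosing $t$ so that both concentration constants align with those stated in the corollary. The corollary is written for a fixed $\bs f$; the passage to a uniform statement over $\Sigma_\splev$ (used in Prop.~\ref{prop:rip-ileop}) would be performed later by a standard $\epsilon$-net plus union bound argument over the $\binom{N}{\splev}$ possible supports, which is precisely where the $\log(12eN/\splev)$ factor in~\eqref{eq:QM_vs_K} would appear.
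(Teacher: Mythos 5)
Your proposal is correct and follows essentially the same route as the paper's proof: the identity $\|\bc J_{\rm h}\|_F = \|\bs R_{\overline{\cl V}_0}\bs F\bs f\|$, the deterministic RIP$_{\ell_2/\ell_2}$ sandwich from Assumption~\ref{h:rip-visibility}, and Lemma~\ref{lemma:concentration_ROPs_for_Ac} applied with the same choice $t=c_\alpha/(6\kappa^2)$, yielding identical constants. Your remarks that the Frobenius identity for the hollow part holds irrespective of the zero-mean condition and that the covering/union-bound step is deferred to Prop.~\ref{prop:rip-ileop} are both accurate.
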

\begin{proof} 
  Given $\bs f \in \Sigma_\splev$ and $\bc J = \cl T (\bs F \bs f) \in \cl H^Q$, let us assume that~\eqref{eq:RIP_concentration_Ac} holds on this matrix with $t=c_\alpha/(6\kappa^2) < 1/6$, an event with probability of failure smaller than $2 \exp(- c' M)$ with $c'>0$ depending only on $c_\alpha$ and $\kappa$, \ie on the distribution of~$\alpha$. We first note that $\|\bc J_{\rm h}\|_F = \|\bs R_{\overline{\cl V}_0} \bs F \bs f\|$ from~\eqref{eq:equiv-frob-l2}. Second,
\begin{equation}
  \ts \tinv{2} \|\bs f\|^2 \leq \tfrac{N}{|\cl V_0|}\|\bs R_{\overline{\cl V}_0} \bs F \bs f\|^2 \leq \frac{3}{2} \|\bs f\|^2.\label{eq:tmp-proof-ilerop-concent}
\end{equation}
since from~\ref{h:rip-visibility} the matrix $\bs \Phi := \sqrt{N} \bs R_{\overline{\cl V}_0} \bs F$ respects the RIP$_{\ell_2/\ell_2}(\Sigma_\splev,\delta=1/2)$ as soon as $|\cl V_0| = Q(Q-1) \geq 4 \splev\, {\rm plog}(N,\splev, \delta)$.
  Therefore, since $\ILEop(\bs f) = \varpi\ropA^{\rm c}(\bc J) = \varpi\ropA^{\rm c}(\bc J_{\rm h})$, combining~(\ref{eq:RIP_concentration_Ac}) and~\eqref{eq:tmp-proof-ilerop-concent} gives
  \begin{multline*}
    \ts \tinv{M} \|\ILEop(\bs f)\|_1 \geq (c_\alpha -3t\kappa^2) \varpi \|\bc J_{\rm h}\|_F\\
\ts = \tinv{2} c_\alpha \varpi \|\bs R_{\overline{\cl V}_0} \bs F\bs f\| \geq \frac{\varpi c_\alpha}{2\sqrt 2}\,\frac{\sqrt{|\cl V_0|}}{\sqrt{N}}\,\|\bs f\|.
  \end{multline*}
  Similarly, using $\sqrt{\frac{3}{2}}(1+3t\kappa^2) < (\frac{3}{2})^{3/2}<2$, we get
  $$
  \ts \tinv{M} \|\ILEop(\bs f)\|_1 \leq \sqrt{\frac{3}{2}}(1+3t\kappa^2)\varpi\,\frac{\sqrt{\cl V_0}}{\sqrt{N}}\,\|\bs f\| < 2\varpi\frac{\sqrt{|\cl V_0}|}{\sqrt{N}}\,\|\bs f\|.
  $$
\end{proof}

We are now ready to prove Prop.~\ref{prop:rip-ileop}. We will follow the standard proof strategy developed in~\cite{Baraniuk2008}. By homogeneity of the RIP$_{\ell_2/\ell_1}$ in~\eqref {eq:RIP-L2L1-def}, we restrict the proof to unit vectors $\bs f$ of $\Sigma_\splev$, \ie $\bs f \in \Sigma_\splev^* := \Sigma_\splev \cap \bb S^{N-1}_2$. 

Given a radius $0<\lambda < 1$, let $\cl G_\lambda \subset \Sigma^*_\splev$ be a $\lambda$ covering of $\Sigma^*_\splev$, \ie for all $\bs f \in \Sigma^*_\splev$, there exists a $\bs f' \in \cl G_\lambda$, with $\supp \bs f' = \supp \bs f$, such that $\|\bs f - \bs f'\| \leq \lambda$. Such a covering exists and its cardinality is smaller than ${N \choose \splev}(1 + \frac{2}{\lambda})^{\splev} \leq (\frac{3eN}{\splev\lambda})^\splev$~\cite{Baraniuk2008}.

Invoking Cor.~\ref{cor:concentration_B}, we can apply the union bound to all points of the covering so that
\begin{equation}
\label{eq:UB-concent-ileop}
  \ts \ \forall \bs f' \in \cl G_\lambda,\   \ts \frac{\varpi c_\alpha}{2\sqrt 2}\frac{\sqrt{|\cl V_0|}}{\sqrt N}  \leq \tinv{M} \|\ILEop(\bs f')\|_1 \leq 2 \varpi \frac{\sqrt{|\cl V_0|}}{\sqrt N},
\end{equation}
holds with failure probability smaller than
$$
\ts 2 (\frac{3eN}{\splev\lambda})^\splev \exp(- c' M ) \leq 2 \exp(\splev \ln(\frac{3eN}{\splev\lambda}) - c' M). 
$$
Therefore, there exists a constant $C > 0$ such that, if $M \geq C \splev \ln(\frac{3eN}{\splev\lambda})$, then~\eqref{eq:UB-concent-ileop} holds with probability exceeding $1 - 2 \exp(- c'' M)$, for some $c'' > 0$.

Let us assume that this event holds. Then, for any $\bs f \in \Sigma_\splev$, 
\begin{align*}
  \ts \tinv{M} \|\ILEop(\bs f)\|_1&\ts \leq \tinv{M} \|\ILEop(\bs f')\|_1 + \tinv{M} \|\ILEop(\bs f - \bs f')\|_1\\
                                  &\ts \leq 2 \varpi \frac{\sqrt{|\cl V_0|}}{\sqrt N} + \tinv{M} \|\ILEop(\frac{\bs f - \bs f'}{\|\bs f - \bs f'\|})\|_1 \|\bs f - \bs f'\|\\
                                  &\ts \leq 2 \varpi \frac{\sqrt{|\cl V_0|}}{\sqrt N} + \tinv{M} \|\ILEop(\bs r)\|_1 \lambda,
\end{align*}
with the unit vector $\bs r := \frac{\bs f - \bs f'}{\|\bs f - \bs f'\|}$.  However, this vector $\bs r$ is itself $\splev$-sparse since $\bs f$ and $\bs f'$ share the same support. Therefore, applying recursively the same argument on the last term above, and using the fact that $\|\ILEop(\bs w)\|_1$ is bounded for any unit vector $\bs w$, we get
$\tinv{M} \|\ILEop(\bs r)\|_1\lambda \leq 2 \varpi \frac{\sqrt{|\cl V_0|}}{\sqrt N} \sum_{j\geq 1} \lambda^j = 2\frac{\lambda}{1-\lambda} \varpi \frac{\sqrt{|\cl V_0|}}{\sqrt N}$.

Consequently, since we also have 
\begin{align*}
  \ts \tinv{M} \|\ILEop(\bs f)\|_1&\ts \geq \tinv{M} \|\ILEop(\bs f')\|_1 - \tinv{M} \|\ILEop(\bs f - \bs f')\|_1\\
                                  &\ts \geq \frac{\varpi c_\alpha}{2\sqrt 2} \frac{\sqrt{|\cl V_0|}}{\sqrt N} - \tinv{M} \|\ILEop(\bs r)\|_1 \lambda,
\end{align*}
we conclude that 
$$
\ts \frac{\varpi c_\alpha}{2\sqrt 2} (\frac{1-2\lambda}{1-\lambda})\frac{\sqrt{|\cl V_0|}}{\sqrt N} \leq \tinv{M} \|\ILEop(\bs f)\|_1 \leq 2 \varpi \frac{1}{1-\lambda}\frac{\sqrt{|\cl V_0|}}{\sqrt N},
$$
Picking $\lambda = 1/4$ finally shows that, under the conditions described above, $\ILEop$ respects the RIP$_{\ell_2/\ell_1}(\Sigma_\splev,m_\splev,M_\splev)$ with $m_\splev > \frac{\varpi c_\alpha}{3\sqrt 2}\frac{\sqrt{|\cl V_0|}}{\sqrt N}$, and  $M_\splev < \frac{8\varpi}{3}\frac{\sqrt{|\cl V_0|}}{\sqrt N}$.
\end{appendices}

\begin{IEEEbiography}[{\includegraphics[width=1in,height=1.25in,clip,keepaspectratio]{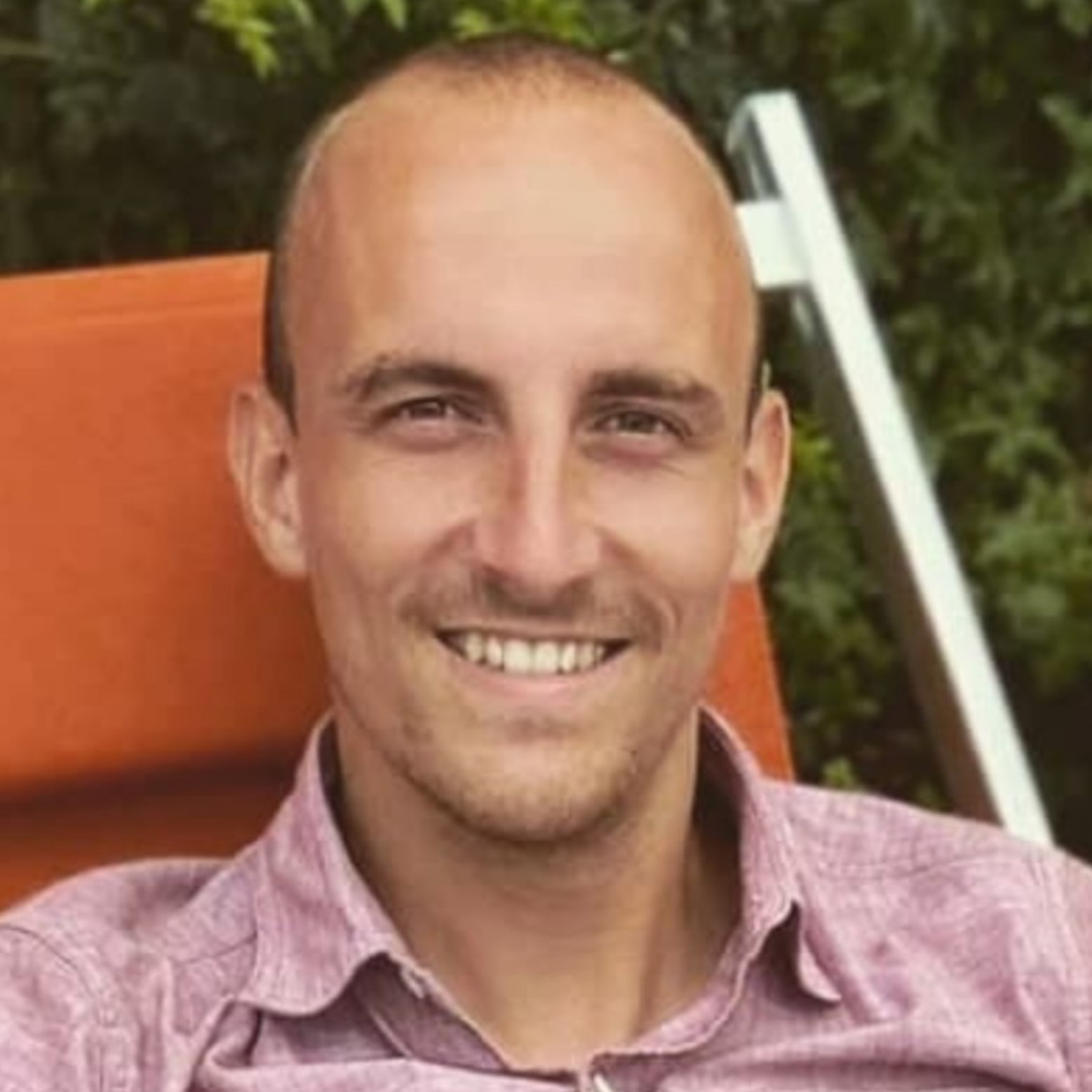}}]{Olivier Leblanc} received the B.Sc. and M.Sc. degrees in electrical engineering from the Mathematical Engineering (INMA) department, ICTEAM/UCLouvain, Louvain-la-Neuve, Belgium, in 2018 and 2020, respectively. His research interests include interferometric imaging and compressive sensing. He is a research fellow funded by the ``Fonds de la Recherche Scientifique'', under the supervision of Laurent Jacques.
\end{IEEEbiography}

\begin{IEEEbiography}[{\includegraphics[width=1in,height=1.25in,clip,keepaspectratio]{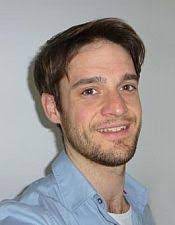}}]{Matthias Hofer} received his M.Sc. degree in electrical engineering and information technology from the Karlsruhe Institute of Technology (KIT), Germany, in 2016, and his PhD degree in physics from Aix-Marseille Université, France, in 2019. He has authored and co-authored peer-reviewed publications in the domains of nonlinear microscopy, complex media and endoscopy.
\end{IEEEbiography}

\begin{IEEEbiography}[{\includegraphics[width=1in,height=1.25in,clip,keepaspectratio]{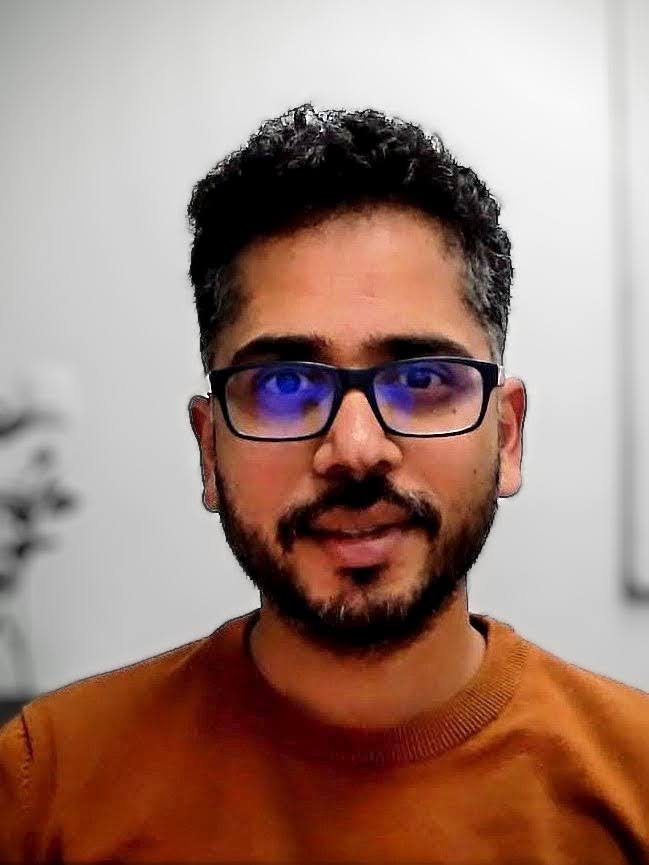}}]{Siddharth Sivankutty} received his dual MSc degree in photonics and applied physics from Friedrich-Schiller Universität, Germany, and the Institut d'Optique and Ecole Polytechnique, France, in 2010, and his PhD in physics from the Université Paris Sud, France, in 2014. Since 2022, he has been a CNRS researcher at the PhLAM lab, Université de Lille, developing nonlinear optical systems for information processing. He is a co-inventor on two patents and has authored more than 10 peer-reviewed publications on lensless endoscopes.
\end{IEEEbiography}

\begin{IEEEbiography}[{\includegraphics[width=1in,height=1.25in,clip,keepaspectratio]{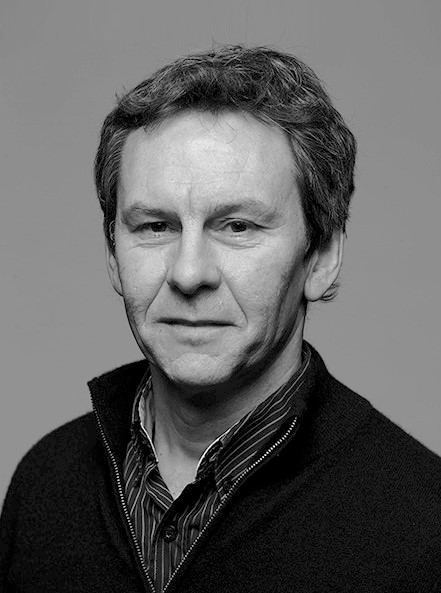}}]{Hervé Rigneault} received his engineering degree from the Ecole Nationale Supérieure de Physique de Marseille (1991) and his MSc and PhD degrees in optical engineering from Aix-Marseille Université (1994). He joined the Centre National de la Recherche Scientifique (CNRS) in 1994 at the Institut Fresnel, where he is currently conducting his research. He is currently a CNRS research director at the Institut Fresnel that develops innovative optical tools for life sciences. He has authored more than 230 peer-reviewed journal papers and is a coinventor of 17 patents in the field of optical engineering and molecular spectroscopy. He has been developing the lensless endoscope at the Institut Fresnel since early 2011.
\end{IEEEbiography}

\begin{IEEEbiography}[{\includegraphics[width=1in,height=1.25in,clip,keepaspectratio]{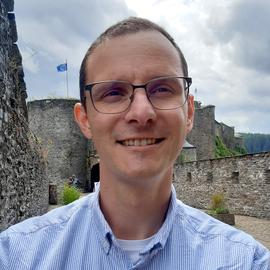}}]{Laurent Jacques} received the B.Sc., M.Sc., and Ph.D. degrees in mathematical physics in 1996, 1998, and 2004, respectively. He has been a FNRS Research Associate from 2012 till 2022 with Image and Signal Processing Group, ICTEAM, UCLouvain, Louvain-La-Neuve, Belgium, and is now professor at the same place. His research interests include sparse signal representations, quantized compressive sensing theory, and computational imaging (\url{https://laurentjacques.gitlab.io} for more information).
\end{IEEEbiography}

\end{document}